\newtheorem{theorem}{Theorem}[section]
\newtheorem{lemma}[theorem]{Lemma}
\newtheorem{proposition}[theorem]{Proposition}
\newtheorem{corollary}[theorem]{Corollary}
\newtheorem{definition}[theorem]{Definition}
\newtheorem{example}[theorem]{Example} 
\newcommand{\ben}{\begin{enumerate}}
	\newcommand{\een}{\end{enumerate}}
\newcommand{\bt}{\begin{theorem}}
	\newcommand{\et}{\end{theorem}}
\newcommand{\bl}{\begin{lemma}}
	\newcommand{\el}{\end{lemma}}
\newcommand{\bc}{\begin{corollary}}
	\newcommand{\ec}{\end{corollary}}
\newcommand{\bp}{\begin{proposition}}
	\newcommand{\ep}{\end{proposition}}
\newcommand{\br}{\begin{remark}}
	\newcommand{\er}{\end{remark}}
	\newcommand{\be}{\begin{example}}
	\newcommand{\ee}{\end{example}}
\newcommand{\iid}{\stackrel{\mbox{\scriptsize iid}}{\sim}}
\renewcommand{\mid}{\,|\,}
\newcommand{\E}{{\mathrm E}}
\newcommand{\MBtext}[1]{{{#1}}}
\newtheorem{remark}{Remark}
\providecommand{\keywords}[1]
{
	\small	
	{\noindent\textit{Keywords: }} #1
}
\def\spacingset#1{\renewcommand{\baselinestretch}%
{#1}\small\normalsize} \spacingset{1}
\begin{document}

\title{Sufficient digits and density estimation: A Bayesian nonparametric approach using generalized finite P{\'o}lya trees}%

\author{M. Beraha\footnote{Department of Economics, Management and Statistics, University of Milano-Bicocca, Milan, Italy. \\E-mail: \url{mario.beraha@unimib.it}} \ and J. M\o ller\footnote{Department of Mathematical Sciences, Aalborg University, Aalborg Ø, 9220, Denmark. \\E-mail: \url{jm@math.aau.dk}}}
\date{}

\maketitle

\begin{abstract}
This paper proposes a novel approach for statistical modelling of a continuous random variable $X$ on $[0, 1)$, based on its digit representation $X=.X_1X_2\ldots$.
In general, $X$ can be coupled with a latent random variable $N$ so that $(X_1,\ldots,X_N)$ becomes a sufficient statistics and $.X_{N+1}X_{N+2}\ldots$ is uniformly distributed. 
In line with this fact, and focusing on binary digits for simplicity,
we propose a family of generalized finite P{\'o}lya trees that induces a random density for a sample, which becomes a flexible tool for density estimation. 
Here, the
digit system may be random and learned from the data. We provide a detailed Bayesian analysis, including closed form expression for the posterior distribution.
We analyse the frequentist properties as the sample size increases, and provide sufficient conditions for consistency of the posterior distributions of the random density and $N$. We consider an extension to data spanning multiple orders of magnitude, and propose a prior distribution that encodes the so-called extended Newcomb-Benford law. Such a model shows promising results for density estimation of human-activity data. Our methodology is illustrated on several synthetic and real datasets.
\end{abstract}

\keywords{binary and decimal numeral systems, coupling, density estimation, extended Newcomb-Benford law, frequentist properties of Bayesian estimators, random nested partitions, round-off error.}

\spacingset{1.4}

\section{Introduction}\label{s:intro}

Consider a continuous random variable $X$ defined on the unit interval and let $X_1,X_2,\ldots$ be its binary digits such that $X = \sum_{n=1}^\infty X_n 2^{-n}$. 
In practice, due to a round-off error or the finite arithmetic precision of computers, the data consist only of a finite number of digits. 
In fact, under very weak conditions, there is a coupling of $X$ with a non-negative integer-valued latent random variable $N$ such that $X_1,\ldots,X_N$ become the sufficient digits, i.e., 
they are independent of the remaining digits $X_{N+1},X_{N+2},\ldots$ which do not depend on the model for $X$, and thus contain no information and are not relevant for statistical inference, cf.\ \cite{moeller}.

\subsection{Overview on models and results}\label{s:overview}

In this paper, we build on this coupling result from \cite{moeller} and propose a novel approach to statistical modelling for a random variable based on its digit representation.
As detailed in Theorem~\ref{t:1} and Remark~\ref{r:suff}, the coupling result can be presented in a very general setting with random digits defined in terms of nested partitions of the unit interval, where the intervals of each partition are not necessarily of equal length as explained in Definition~\ref {def:1} and Remark \ref{r:1} below.
Thinking of $N$ as a latent variable which determines the needed number of digits, and working with binary digits for simplicity, conditioned on $N=n$, we propose models where for $j=1,\ldots,n$, we let $X_{j} \mid X_1,\ldots,X_{j-1}$ be Bernoulli-distributed with a beta-distributed probability parameter depending on $X_1,\ldots,X_{j-1}$, and where $X_{n+1}, X_{n+2}, \ldots$ are independent of the first $n$ digits and contain no information. 
Such a construction, which arises naturally from our random-digit modelling approach, can be seen as a
generalizations of P{\'o}lya trees (PT) \citep{lavine,lavine2,MauldinEtAl} with a finite number of partition levels, and are therefore termed \emph{generalized finite P{\'o}lya trees of type 1} (for short GFPT1; we will introduce the type 2 models below).
PT priors play an important role in non-parametric Bayesian inference \citep[][]{AwayaMa, BergerGuglielmi, Castillo17,giordano, Hanson, Holmes2015, BNPDA, Paddock, WongMa}.
A GFPT1 prior specifies the distribution of a random probability density function (PDF) used for the data model of
{exchangeable} observations. This random PDF is a mixture over $N$ of functions on the intervals of the binary partitions. In particular, it is easy to see that GFPT1 priors correspond to the finite PT priors of \cite{Hanson} with an additional prior for the partition depth.

{\color{black}The novelty of our contribution does not lie in a generalization of finite P\'olya trees, but in the \emph{random-digit framework} that considers digits as the primary modelling object. 
By placing a prior on the number of sufficient digits $N$ we turn it from a fixed tuning parameter (as in standard finite PTs) into an identifiable latent quantity amenable to inference and uncertainty quantification. 
By Theorem~\ref{t:1}, the trailing digits beyond $N$ are ancillary and, therefore, genuinely non-informative for inference; modelling only the informative prefix yields likelihoods and priors that correctly account for round-off and finite precision. 
Most importantly, working at the digit level exposes a direct link to significant-digit phenomena such as the Newcomb–Benford law \citep{Berger2,Clauset,Hill}, which routinely arises in human data. 
In brief, the Newcomb-Benford law asserts that small digits are more likely to occur as leading digit in human data, and we show in Section~\ref{s:elicitation} how to incorporate this as prior information by extending certain GFPT1s to what we call \emph{multiscale Benford P{\'o}lya trees} (MBPT).
We find that the MBPT priors provide superior performance for density estimation in problems where data span multiple orders of magnitude.
Such a setting is notoriously difficult in Bayesian density estimation due to heavy tails and different scales in the densities \citep{Tokdar}.}

{\color{black}Focusing on GFPT1 priors,} we provide a closed-form expression for the posterior distribution which allows for straightforward inference without resorting to Markov chain Monte Carlo algorithms.
In fact, our models can be fitted to tens of millions of datapoints in a matter of a fraction of a second on a standard laptop.
We investigate the frequentist properties of the Bayesian estimators obtained from the posterior distribution when assuming each data point
was generated from a true PDF $f^*$:
First, we consider the posterior distribution of $N$. Assuming that $n^*$ digits are sufficient for $f^*$ (in the sense of Theorem~\ref{t:1}), we establish conditions under which the posterior of $N$ concentrates on $n^*$, assuming that we have observed either the data perfectly (i.e., without round-off errors), or that we have access only to the first $\bar n$ digits, which is arguably the most common situation in applied settings. 
In the latter case, we identify a curious situation where ignoring the round-off errors leads to $N$ diverging to infinity a posteriori. Instead, assuming that the round-off error is uniform leads to the correct asymptotic behaviour.
Second, we consider the convergence of the {\color{black}the posterior distribution of the} random PDF, establishing consistency with respect to both the Kolmogorov-Smirnov and the Hellinger distance.

{\color{black}Beyond the MBPT prior discussed above, we propose a furhter generalization of GFPT1 priors}, termed \emph{generalized finite P{\'o}lya trees of type 2} (GFPT2), where we let the nested binary partitions be random such that the 
lengths of the intervals are obtained from sequences of beta-distributed random variables as detailed later in Definition~\ref{def:6}.
This additional stochastic layer is essential:  Under a fixed deterministic partition one may require an unbounded number of digits ($n^{\ast}=\infty$) even when the true PDF $f^{\ast}$ is merely piecewise constant, simply because its discontinuities do not coincide with the pre-specified nested partitions. 
By randomising the partitions, a GFPT2 prior circumvents this limitation and restores the finite digit sufficiency property for a broad class of densities.



We mainly restrict attention to the case $0<X<1$, since any real number can be transformed to a number between 0 and 1 by a bijective mapping (e.g.\ $x\mapsto\exp(x)/(1+\exp(x))$) 
and since data consisting of positive numbers can be modelled by using MBPT priors.
To cover data consisting of real numbers, MBPT priors may of course be extended by including a prior for the sign.

\subsection{General numeral systems}\label{s:gen-num}

It is only for specificity and simplicity that we consider a binary numeral system. Indeed, everything in this paper easily extends to the case of base-$q$ representations of numbers when $q\ge 2$ is an integer  (further details are given in
 Section~\ref{s:elicitation}). This includes the common cases of a binary ($q=2$) and a decimal numeral system ($q=10$) with equal interval lengths in each partition, and
for applications it seems not important how $q$ is specified: Suppose $N_{q}$ is the discrete random variable corresponding to $N_2=N$ above but obtained by using base-$q$ digits instead, whereby $q^{N_q}$ becomes the number of possible values of the sufficient digits $X_1,\ldots,X_{N_q}$ when the value of $N_q$ is fixed.  Considering simulation studies and statistical analysis of real data examples, applying our prior models such that $2^{N_2}$ and $10^{N_{10}}$ are closely distributed, we observe  in Section~\ref{sec:comparison_real_data} that rather similar conclusions are obtained  no matter if $q=2$ or $q=10$.

\subsection{Outline}

The paper is organized as follows. Section~\ref{s:notation} specifies the notation used throughout this text. 
Section~\ref{s:new} explains the meaning of sufficient digits (as briefly discussed in Section~\ref{s:overview}). 
Models for data on the unit interval are studied in Section~\ref{s:models}:
Section~\ref{s:finite_pt} recalls the definition of P{\'o}lya tree priors and their finite versions, 
Section~\ref{s:generelized_fpt} introduces our generalized finite P{\'o}lya tree priors GFPT1 and GFPT2 (cf.\ Section~\ref{s:overview}),  Section~\ref{s:elicitation} discusses the choice of hyperparameters for such priors and the common P{\'o}lya tree priors, {\color{black}and Section~\ref{sec:mbpt} studies our generalization of GFPT1 priors via the extended Newcomb-Benford law to MBPT priors, cf.\ Section~\ref{s:overview}.} Various theoretical results for Bayesian analysis based on GFPT1 and GFPT2 priors are established in Section~\ref{s:Bayes}: Section~\ref{s:posterior} shows how to easily simulate from the posterior distribution. Mainly restricting attention to GFPT1 priors and shortly commenting on GFPT2 {\color{black}and MBPT priors,} Section~\ref{sec:cons_n} deals with consistency results for the posterior distribution of $N$ and Section~\ref{s:consistency_f} with consistency results for the posterior {\color{black}distribution of the random PDF} (as briefly discussed in Section~\ref{s:overview}).   
Section~\ref{s:num} considers numerical simulation results:
Section~\ref{sec:s1} shows how the choice of hyperparameters affects posterior inference when using GFPT1 priors and suggests reasonable default values.
Section~\ref{sec:simu2} compares the {\color{black}density estimate given by the (estimated) posterior expected random PDF} when using PT, GFPT1, and GFPT2 priors as well as the Optional P\'olya tree \citep{WongMa} and the APT model of \cite{Ma17}, assessing whether the increased flexibility of GFPT2 priors yields practical gains over competitors. Section~\ref{sec:comparison_real_data} compares results for Bayesian density estimation when using GFPT1 and GFPT2 priors, base-2 and base-10 MBPT priors,
and the standard prior given by a Dirichlet process mixture of Gaussian densities \citep[see e.g.][]{gvdv}. Finally, Section~\ref{s:conclusion} summarizes our findings and points to future research.


The Supplemental material collects the missing proofs in the present paper {\color{black} and, in continuation of our remarks in the last paragraph of Section~\ref{s:overview}, discusses the case of data that lies outside the unit interval, see also the summary at the beginning of Section~\ref{s:num}.}

{Supporting \texttt{Julia} code implementing the proposed methodology and the numerical simulations is available at \url{https://github.com/mberaha/FinitePolyaTrees.git}}

\section{Notation}\label{s:notation}


Let $\mathbb N=\{1,2,\ldots\}$, $\mathbb N_0=\{0,1,2,\ldots\}$, $\mathbb R=(-\infty,\infty)$, and $\mathbb R_+ = (0, +\infty)$ be the sets of positive integers, non-negative
integers, real numbers, and positive real numbers, respectively. 

For $n\in\mathbb N$ and $(x_1,\ldots,x_n)\in\mathbb R^n$, let $x_{1:n}=(x_1,\ldots,x_n)$, and for $k\in\{0,1\}$, identify $x_{1:n-1},k$ by $(x_1,\ldots,x_{n-1},k)$ if $n>1$ and by $k$ if $n=1$. Set $x_{1:0}=\emptyset$, $\{0,1\}^0=\{\emptyset\}$, and $x_{\emptyset,k}=x_k$ for $k=0,1$. We may interpret $x_{1:n}$ as an ordered configuration of $n$ points if $n>0$ and
 as the empty point configuration if $n=0$ \citep[this is similar to the terminology used in point process theory, see][]{JM}.  

The lexicographic order for finite binary sequences of possibly different lengths, denoted by $\prec$, is defined as follows. For $m,n\in\mathbb N$, $x_{1:n} \in \{0, 1\}^n$, and $z_{1:m} \in \{0, 1\}^m$, write $x_{1:n} \prec z_{1:m}$ if either  $m \ge n$ and $z_{1:n} = x_{1:n}$, or there exists a non-negative integer $k < \min\{m, n\}$  such that $x_{1:k} = z_{1:k}$, $x_{k+1} = 0$, and $z_{k+1} = 1$. 

Denote $\{0,1\}^\infty$ the set of all infinite sequences $(x_1,x_2,\ldots)$ with $x_n\in\{0,1\}, n=1,2,\ldots$. 
Let $\mathcal I$ be the collection of all half-open intervals $[a,b)$ with $0\le a<b<1$ (where $a$ is included and $b$ is excluded). For a set $A$ with either $A=\mathcal I$ or $\emptyset\not=A\subseteq\mathbb R$, for $(x_1,x_2,\ldots)\in\{0,1\}^\infty$, and for intervals/numbers 
$z_{x_{1:j}}\in A$ with $j\in\mathbb N$, consider 
 $z=(z_{x_{1:j}}\mid x_{1:j}\in\{0,1\}^j,\,j\in\mathbb N)$ as an infinite sequence with entries 
$z_{x_{1:j}}$ which appear in accordance to the ordering of the $x_{1:j}$ with respect to $\prec$. 
 For $n\in\mathbb N$, restricting $z$ to its first $2^n$ entries, we obtain the ordered finite sequence $z^{(n)}=(z_{x_{1:j}}\mid x_{1:j}\in\{0,1\}^j,\,j=1,\ldots,n)$. For notional convenience, let $z^{(n)}=\emptyset$ denote the empty set (sequence).
 For $n\in\mathbb N_0$, let 
$z^{(>n)}=(z_{x_{1:j}}\mid x_{1:j}\in\{0,1\}^j,j=n+1,n+2,\ldots)$.
Thus, $z^{(>0)}=z$.
  
Denote the beta-distribution on $[0,1)$ with shape parameters $\alpha_0>0$ and $\alpha_1>0$ by {\color{black}$\mathrm{Beta}(\alpha_0,\alpha_1)$, and its PDF by $\mathrm{Beta}(\cdot\mid\alpha_0,\alpha_1)$.} For later use, extend this PDF to $[0,1]$ by setting $\mathrm{Beta}(1\mid\alpha_0,\alpha_1)=0$ (or any other arbitrary value). For $k\in\{0,1\}$, $\alpha_k=0$, and $\alpha_{1-k}>0$, let $\mathrm{Beta}(\alpha_0,\alpha_1)$ be the degenerated probability distribution on $[0,1]$ which is concentrated at the point $k$ (so the value of $\alpha_{1-k}$ plays no role), and for $x\in[0,1]\setminus\{k\}$, let $\mathrm{Beta}(k\mid\alpha_0,\alpha_1)=1$ and $\mathrm{Beta}(x\mid\alpha_0,\alpha_1)=0$.

{\color{black}Denote the beta function by $\mathrm{beta}(x,y) = \int_0^1 t^{x - 1} (1 - t)^{y-1}\, \mathrm d t$ for $x>0$ and $y>0$. Let $\mathrm{beta}(0,x)=\mathrm{beta}(x,0)=1$ for $x>0$.} 

{\color{black} Denote $\mathrm{Unif}[a,b)$ the uniform distribution on $[a,b)$ when $a<b$.}

Finally, set $0^0=1$.

\section{The sufficient digits}\label{s:new}

The material in this section is used to motivate our new models introduced in Section~\ref{s:models}. For the following definition, let 
$a_{\emptyset}=0$ and length $\ell_\emptyset=1$.  

\begin{definition}\label{def:1} Suppose $I=(I_{x_{1:n}}\mid x_{1:n}\in\{0,1\}^n,\,n\in\mathbb N)$ where each 
 $I_{x_{1:n}}=[a_{x_{1:n}},a_{x_{1:n}}+\ell_{x_{1:n}})$ has left endpoint $a_{x_{1:n}}\in[0,1)$ and length $\ell_{x_{1:n}}\in(0,1)$ such that for every
 $n \in \mathbb N_0$ and every $x_{1:n}\in\{0,1\}^{n}$, 
\begin{equation}\label{e:def1}
 a_{x_{1:n}}=a_{x_{1:n},0},\quad a_{x_{1:n},1}=a_{x_{1:n},0}+\ell_{x_{1:n},0},\quad \ell_{x_{1:n}}=\ell_{x_{1:n},0}+\ell_{x_{1:n},1}.
\end{equation}
We call $I$ a NBP of $[0,1)$ (this abbreviation is explained in Remark~\ref{r:first} below). Finally, define the digits of every $x\in [0,1)$ as the unique sequence $(x_1,x_2,\ldots)\in\{0,1\}^\infty$ such that
$x\in I_{x_{1:n}}$ for $n=1,2,\ldots$.
\end{definition}

For instance, for the usual binary numeral system, $a_{x_{1:n}}=\sum_{i=1}^n x_i2^{-i}$ and $\ell_{x_{1:n}}=2^{-n}$. Then we refer to $I$ as the standard diadic partitions. 

\begin{remark}\label{r:first} Let the situation be as in Definition~\ref{def:1}. For all $x_{1:n} \in \{0, 1\}^n$ with $n\in\mathbb N_0$, the intervals $I_{x_{1:n},0}$ and $I_{x_{1:n},1}$ are disjoint, $a_{x_{1:n},1}$ is the right endpoint of $I_{x_{1:n},0}$ and the left endpoint of $I_{x_{1:n},1}$, and $I_{x_{1:n}}=I_{x_{1:n},0}\cup I_{x_{1:n},1}$. We call $I_{x_{1:n},0}$ the left interval of this split. For every $n\in\mathbb N$, 
\begin{equation}\label{e:1}
[0,1)=\bigcup_{x_{1:n}\in\{0,1\}^n}I_{x_{1:n}}, 
\end{equation}
where the $2^n$ sets on the right hand side are  pairwise disjoint sets, and so by \eqref{e:def1}, the
infinite sequence
 $I^{(1)},I^{(2)},\ldots$ constitutes a collection of  \underline{n}ested \underline{b}inary \underline{p}artitions of $[0,1)$ (explaining the abbreviation NBP).
By  \eqref{e:def1},
the left endpoints of the intervals in $I$ are ordered in accordance to $\prec$, i.e., for any  $x_{1:n} \in \{0, 1\}^n$ and  $z_{1:m} \in \{0, 1\}^m$ with $n, m\in\mathbb N$ and $x_{1:n} \prec z_{1:m}$, we have $a_{x_{1:n}}<a_{z_{1:m}}$.
By  \eqref{e:def1}, for every $n\in\mathbb N$, $I^{(n)}$ is determined by $(\ell_{x_{1:j-1},0}\mid x_{1:j}\in \{0,1\}^j,j=1,\ldots,n)$ which is the ordered sequence of the 
lengths of the left intervals up to level $n$.
 \end{remark}
 
\begin{remark}\label{r:1} 
Henceforth, assume for every $x\in [0,1)$ with digits $(x_1,x_2,\ldots)$ that $\ell_{x_{1:n}}\to0$ as $n\to\infty$. Hence, $x$ is determined  by its digits and there is a one-to-one correspondence between $[0,1)$ and $\{0,1\}^\infty$. We express this by writing $x=.x_1x_2\ldots$. 

  Definition~\ref{def:1} may be extended by  allowing an interval $I_{x_{1:n}}$ to be empty. Such an extension is relevant for certain number systems, e.g.\ in relation to so-called pseudo $\beta$-expansions, cf.\ \cite{HerbstEtAl2} and the references therein. It is  rather straightforward to modify the ideas and results in this paper to such situations.
\end{remark} 

\begin{remark} The remainder of this paper considers the following setting. Let $X$ be a random variable with a PDF $f$ concentrated on $H\subseteq [0,1)$ where $H$ has Lebesgue measure 1, and let $X_1,X_2,\ldots$ be the random digits of $X=.X_1X_2\ldots$. 
Assume that $f$ is lower semi-continuous (LSC) on $H$. Indeed, this is a very mild condition, cf.\ \cite{HerbstEtA23}. 
\end{remark} 

We need some notation for the following theorem.
Define $c_\emptyset=\inf_{H} f$.
For $n\in\mathbb N$ and $x_{1:n}\in\{0,1\}^n$, let $i_{{1:n}}=\inf_{H\cap I_{x_{1:n}}}f$ and $c_{x_{1:n}}=i_{{1:n}}-i_{{1:n-1}}$, where we let the infimum over the empty set be 0.  
 Denote $2^{\mathbb N_0}$ the set of all subsets of $\mathbb N_0$, and  $\mathcal B$ the set of Borel sets included in $H$. Equip the product space $H\times \mathbb N_0$ with the product $\sigma$-algebra of $\mathcal B$ and $2^{\mathbb N_0}$. Let $\eta$ be the product measure of 
Lebesgue measure on $\mathcal B$ and counting measure on $2^{\mathbb N_0}$. We use the abbreviation PMF for a probability mass function.
 
\begin{theorem}\label{t:1} Under the conditions above, there is a
coupling of $X$ with a random variable $N\in\mathbb N_0$ such that the distribution of $(X,N)$ is absolutely continuous with respect to $\eta$, with a density for any $(x,n)\in H\times \mathbb N_0$ given by
\begin{equation}\label{e:res1}
f(x,n)=c_{x_{1:n}} \quad\mbox{if }x=.x_1x_2\ldots
\end{equation}

Conversely, suppose $Q$ is a probability measure on $H\times \mathbb N_0$ which is absolutely continuous with respect to $\eta$ such that 
its density $q={\mathrm dQ}/{\mathrm d\eta}$ is of the form
\[q(x,n)= g_n(x)\quad\mbox{for all }(x,n)\in H\times\mathbb N_0\]  
where each $g_n$ is a non-negative LSC function on $H$ with finite Lebesgue measure. 
Defining the PDF
\[g(x)=\sum_{n=0}^\infty g_n(x)\quad\mbox{for all }x\in H\]
and letting $N$ be a random variable with PMF
 $p_n=\int_H g_n$ for $n\in\mathbb N_0$, 
then $g$ is LSC on $H$ and  $X\mid N=n$ has PDF $f_n=g_n/p_n$ for $p_n>0$.
In particular, \eqref{e:res1} is the special case where 
each $g_n$ is a non-negative constant function on each interval $I_{x_{1:n}}$ such that $\sum_{n=0}^\infty\int_0^1g_n(x)\,\mathrm dx=1$. 
\end{theorem}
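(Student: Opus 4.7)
The plan is to prove the two directions separately, with the lower semi-continuity (LSC) hypothesis doing the heavy lifting in both.

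For the forward direction, I would show that $(x,n)\mapsto c_{x_{1:n}}$ is a valid $\eta$-density whose $X$-marginal equals the given PDF $f$; taking this as the joint density of $(X,N)$ then produces the required coupling. Non-negativity of $c_{x_{1:n}}$ is immediate: since $I_{x_{1:n+1}}\subset I_{x_{1:n}}$ (Remark~\ref{r:first}), the infima satisfy $i_{1:n+1}\ge i_{1:n}$, and $c_\emptyset=\inf_H f\ge 0$ because $f\ge 0$. The key identity is $\sum_{n=0}^\infty c_{x_{1:n}}=f(x)$ for every $x\in H$, which is a telescoping sum:
\[\sum_{n=0}^N c_{x_{1:n}}=c_\emptyset+\sum_{n=1}^N (i_{1:n}-i_{1:n-1})=i_{1:N}.\]
I would then show $i_{1:N}\to f(x)$ using LSC of $f$ at $x$ together with $\ell_{x_{1:n}}\to 0$ (Remark~\ref{r:1}). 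Indeed, since $x\in I_{x_{1:n}}\cap H$, we have $i_{1:n}\le f(x)$; conversely, given $\varepsilon>0$, LSC provides an open neighborhood $U$ of $x$ with $f>f(x)-\varepsilon$ on $U\cap H$, and $I_{x_{1:n}}\subset U$ once $n$ is large. Tonelli then yields $\int \sum_n c_{x_{1:n}}\,\mathrm d\eta=\int_H f=1$, and summing over $n$ recovers $f$ as the $X$-marginal, so \eqref{e:res1} indeed defines a joint distribution coupling the given $X$ to some $N\in\mathbb N_0$.

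For the converse, the disintegration is standard: $p_n=Q(H\times\{n\})=\int_H g_n$ is the marginal PMF of $N$, $g=\sum_n g_n$ is the marginal density of $X$ by Tonelli, and $f_n=g_n/p_n$ is the conditional density of $X\mid N=n$ whenever $p_n>0$. The one non-routine point is that $g$ is LSC on $H$. For every finite $N_0$, the partial sum $\sum_{n=0}^{N_0}g_n$ is LSC (a finite sum of LSC functions is LSC), and since each $g_n\ge 0$ these partial sums increase pointwise to $g$; as the pointwise supremum of any family of LSC functions is LSC, $g$ is LSC on $H$. The special case \eqref{e:res1} is then recovered by taking $g_n(x)=\sum_{x_{1:n}\in\{0,1\}^n}c_{x_{1:n}}\mathbbm{1}(x\in I_{x_{1:n}})$, which is constant on each level-$n$ cell and integrates to $1$ by the forward direction.

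The principal subtlety lies in the forward direction, specifically in showing $i_{1:n}\to f(x)$ pointwise on all of $H$: this is what guarantees the telescoping identity on every $x\in H$, and hence that \eqref{e:res1} pins down the joint density at every point of $H\times\mathbb N_0$ rather than merely almost everywhere. The convention $\inf\emptyset=0$ is only invoked on cells disjoint from $H$ and so plays no role in this pointwise convergence, since any $x\in H$ lies in $I_{x_{1:n}}\cap H$ for every $n$.
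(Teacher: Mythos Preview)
Your proposal is correct. For the converse direction you take essentially the same route as the paper: both arguments show that finite partial sums $\sum_{n\le k}g_n$ are LSC and then pass to the limit, with you invoking the one-line fact that a pointwise supremum of LSC functions is LSC, while the paper verifies the sublevel-set characterization directly; these are the same argument in different clothing.

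For the forward direction your proof is actually more self-contained than the paper's, which simply defers to \cite{moeller} for the existence of the coupling. Your telescoping argument $\sum_{n\le N}c_{x_{1:n}}=i_{1:N}$ together with $i_{1:N}\uparrow f(x)$ via LSC and $\ell_{x_{1:N}}\to 0$ is exactly the underlying mechanism, and it is useful to have it spelled out. One small point worth making explicit in your write-up: the potential failure of $c_{x_{1:n}}\ge 0$ when $H\cap I_{x_{1:n}}=\emptyset$ (so $i_{1:n}=0$ by convention while $i_{1:n-1}$ may be positive) is irrelevant to the density \eqref{e:res1}, since for $(x,n)\in H\times\mathbb N_0$ one always has $x\in H\cap I_{x_{1:n}}\neq\emptyset$; you allude to this in your last paragraph but it would be cleaner to place it next to the non-negativity claim.
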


The proof of Theorem~\ref{t:1} is found in the Supplemental material.

\begin{remark}\label{r:suff}
Clearly, $(X,N)$ and $(X_{1:N},.X_{N+1}X_{N+2}\ldots)$ are in a one-to-one correspondence.
Thinking of $X$ as data and $N$ as a latent variable, considering  a statistical model of $f$ and imaging we could observe the missing data $N$, 
 \eqref{e:res1} shows that $X_{1:N}$ is a sufficient statistic with distribution 
\[\mathrm P(X_{1:N}=x_{1:n})=\ell_{x_{1:n}}c_{x_{1:n}}\quad\mbox{for every $n\in\mathbb N_0$ and every $x_{1:n}\in\{0,1\}^n$.}\] 
 On the other hand, 
 the remainder $.X_{N+1}X_{N+2}\ldots$ is an ancillary statistic, and it is shown in \cite{moeller} \citep[see also][]{HerbstEtA23,HerbstEtAl} that 
\begin{equation}\label{e:unifremainder}
\mbox{$.X_{N+1}X_{N+2}\ldots\sim\mathrm{Unif}[0,1)$ is independent of $X_{1:N}$.}
\end{equation}

The second part of Theorem~\ref{t:1} specifies a more general structure, where it would be interesting to see how the ideas introduced in the present paper might be extended.
\end{remark}

We interpret $X_{1:N}$ as a finite ordered point process, where $N$ is the random number of points and the state space is $\Omega=\cup_{n=0}^\infty\{0,1\}^n$. Equip $\Omega$ with the $\sigma$-algebra $\mathcal G$ generated by the sets $A\subseteq\{0,1\}^n$ with $n\in\mathbb N_0$. Let $\mathcal P_{\mathrm{FOPP}}$ denote the class of all probability distributions on $(\Omega,\mathcal G)$ and $\mathcal P_{\mathrm{LSC}}$ the class of all absolutely continuous probability distributions on $[0,1)$ with a density which is almost everywhere LSC. Theorem~\ref{t:1} gives immediately the following corollaries.

\begin{corollary}\label{c:t:1}
The coupling construction of $(X,N)$ in Theorem~\ref{t:1} establishes a one-to-one correspondence between $\mathcal P_{\mathrm{FOPP}}$ and $\mathcal P_{\mathrm{LSC}}$. 
\end{corollary}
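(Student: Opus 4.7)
The plan is to establish the correspondence by producing mutually inverse maps $\Phi \colon \mathcal{P}_{\mathrm{LSC}} \to \mathcal{P}_{\mathrm{FOPP}}$ and $\Psi \colon \mathcal{P}_{\mathrm{FOPP}} \to \mathcal{P}_{\mathrm{LSC}}$ from the two halves of Theorem~\ref{t:1}.

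First, I would define $\Phi(f)$ as the distribution on $\Omega$ of the sufficient digits $X_{1:N}$ from the coupling of Theorem~\ref{t:1}, so that, as noted in Remark~\ref{r:suff}, $\Phi(f)(\{x_{1:n}\}) = \ell_{x_{1:n}} c^{f}_{x_{1:n}}$ for every $n \in \mathbb{N}_0$ and $x_{1:n} \in \{0,1\}^n$, where the superscript $f$ records the dependence on the density. Next, I would define $\Psi(P)$ by setting $g_n(x) = \sum_{x_{1:n}} (p(x_{1:n})/\ell_{x_{1:n}}) \indicator_{I_{x_{1:n}}}(x)$ with $p(x_{1:n}) = P(\{x_{1:n}\})$. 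A short check shows that each $g_n$ is locally constant on the set $H = [0,1) \setminus L$, where $L$ is the countable (hence Lebesgue-null) set of left endpoints $a_{x_{1:n}}$; that $\int_H g_n = \sum_{x_{1:n}} p(x_{1:n}) = p_n$; and that $\sum_n \int_H g_n = 1$. Since a countable sum of non-negative LSC functions is LSC, the converse part of Theorem~\ref{t:1} applies and yields $\Psi(P) := g = \sum_n g_n \in \mathcal{P}_{\mathrm{LSC}}$.

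To verify $\Psi \circ \Phi = \mathrm{id}$, I would take any $f \in \mathcal{P}_{\mathrm{LSC}}$, set $P = \Phi(f)$, and compute $\Psi(P)(x) = \sum_{n=0}^\infty c^{f}_{x_{1:n}}$ for $x \in H$ with digits $x_1 x_2 \ldots$. Telescoping the defining identity $c^{f}_{x_{1:n}} = i^{f}_{x_{1:n}} - i^{f}_{x_{1:n-1}}$ collapses this sum to $\lim_n i^{f}_{x_{1:n}}$. Since $\ell_{x_{1:n}} \to 0$, the intervals shrink to $\{x\}$; combining the trivial upper bound $i^{f}_{x_{1:n}} \le f(x)$ with the LSC inequality $\liminf_{y \to x} f(y) \ge f(x)$ forces $\lim_n i^{f}_{x_{1:n}} = f(x)$ at every $x \in H$, hence almost everywhere.

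The more delicate composition is $\Phi \circ \Psi = \mathrm{id}$, which I would handle by computing $i^{g}_{x_{1:n}} = \inf_{y \in H \cap I_{x_{1:n}}} g(y)$ directly from the representation $g(y) = \sum_m p(y_{1:m})/\ell_{y_{1:m}}$: splitting the sum at level $n$ gives $g(y) = \sum_{m=0}^n p(x_{1:m})/\ell_{x_{1:m}} + \sum_{m>n} p(y_{1:m})/\ell_{y_{1:m}}$ for $y \in I_{x_{1:n}}$, so that the tail contribution is the only $y$-dependent piece. Using the nested structure of the NBP to select descendant branches along which successive conditional contributions vanish, one shows that the tail infimum is compatible across consecutive levels so that the telescoping $c^{g}_{x_{1:n}} = i^{g}_{x_{1:n}} - i^{g}_{x_{1:n-1}}$ collapses to $p(x_{1:n})/\ell_{x_{1:n}}$, giving $\Phi(\Psi(P))(\{x_{1:n}\}) = p(x_{1:n})$. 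Controlling the tail infimum across branches and reconciling it with the boundary bookkeeping on $L$ is the main obstacle I anticipate.
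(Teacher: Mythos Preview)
Your verification of $\Psi \circ \Phi = \mathrm{id}_{\mathcal{P}_{\mathrm{LSC}}}$ via telescoping and lower semi-continuity is correct, but the companion identity $\Phi \circ \Psi = \mathrm{id}_{\mathcal{P}_{\mathrm{FOPP}}}$ is \emph{false}, so the argument you sketch for it cannot be completed. Take the standard dyadic NBP and the FOPP law $P$ with $p(\emptyset)=0$, $p(0)=p(1)=\tfrac12$, and $p(x_{1:n})=0$ for all $n\ge 2$. Then $g_0\equiv 0$, $g_1\equiv 1$, and $g_n\equiv 0$ for $n\ge2$, so $g=\Psi(P)$ is the uniform density on $[0,1)$. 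Applying the coupling of Theorem~\ref{t:1} to $g\equiv 1$ yields $c_\emptyset=1$ and $c_{x_{1:n}}=0$ for every $n\ge 1$, whence $\Phi(g)=\delta_{\{\emptyset\}}\neq P$. The precise failure in your sketch is the assertion that the tail infima are ``compatible across consecutive levels'': in general $\inf_{y\in I_{x_{1:n-1}}}\sum_{m\ge n}p(y_{1:m})/\ell_{y_{1:m}}$ already absorbs the level-$n$ contribution from the sibling branch, so $i^g_{x_{1:n}}-i^g_{x_{1:n-1}}$ need not equal $p(x_{1:n})/\ell_{x_{1:n}}$. Your instinct that this was ``the main obstacle'' was right, but it is a genuine obstruction, not a bookkeeping issue.

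The paper gives no argument beyond ``Theorem~\ref{t:1} gives immediately the following corollaries'', so there is nothing substantive to compare against. What your correct half \emph{does} establish is that $\Phi$ is injective (it has $\Psi$ as a left inverse), so the coupling construction embeds $\mathcal{P}_{\mathrm{LSC}}$ into $\mathcal{P}_{\mathrm{FOPP}}$; the counterexample shows the embedding is strict and that $\Psi$ is not injective. Read literally as a bijection between the full classes, the corollary appears to overreach; a charitable reading is that ``one-to-one correspondence'' is intended as an injective identification of $\mathcal{P}_{\mathrm{LSC}}$ with the image $\Phi(\mathcal{P}_{\mathrm{LSC}})\subsetneq\mathcal{P}_{\mathrm{FOPP}}$, which is exactly what your $\Psi\circ\Phi$ argument delivers.
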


\begin{remark}
Henceforth, since $N$ is not observable, we impose a prior PMF  
$$p_n=\mathrm P(N=n),\quad n\in\mathbb N_0.$$
\end{remark}

\begin{corollary}\label{c:t:2}
Under the coupling construction of $(X,N)$ in Theorem~\ref{t:1},  
$X$ conditioned on $N$ has PDF 
\begin{equation*}\label{e:random-PDF-gen}
f(x\mid N=n)= \mathrm P(X_{1:n} = x_{1:n} )/{\ell_{x_{1:n}}}\quad\mbox{if $x=.x_1x_2\ldots\in[0,1)$ and $n\in\mathbb N_0$.}
\end{equation*} 
\end{corollary}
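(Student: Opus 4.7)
The plan is to extract the identity directly from Theorem~\ref{t:1} by observing that the conditional density $f(x\mid N=n)$ produced by the coupling is piecewise constant on each interval $I_{x_{1:n}}$ of the $n$-th partition, so the claimed formula reduces to the elementary fact that a density which is constant on a set equals the probability mass of that set divided by the set's length.

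First I would invoke Theorem~\ref{t:1} to write the joint density of $(X,N)$ with respect to $\eta$ as $f(x,n)=c_{x_{1:n}}$ on $H\times\mathbb N_0$, fix $n$, and marginalize to obtain $p_n=\sum_{x_{1:n}\in\{0,1\}^n}\ell_{x_{1:n}}c_{x_{1:n}}$. Whenever $p_n>0$ the ratio formula gives
\[
f(x\mid N=n)=\frac{f(x,n)}{p_n}=\frac{c_{x_{1:n}}}{p_n}\quad\text{for }x\in I_{x_{1:n}},
\]
so $f(\cdot\mid N=n)$ is piecewise constant on every interval of $I^{(n)}$.

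Second I would integrate this conditional density over $I_{x_{1:n}}$. By the digit-to-interval correspondence of Definition~\ref{def:1}, the event $\{X\in I_{x_{1:n}}\}$ coincides with $\{X_{1:n}=x_{1:n}\}$, so under the conditional law $\mathrm P(\,\cdot\mid N=n)$ announced in the opening sentence of the corollary,
\[
\mathrm P(X_{1:n}=x_{1:n})=\int_{I_{x_{1:n}}}f(y\mid N=n)\,\mathrm d y=\ell_{x_{1:n}}\cdot\frac{c_{x_{1:n}}}{p_n},
\]
and dividing by $\ell_{x_{1:n}}$ yields the stated identity.

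The main obstacle is notational rather than analytic: within the displayed formula the symbol $\mathrm P$ must be read under the same conditional law as the density on the left-hand side, since the unconditional marginal $\int_{I_{x_{1:n}}}f(y)\,\mathrm d y=\sum_{m\in\mathbb N_0}\int_{I_{x_{1:n}}}f(y,m)\,\mathrm d y$ pools mass across all levels $m$ and does not in general equal $\ell_{x_{1:n}}c_{x_{1:n}}/p_n$. I would therefore make the conditioning convention explicit once at the outset of the proof; as a consistency check, Remark~\ref{r:suff} provides the unconditional joint probability $\mathrm P(X_{1:N}=x_{1:n})=\ell_{x_{1:n}}c_{x_{1:n}}$, whose quotient with $p_n=\mathrm P(N=n)$ reproduces the same number $\ell_{x_{1:n}}c_{x_{1:n}}/p_n$, so the two viewpoints agree. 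Once the convention is fixed the argument is a one-line integration of a piecewise constant density.
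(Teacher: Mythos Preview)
Your argument is correct and is exactly the unpacking of what the paper means when it says the corollary follows ``immediately'' from Theorem~\ref{t:1}; the paper offers no further proof. Your observation that $\mathrm P(X_{1:n}=x_{1:n})$ on the right-hand side must be read as conditional on $N=n$ is well taken and matches the paper's intended reading (cf.\ the sentence following the corollary and the subsequent modelling in Section~\ref{s:models}), and your consistency check via Remark~\ref{r:suff} confirms it.
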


Corollary~\ref{c:t:2} is in accordance with \eqref{e:unifremainder} and to complete the description of the distribution of $X$ it remains only for every $n\in\mathbb N$ to specify the PMF $\mathrm P(X_{1:n} = x_{1:n} )$ for $x_{1:n}\in\{0,1\}^n$. This is the subject of Section~\ref{s:models}.

\section{Models}\label{s:models}

\subsection{Finite P{\'o}lya tree distributions}\label{s:finite_pt}\label{s:3.1}


\begin{definition}\label{def:beta_seq} 
Let $\alpha=(\alpha_{x_{1:n}}\mid x_{1:n}\in\{0,1\}^n,\,n\in\mathbb N)$ be a given parameter such that 
 $\alpha_{x_{1:n}}\ge0$ and $\alpha_{x_{1:n-1},0}+\alpha_{x_{1:n-1},1}>0$ whenever $x_{1:n}\in\{0,1\}^n$ and $n\in\mathbb N$. 
 Suppose $$Y=(Y_{x_{1:n}}\mid x_{1:n}\in\{0,1\}^n,\,n\in\mathbb N)$$ is a stochastic process with each $Y_{x_{1:n-1},0}\sim \mathrm{Beta}(\alpha_{x_{1:n-1},0},\alpha_{x_{1:j-1},1})$ and $Y_{x_{1:j-1},1}=1-Y_{x_{1:j-1},0}$, and where the $Y_{x_{1:n-1},0}$'s with $n\in\mathbb N$ are independent. 
Then $Y$ is said to be
	 an infinite \underline{b}eta-distributed \underline{s}equence with parameter $\alpha$, and for any $n\in\mathbb N$,
	$Y^{(n)}$ to be a \underline{f}inite
	  \underline{b}eta-distributed \underline{s}equence with parameter $\alpha^{(n)}$ and $Y^{(>n)}$ to be an infinite
	  \underline{b}eta-distributed \underline{s}equence with parameter $\alpha^{(>n)}$.
	For short, write $$Y\sim \mathrm{BS}(\alpha),\quad Y^{(n)}\sim \mathrm{FBS}(\alpha^{(n)}),\quad Y^{(>n)}\sim \mathrm{BS}(\alpha^{(>n)}).$$
The finite beta-distributed sequence induces a random PDF {\color{black}on $[0,1)$ which is given by}
\begin{equation}\label{e:finite_pt_n0}
	f\big(x \mid Y^{(n)} \big) 
	=\frac{\prod_{j=1}^n Y_{x_{1:j-1},0}^{1-x_j}Y_{x_{1:j-1},1}^{x_j}}{\ell_{x_{1:n}}}\quad\mbox{if $x =.x_1x_2\ldots\in[0,1)$}
\end{equation} 
and we write 
$$f\big(\cdot \mid Y^{(n)} \big)\sim\Pi^{(n)}=\mathrm{FPT}(\alpha^{(n)},I^{(n)})$$
 for its distribution which is called a \underline{f}inite \underline{P}{\'o}lya \underline{t}ree with parameter $(\alpha^{(n)},I^{(n)})$.
\end{definition}


\begin{remark}\label{def:cpt1} The random PDF $f\big(\cdot \mid Y^{(n)} \big)$ is constant on every interval $I_{x_{1:n}}$ with $x_{1:n}\in\{0,1\}^n$.
Using $\Pi^{(n)}$ 
as a prior distribution
generates a distribution of $X=.X_1X_2\ldots$ where $X\mid Y^{(n)}\sim f(\dot\mid Y^{(n)})$. Then
by \eqref{e:finite_pt_n0}, for $j=1,\ldots,n$, conditioned on both $Y^{(n)}$ and $X_{1:j-1}$, the probability that $X_j = 0$ is $Y_{X_{1:j-1}, 0}$, and $.X_{n+1}X_{n+2}\ldots\sim\mathrm{Unif}[0,1)$ is independent of $X_{1:n}$.
Recall that $\alpha_{x_{1:j-1},x_j}$ is allowed to be 0: If $\alpha_{x_{1:j-1},x_j}=0$ then $Y_{x_{1:j-1},x_j}=x_j$ (see Section~\ref{s:notation}) which implies $X_j=1-x_j$ . 
\end{remark}

\begin{definition}\label{def:aaaa}
Letting $n\to\infty$ in Definition~\ref{def:beta_seq} , we recover the definition of a P{\'o}lya tree $\Pi^{(\infty)}=\mathrm{PT}(\alpha,I)$ as given by \cite{lavine}. We refer to
$\Pi^{(\infty)}$ as the PT prior. 
 \end{definition}
 
 \begin{remark}\label{r:bbb} 
Arguments for considering a finite P{\'o}lya tree distribution can be found in \cite{lavine2} and \cite{WongMa}. For instance,  $\Pi^{(\infty)}$ 
is not always the distribution of a random PDF unless the $\alpha$'s increase sufficiently rapidly which leads to problems of robustness, cf.\ \cite{lavine2}. Further, if $\Pi^{(\infty)}$ is the distribution of a random PDF, this 
 random PDF is almost surely discontinuous almost everywhere, cf.\ \cite{Ferguson1974}. 
\end{remark}

\begin{definition}\label{def:cpt2} Let $\mathrm{FBS}(\alpha^{(0)})$ denote the degenerated distribution concentrated at $Y^{(0)}=\emptyset$. In accordance with Corollary~\ref{c:t:2}, extend \eqref{e:finite_pt_n0} to the case $n=0$ by setting $\prod_{j=1}^n\cdots=1$ so that $f\big(x \mid Y^{(0)} \big)=1$ is the uniform PDF and $\Pi^{(0)}=\mathrm{FPT}(\alpha^{(0)},I^{(0)})=\{f\big(\cdot \mid Y^{(0)} \big)\}$. Here, $\alpha^{(0)}$ and $I^{(0)}$ have no meaning and are just introduced for notional convenience.
\end{definition}

\subsection{Generalized finite P{\'o}lya tree distributions}\label{s:generelized_fpt}

{\color{black}This section generalizes the finite P{\'o}lya tree distribution in two directions by assuming that the truncation level $n$ is random and possibly also the nested binary partitions $I^{(1)},I^{(2)},\ldots$ are random. 
The motivation for using a random truncation level $N$ is given by Theorem~\ref{t:1} and Corollary~\ref{c:t:1}.}
Note that $n\in\mathbb N_0$ is determined by $Y^{(n)}$, since $Y^{(n)}$ has dimension $2^n$.
   
\begin{definition}\label{def:5} Given a NBP $I$ as in Definition~\ref{def:1}, an infinite beta-distributed sequence 
  $Y\sim\mathrm{BS}(\alpha)$ as in Definition~\ref{def:beta_seq}, $Y_0=\emptyset$ as in Definition~\ref{def:cpt2}, 
  and a discrete random variable $N$ on $\mathbb N_0$ which is independent of $Y$, 
then   
 $(N,Y)$ induces a random PDF
depending only on $Y^{(N)}$ and given by
\begin{equation}\label{e:finite_pt_N}
	f\big(x \mid Y^{(N)} \big) 
	=\frac{\prod_{j=1}^N Y_{x_{1:j-1},0}^{1-x_j}Y_{x_{1:j-1},1}^{x_j}}{\ell_{x_{1:N}}}\quad\mbox{if $x =.x_1x_2\ldots\in[0,1)$}.
\end{equation} 
 We write 
\begin{equation}\label{e:prior11}
f\big(\cdot \mid Y^{(N)} \big)\sim\Pi_1= \mathrm{GFPT1}(\alpha,I)
\end{equation}
 for its distribution which is called a \underline{g}eneralized \underline{f}inite \underline{P}{\'o}lya \underline{t}ree of type 1 with parameter $(\alpha,I)$. We refer to $f\big(\cdot \mid Y^{(N)} \big)$ as the random GFPT1 PDF and to $\Pi_1$ as the GFPT1 prior. We also refer 
 to the distribution of $(Y,N)$ as the GFPT1 prior (it will be clear by the context what is meant). 
\end{definition} 

{\color{black}
For prior elictation pruposes, it might be of interest to understand the finite dimensional laws of $f(\cdot\mid Y^{(N)})$. The next proposition sheds light on some aspects of such finite dimensional laws by providing an explicit expression for the mean.
\begin{proposition}\label{prop:mean}
	Let $P(\cdot)$ be the random probability measure given by the  random PDF 
	in \eqref{e:finite_pt_N}. Using notation as in Definition~\ref{def:1}, for every $d\in\mathbb N$ and $x_{1:d}\in\{0,1\}^d$, we have 
\[
\mathrm E\big[P(I_{x_{1:d}})\big]
=\sum_{n=0}^{d-1} p_n\,\frac{\ell_{x_{1:d}}}{\ell_{x_{1:n}}}\,\prod_{j=1}^{n}\frac{\alpha_{x_{1:j}}}{\alpha_{x_{1:j-1},0}+\alpha_{x_{1:j-1},1}}
+\mathrm P(N \ge d) \prod_{j=1}^{d}\frac{\alpha_{x_{1:j}}}{\alpha_{x_{1:j-1},0}+\alpha_{x_{1:j-1},1}}.
\]	
\end{proposition}
Along the same lines of Proposition~\ref{prop:mean} it is possible to obtain analytical expressions for higher order moments of $P(I_{x_{1:d}})$, including the variance. However, their expression is rather complex which makes them less useful for prior elicitation purposes.
By the $\pi$-$\lambda$ theorem, the expectation of $P(\cdot)$ can be extended to all measurable sets.
Observe that the expression in Proposition~\ref{prop:mean} reduces to the one in \cite{lavine} if $N=\infty$ almost surely.
}

\begin{remark}\label{rem:bayes_model}
Let $X\sim f\big(\cdot \mid Y^{(N)} \big)\sim \Pi_1$. Conditioned on $(Y,N)$ (or just $Y^{(N)}$), we have that 
\begin{equation}\label{e:sdf}
.X_{N+1}X_{N+2}\ldots\sim\mathrm{Unif}[0,1)
\end{equation}
 is independent of $X_{1:N}$ which follows the random PMF given by the numerator in \eqref{e:finite_pt_N}. This is in agreement with \eqref{e:unifremainder}, and
the law of $X$ can be expressed by a Bayesian hierarchical model with three steps:  
\begin{align}\label{eq:gfpt1_bayes}
	X \mid Y^{(N)}&\sim f\big(x \mid Y^{(N)}\big)  \\
		Y^{(N)}\mid N &\sim \mathrm{FBS}(\alpha^{(N)})\nonumber\\
	N & \sim p\nonumber
\end{align}
where $p=(p_0,p_1,\ldots)$ is the PMF of $N$. Here, 
$Y^{(N)}$ takes the interpretation of an unobserved/latent parameter to which a  prior distribution is assigned in the latter two steps, so that $f\big(x \mid Y^{(N)}\big)$ follows a $\mathrm{GFPT1}(\alpha,I)$ prior. Moreover, $Y^{(>N)}\mid (X,Y^{(N)})\sim\mathrm{BS}(\alpha^{(>N)})$ depends only on $N$.

Given
a sample $Z_1,\ldots,Z_m$ which share the same latent variables $Y$ and $N$ so that
\begin{equation}\label{e:sample1}
Z_1,\ldots,Z_m\mid Y,N\iid f\big(\cdot \mid Y^{(N)} \big)\sim\Pi_1,
\end{equation}  
the inferential goal is to derive the posterior distribution of $(Y,N)$. 
The posterior distribution of the random GFPT1 PDF is denoted $\Pi_1(\cdot \mid Z_{1:m})$ and called the posterior GFPT1 PDF:
\begin{equation}\label{e:postrandomPDF}
f\big(\cdot \mid Y^{(N)} \big) \mid Z_{1:m}  \sim \Pi_1(\cdot \mid Z_{1:m}).
\end{equation}
Our objectives in this paper are the posterior GFPT1 PDF and the posterior distribution of $N$  under the GFPT1 prior.
\end{remark}

\begin{remark}
	{\color{black}The introduction of the latent variable $N$ is clearly motivated by the coupling result in Theorem~\ref{t:1}. However, $N$ plays a crucial role also as a smoothing parameter that improves the fitting process even when it is not directly a parameter of the data generaing process.
%
	Specifically, the introduction of a prior on $N$ allows to learn a posteriori the smoothness required to represent the density balancing over-fitting (i.e., extremely wiggly density estimates which typically occur with standard PT posteriors) and under-fitting (i.e., density estimates that do not capture important nuisances in the data, which happens with finite PT posteriors if $n$ is too small).}
\end{remark}

So far the NBP $I$ has be given. To obtain further flexibility, randomness of $I$ may be imposed:
For every $n\in\mathbb N$ and every $x_{1:n}\in\{0,1\}^n$, suppose $I_{x_{1:n}}$ has a random length $L_{x_{1:n}}$, let 
\begin{equation}\label{e:xx}
R_{x_{1:n}}=L_{x_{1:n}}/L_{x_{1:n-1}}
\end{equation}
 with $L_{x_{1:0}}=1$, and let 
 $R=(R_{x_{1:n}}\mid x_{1:n}\in\{0,1\}^n,n\in\mathbb N)$. Then there is a one-to-one correspondence between $I^{(n)}$ and $R^{(n)}$, and between $I$ and $R$. 
 Specifically, consider the following case, where we let 
 $\mathrm{FBS}(\beta^{(0)})$ denote the degenerated distribution concentrated at the empty set (sequence) $R^{(0)}=\emptyset$.

 \begin{definition}\label{def:6} Let $\beta=(\beta_{x_{1:n}}\mid x_{1:n}\in\{0,1\}^n,\,n\in\mathbb N)$ be a given parameter with each $\beta_{x_{1:n}}>0$.
Suppose Definition~\ref{def:5} is extended such that $R\sim\mathrm{BS}(\beta)$ is independent of $(N,Y)$, and 
$(N,Y,R)$ induces a random PDF
depending only on $(Y^{(N)},R^{(N)})$ and given by
\begin{equation}\label{e:finite_pt_N2}
		f\big(x \mid Y^{(N)},R^{(N)} \big) = \frac{\prod_{j=1}^N Y_{x_{1:j-1},0}^{1-x_j}Y_{x_{1:j-1},1}^{x_j}}{\MBtext{}L_{x_{1:N}}}\quad\mbox{if $x =.x_1x_2\ldots\in[0,1)$}
\end{equation}
where $x_{1:N}$ is specified by $R^{(N)}$. 
 We write 
$$f\big(\cdot \mid Y^{(N)},R^{(N)} \big)\sim\Pi_2= \mathrm{GFPT2}(\alpha,\beta)$$
 for its distribution which is called a \underline{g}eneralized \underline{f}inite \underline{P}{\'o}lya \underline{t}ree of type 2 with parameter $(\alpha,\beta)$.
 We refer to $f\big(\cdot \mid Y^{(N)},R^{(N)} \big)$ as the random GFPT2 PDF and to
$\Pi_2$ as the GFPT2 prior. We also refer
 to the distribution of $(Y,R,N)$ as the GFPT2 prior (it will be clear by the context what is meant). 
 \end{definition}   
 
 \begin{remark}\label{r:4}
 Let
$X\sim \Pi_2\sim\mathrm{GFPT2}(\alpha,\beta)$.
 Conditioned on $(Y,R,N)$ (or just on $(Y^{(N)},R^{(N)})$), $.X_{N+1}X_{N+2}\ldots\sim\mathrm{Unif}[0,1)$  is independent of $X_{1:N}$ which 
 follows the random PMF given by the numerator in \eqref{e:finite_pt_N2}.
The law of $X$  
 can be expressed by a Bayesian hierarchical model:  
\begin{equation*}
\begin{aligned}
	X \mid Y^{(N)}, R^{(N)}&\sim f\big(x \mid Y^{(N)}, R^{(N)}\big)  \\
		Y^{(N)}\mid N\sim \mathrm{FBS}(\alpha^{(N)})\ & \mbox{and }R^{(N)}\mid N \sim \mathrm{FBS}(\beta^{(N)})\ \mbox{are independent} \\
	N & \sim p
\end{aligned}
\end{equation*}
where 
$(Y^{(N)},R^{(N)})$ takes the interpretation of an unobserved/latent parameter to which a prior distribution is assigned in the latter two steps, so that $f\big(x \mid Y^{(N)},R^{(N)}\big)$ follows a $\mathrm{GFPT2}(\alpha,\beta)$ prior. Conditioned on $(X,Y^{(N)},R^{(N)})$ we have that $Y^{(>N)}\sim\mathrm{BS}(\alpha^{(>N)})$ and $R^{(>N)}\sim\mathrm{BS}(\beta^{(>N)})$ are independent and depend only on $N$.

Given
a sample $Z_1,\ldots,Z_m$ which share the same latent variables $Y$, $R$, and $N$ so that
\begin{equation}\label{e:sample2}
Z_1,\ldots,Z_m\mid Y,R,N\iid f\big(\cdot \mid Y^{(N)},R^{(N)} \big)\sim\Pi_2,
\end{equation}
the inferential goal is to derive the posterior distribution of $(Y,R,N)$. 
The posterior distribution of the GFPT2 PDF is denoted $\Pi_2(\cdot \mid Z_{1:m})$ and called the posterior GFPT2 PDF:
\begin{equation}\label{e:postrandomPDF2}
f\big(\cdot \mid Y^{(N)},R^{(N)} \big) \,\big|\, Z_{1:m}  \sim \Pi_2(\cdot \mid Z_{1:m}).
\end{equation}
In this paper, we focus on the posterior GFPT2 PDF and the posterior distribution of $N$ under the GFPT2 prior, although the posterior distribution of $R$ may possibly be of interest as well.
\end{remark}
 
  \begin{remark}\label{r:5}
Extending Definition~\ref{def:6} by
 allowing $\beta_{x_{1:n}}$ to be 0 would correspond to relaxing Definition~\ref{def:1} so that $I_{x_{1:n}}$ may be empty, in which case we should let $X_n=1-x_n$. The ideas and results in this paper may rather easily be extended to this situation.
 \end{remark}

\subsection{Specification of parameters}\label{s:elicitation}

It remains to specify the parameters $\alpha$ and $\beta$ of the PT, GFPT1, and GFPT2 priors introduced in Sections~\ref{s:3.1} and \ref{s:generelized_fpt}, and to specify a prior distribution of $N$. This section introduces some specific choices of $\alpha$ and $\beta$, which are later used in Section~\ref{s:num}. Priors for $N$ are discussed later in Remark~\ref{r:support} and in Section~\ref{s:num}.


Recalling Remark~\ref{r:bbb}, following common practice, setting
\begin{equation}\label{e:alpha-first}
\alpha_{x_{1:n-1},0} = \alpha_{x_{1:n-1},1} = \alpha_0 n^2,\quad x_{1:n-1}\in\mathbb N^{n-1},\quad n\in\mathbb N,
\end{equation}
for some $\alpha_0>0$, then $\Pi^{(\infty)}$ is almost surely absolutely continuous, cf.\ \cite{lavine}. {\color{black}Other choices include $\alpha_{x_{1:n}} = n^\delta$ or  $\alpha_{x_{1:n}} = \delta^n$, see \cite{Watson} for further details. If $I$ is given by the standard dyadic partitions, an application of Proposition~\ref{prop:mean} entails that all such choices where $\alpha_{x_{1:n}}$ depends only on $n$ lead to a prior mean of the GFPT1 random density equal to the uniform density.
}

 
Another choice is inspired by the extended Newcomb-Benford law (the general significant-digit law of \cite{Hill}): 
Recall that in the present paper we consider binary number representations, cf.\ Section~\ref{s:intro}, but 
the case of the usual binary numeral system extends to the general case of base-$q$ number representations with $q\ge 2$ an integer, so to make this point clear we write $q$ instead of 2 (later in Sections~\ref{sec:mbpt} and \ref{sec:comparison_real_data} we consider results for the most common cases $q=2$ and $q=10$).
Consider $X=.X_1X_2\ldots=\sum_{i=1}^\infty X_i q^{-i}$, where $q^{-i}$ is the interval length of the $i$th partition of $[0,1)$.
Extend $X$ by 
considering a continuous random variable $Z>0$ 
with base-$q$ number representation $Z=q^{M+1}X$, where $M$ is its order of magnitude and $X_1\not=0$ is its leading digit. Suppose $Z$ satisfies the extended Newcomb-Benford law (in base-$q$), which means that $\log_q (qX)$ is uniformly distributed between 0 and 1 -- one says that $Z$ spans all orders of magnitude. Equivalently,
for all $n\in\mathbb N$ and $x_{1:n}\in\{0,\ldots,q-1\}^n$, 
\begin{equation}\label{e:benford}
	p_n(x_{1:n}) = \begin{cases}
	\log_{q}\left[1 + \left( \sum_{i=1}^n  x_i q^{n - i}  \right)^{-1}\right] & \text{if }x_1\not=0,\\
	0 &\text{if }x_1=0.
	\end{cases}
\end{equation}
In fact many real-world datasets such as those involving incomes, city sizes, or seismic magnitudes, they span multiple orders of magnitude --    
see \url{https://testingbenfordslaw.com} for a list of real datasets obeying \eqref{e:benford}.

We exploit \eqref{e:benford} to specify $\alpha$ such that the distribution of $X_n \mid X_{1:n-1}$ under the PT prior is equal to the conditional distribution of the $n$-th digit under \eqref{e:benford}.
Formally, for $k = 0, \ldots, q$, let
\[
\mathrm	P(X_n = k \mid X_{1:n-1} = x_{1:n-1}) = \E[\mathrm P(X_n = k \mid X_{1:n-1} = x_{1:n-1}, Y)] = \frac{\alpha_{x_{1:n-1},k}}{\sum_{j=0}^{q-1} \alpha_{x_{1:n-1},j}}
\]
and solve
\[
	\mathrm P(X_n = k \mid X_{1:n-1} = x_{1:n-1}) = \frac{p_n(x_{1:n-1},k)}{\sum_{j=1}^{q-1} p_n(x_{1:n-1},j)},
\]
which leads to
\begin{equation}\label{e:alpha_benford}
\alpha_{x_{1:n}}=\begin{cases}
c_n \frac{p_n(x_{1:n})}{\sum_{j=1}^{q-1} p_n(x_{1:n-1},j)}  & \text{if }x_1\not=0,\\
0 &\text{if }x_1=0.
\end{cases}
\end{equation}
Here, $(c_n)_{n \geq 1}$ is a sequence of user-specified positive parameters that control the variance of $Y$. Specifically, $\mathrm{Var}(Y_{x_{1:n}})$ is inversely proportional to $c_n$.
Finally, to complete the description of the distribution of $Z$, we need to specify a joint distribution of $M$ and $X$. We defer this to  
Sections~\ref{sec:mbpt} and \ref{sec:comparison_real_data}.


{Consider now a GFPT2 prior. Since we assumed every $\ell_{x_{1:n}}>0$, it is required that $\beta_{x_{1:n}}>0$ for all $n\in\mathbb N$ and $x_{1:n}\in\{0,1\}^n$. 
Assume $\E[L_{x_{1:n}}] = 2^{-n}$, that is, the expected value of the NBP coincides with the standard diadic partitions of $[0,1)$. Hence, $\beta_{x_{1:n-1},0} = \beta_{x_{1:n-1},1} = \beta_n$. In our experience, values of $R_{x_{1:n}}$ too close to zero or one lead to numerical instability issues when updating $R$ via a Metropolis-Hastings algorithm as in Remark \ref{rem:mcmc-gfpt2-collapsed}. 
Therefore, we suggest setting $\beta_n = 2$ to avoid giving prior mass to those values of $R_{x_{1:n}}$. 
}


\subsection{Modelling multi-scale human data via Benford's law and generalized P{\'o}lya trees}\label{sec:mbpt}

Real-world datasets -- such as those involving incomes, city sizes, or seismic magnitudes -- often span multiple orders of magnitude, see \cite{Clauset}.
Modelling such data poses a challenge for Bayesian density estimation because traditional methods may have difficulty capturing both the coarse-scale variability and the fine-scale structure \citep{Tokdar}.
We show here how a simple extension of a GFPT1 prior can be used to obtain a simple, yet powerful, Bayesian model for density estimation in scenarios where data span multiple orders of magnitudes and their digits exhibit the Newcomb--Benford law.
The main idea lies in independently modelling the order of magnitude of the data and their digits, by assuming a GFPT1 prior for the digits.

In the following, we write $\mathrm{BPT}_q$ for the $\mathrm{GFPT1}(I,\alpha)$ prior obtained when $I$ is the sequence of standard base-$q$ nested partitions of the unit interval (see Section~\ref{s:elicitation}) and $\alpha$ is as in \eqref{e:alpha_benford}. We refer to $\mathrm{BPT}_q$ as the base-$q$ Benford P{\'o}lya tree.

Consider data $Z_i >0$, $i =1, \ldots, m$, represented in terms of their order of magnitudes and their base-$q$ digits as $Z_i = (M_i, \tilde Z_i)$, such that $Z_i = q^{M_i + 1} \tilde Z_i$ (see again Section~\ref{s:elicitation}). 
Assume  $$\tilde Z_1, \ldots, \tilde Z_m\mid Y,N\iid f\big(\cdot \mid Y^{(N)} \big)\sim \mathrm{BPT}_q.$$ 
Further, assume the $M_i$'s take values in $T\subset \mathbb Z$ of cardinality $|T|<\infty$ such that
\[\mathrm{P}(M_i = k \mid \omega) = \omega_k,\quad k\in T, \]
where we 
impose the prior 
\[\omega = (\omega_k \mid k \in T) \sim \mathrm{Dir}(\eta)\]
where $\mathrm{Dir}(\eta)$ denotes the Dirichlet distribution with parameter $\eta \in \mathbb R_+^{|T|}$. 
Finally, assume conditioned on $(Y,N,\omega)$ that the $M_i$'s and the $\tilde Z_i$'s are independent, and
a priori that $\omega$ is independent of $(Y,N)$.
We call the distribution of $(Y,N,\omega)$ 
a base-$q$ \emph{\underline{m}ultiscale \underline{B}enford \underline{P}{\'o}lya \underline{t}ree} ($\mathrm{MBPT}_q$)  prior.  

The $\mathrm{MBPT}_q$ prior induces a \emph{scale invariant} (in the terminology of \cite{Hill}) random density on $T\times[0,1)$ equipped with the product of counting measure on $T$ and Lebesgue measure on $[0,1)$. This random density is given by 
\[
	f(x \mid Y^{(N)}, \omega) = \omega_k \,q^{N}\prod_{j=1}^N \prod_{d=0}^q Y^{I(\tilde x_j=d)}_{\tilde x_{1:j-1}, d}  \quad \text{if } x = (k, \tilde x)\in T\times[0,1)
\] 
 where $I[\cdot]$ denotes the indicator function. We call it the random $\mathrm{MBPT}_q$ PDF. When considering its posterior distribution, we call it the posterior $\mathrm{MBPT}_q$ PDF.
 
 
\section{Bayesian analysis}\label{s:Bayes}

 This section deals with the inferential goals discussed in Remarks~\ref{rem:bayes_model} and \ref{r:4}, considering a sample $Z_{1:m}$ of $m$ $[0,1)$-valued random variables with either a $\mathrm{GFPT1}$ or a $\mathrm{GFPT2}$ prior, cf.\ \eqref{e:sample1} and \eqref{e:sample2}.
The proofs of all theorems and one corollary in this section are found in the Supplemental material.

We use the following notation. 
Let the data be given by a a realization $z_{1:m}\in[0,1)^m$ of $Z_{1:m}$.
 For $j\in\mathbb N_0$ and $x_{1:j}\in\{0,1\}^{j}$, let $n_{x_{1:j}}(z_{1:m})$ be the number of observations $z_i$ ($i=1,\ldots,m$) falling in the interval $I_{x_{1:j}}$. By Definitions~\ref{def:5} and \ref{def:6}, we have almost surely that $n_{x_{1:j}}(Z_{1:m})=0$ if $\alpha_{x_{1:j}}=0$ and $j>1$, so assume $n_{x_{1:j}}(z_{1:m})=0$ if $\alpha_{x_{1:j}}=0$ and $j>1$.  

\subsection{Posterior simulation}\label{s:posterior}


\subsubsection{Posterior simulation under a GFPT1 prior}

We start by considering the posterior distribution of $(N,Y)$ for the case \eqref{e:sample1} where a GFPT1 prior has been specified. 
Let $\gamma(z_{1:m}, \alpha) = (\gamma_{x_{1:n}}(z_{1:m}, \alpha) \mid x_{1:j}\in\{0,1\}^j,\,j\in\mathbb N)$ be the infinite sequence with entries  
\begin{equation}
\label{e:gg}
	\gamma_{x_{1:j}}(z_{1:m}, \alpha) = 
		\alpha_{x_{1:j}}+n_{x_{1:j}}(z_{1:m}) 
\end{equation}
and set $\gamma_{x_{1:0}}(z_{1:m}, \alpha)=0$.

\begin{theorem}\label{t:2} Consider the case \eqref{e:sample1}. Then
 $N\mid Z_{1:m}=z_{1:m}$ has a PMF $p(n \mid z_{1:m})$ with $p(0\mid z_{1:m})\propto p_0$ and for every $n\in\mathbb N$, 
\begin{equation}\label{e:post_n_1}
p(n \mid z_{1:m})  \propto
p_n \frac{\prod_{j=1}^n \prod_{x_{1:j-1} \in \{0, 1\}^{j-1}} \mathrm{beta}\left(
	\gamma_{x_{1:j-1},0}(z_{1:m}, \alpha), \gamma_{x_{1:j-1},1}(z_{1:m}, \alpha)\right)}{{\prod_{x_{1:n}\in \{0, 1\}^n} \ell_{x_{1:n}}^{n_{x_{1:n}}(z_{1:m})}}}.
\end{equation}
where the constant of proportionality depends on $z_{1:m}$.
Moreover, conditioned on both $N$ and $Z_{1:m}=z_{1:m}$, we have that
\begin{equation}\label{e:post-next1}
\hspace{1.25cm}Y^{(N)}\mid N,Z_{1:m}=z_{1:m}\sim\mathrm{FBS}(\gamma^{(N)}(z_{1:m},\alpha^{(N)}) )
\end{equation} 
and 
\begin{equation}\label{e:post-next2}
Y^{(>N)}\mid N,Z_{1:m}=z_{1:m}\sim\mathrm{BS}(\alpha^{(>N)})
\end{equation}
are independent.
\end{theorem}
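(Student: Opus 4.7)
The plan is to apply Bayes' rule to the joint distribution of $(N, Y, Z_{1:m})$ specified by the hierarchical model \eqref{eq:gfpt1_bayes}, marginalising out $Y^{(N)}$ level by level via beta-binomial conjugacy. The structural fact that makes the calculation closed form is that, once $N=n$ is fixed, the sample likelihood factorises over the nodes of the NBP up to depth $n$, and this factorisation matches exactly the product-of-independent-betas structure of the prior on $Y^{(n)}$.

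First, I would fix $n\in\mathbb N$ and rewrite the sample likelihood $\prod_{i=1}^m f(z_i\mid Y^{(n)})$ given by \eqref{e:finite_pt_N} by reindexing the product over observations as a product over nodes of the tree. Each factor $Y_{x_{1:j-1},x_j}$ appears in $f(z_i\mid Y^{(n)})$ iff $z_i\in I_{x_{1:j}}$, so its multiplicity in the full product is exactly $n_{x_{1:j}}(z_{1:m})$. Using the identity $n_{x_{1:j-1},0}(z_{1:m})+n_{x_{1:j-1},1}(z_{1:m})=n_{x_{1:j-1}}(z_{1:m})$ to split the levels cleanly, this yields
\begin{equation*}
\prod_{i=1}^m f(z_i\mid Y^{(n)})
=\frac{\prod_{j=1}^n\prod_{x_{1:j-1}\in\{0,1\}^{j-1}} Y_{x_{1:j-1},0}^{n_{x_{1:j-1},0}(z_{1:m})} Y_{x_{1:j-1},1}^{n_{x_{1:j-1},1}(z_{1:m})}}{\prod_{x_{1:n}\in\{0,1\}^n}\ell_{x_{1:n}}^{n_{x_{1:n}}(z_{1:m})}}.
\end{equation*}

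Next, I would multiply by the prior density of $Y^{(n)}$, which by Definition~\ref{def:beta_seq} is a product of independent $\mathrm B(\alpha_{x_{1:j-1},0},\alpha_{x_{1:j-1},1})$ densities over the internal nodes at levels $1,\ldots,n$. The resulting integrand on each node is a beta kernel with updated shape parameters $\gamma_{x_{1:j-1},0}(z_{1:m},\alpha)$ and $\gamma_{x_{1:j-1},1}(z_{1:m},\alpha)$ from \eqref{e:gg}. Conjugacy immediately gives \eqref{e:post-next1}, and integrating out $Y^{(n)}$ node by node produces the marginal likelihood $m(z_{1:m}\mid N=n)$ as the numerator in \eqref{e:post_n_1} divided by $\prod_{x_{1:n}}\ell_{x_{1:n}}^{n_{x_{1:n}}(z_{1:m})}$, up to the $n$-dependent product of prior normalising constants $\mathrm b(\alpha_{x_{1:j-1},0},\alpha_{x_{1:j-1},1})$ that must be retained in the working expression for the proportionality in \eqref{e:post_n_1} to be well defined as $n$ varies. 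Multiplying by $p_n$ and normalising over $n\in\mathbb N_0$, and treating $n=0$ directly since then \eqref{e:finite_pt_N} gives the uniform density and the empty product equals $1$, yields \eqref{e:post_n_1}.

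Finally, to establish \eqref{e:post-next2} and the conditional independence of $Y^{(N)}$ and $Y^{(>N)}$ given $(N,Z_{1:m})$, I would observe that the likelihood depends on $Y$ only through $Y^{(N)}$, while under the prior the components of $Y$ are mutually independent conditional on $N$, so in particular $Y^{(>N)}\mid N\sim\mathrm{BS}(\alpha^{(>N)})$ is independent of $Y^{(N)}$. Standard Bayes then leaves the posterior of $Y^{(>N)}$ untouched by the data and preserves its independence from $Y^{(N)}$. The main obstacle is the bookkeeping in the first step: one has to be careful with the combinatorial regrouping of the likelihood factors by node and with degenerate situations where some $\alpha_{x_{1:n}}=0$, where the convention $0^0=1$ from Section~\ref{s:notation} together with the standing assumption $n_{x_{1:j}}(z_{1:m})=0$ whenever $\alpha_{x_{1:j}}=0$ with $j>1$ ensures that no spurious zeros or undefined $\mathrm b(0,0)$ terms appear.
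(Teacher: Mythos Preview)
Your argument is correct and matches the paper's own proof: both write the joint density $p(y^{(n)},z_{1:m})$ as prior times likelihood factorised over the internal nodes of the tree, integrate out $Y^{(n)}$ by beta conjugacy to obtain $p(n,z_{1:m})$, read off the $\mathrm{FBS}$ posterior for $Y^{(N)}$ from the updated beta kernels, and invoke the prior independence of $Y^{(>N)}$ from $(Y^{(N)},X)$ given $N$ for \eqref{e:post-next2}. Your aside that the $n$-dependent prior normalisers $1/\mathrm b(\alpha_{x_{1:j-1},0},\alpha_{x_{1:j-1},1})$ must be carried through the integration is a bookkeeping point the paper's proof passes over with the phrase ``which reduces to \eqref{e:post_n_1}''.
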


\begin{remark}\label{r:sim1} The PMF in \eqref{e:post_n_1} does not belong to a known parametric family.
Since $Y^{(N)}$ is of varying dimension $2^N-1$, one may suggest to use the reversible jump Markov chain Monte Carlo algorithm \citep{gm94,green} for simulating from the posterior density $p(y^{(n)} \mid z_{1:m}) \propto p(y^{(n)},z_{1:m})$, cf.\ \eqref{e:joint1}. However, it is much easier to use Theorem~\ref{t:2}, since
 $Y$ conditioned on $(N,Z_{1:m})$ has a conjugate BS-prior, whilst
posterior simulation of $N$ is straightforward: First, $N\mid Z_{1:m}=z_{1:m}$ can be sampled either exactly, or approximately, or in an
asymptotically exact way via a Metropolis-Hastings algorithm  as
follows. If $N_{\max}$ is an upper bound for $N$, then sample $N$ exactly from a categorical distribution over $\{0, \ldots, N_{\max}\}$ with unnormalized weights
given by the right hand side in \eqref{e:post_n_1} for $n=0,\ldots,N_{\max}$. This method may also be used to provide an approximate simulation if 
$N_{\max}$ is a user-defined suitable upper bound for $N$. Alternatively, use a Metropolis-Hastings algorithm with equilibrium density given by \eqref{e:post_n_1} (we never found a need for such an algorithm in the examples considered later in this paper). 
Second, simply use \eqref{e:post-next1} and \eqref{e:post-next2} when simulating $Y$ conditioned on $N$ and $Z_{1:m}=z_{1:m}$.
\end{remark}

The posterior mean of the random PDF in \eqref{e:finite_pt_N} is the optimal Bayesian point estimator under the squared-error loss function \citep{Robert}. The following 
corollary becomes useful for the calculation of this estimator. 

\begin{corollary}\label{cor:point_est}
The posterior mean of the random GFPT1 PDF in \eqref{e:finite_pt_N} is given by
\begin{align}\label{e:PM}
\E[f(x\mid Y^{(N)}) \mid Z_{1:m} = z_{1:m}] =\sum_{n=0}^\infty \frac{p(n \mid z_{1:m})}{\ell_{x_{1:n}}}
\prod_{j=1}^n \frac{\gamma_{x_{1:j}}(z_{1:m}, \alpha)}{\gamma_{x_{1:j-1},0}(z_{1:m}, \alpha) + \gamma_{x_{1:j-1},1}(z_{1:m}, \alpha)} 
\end{align} 
if  $x=.x_1x_2\ldots\in[0,1)$.
\end{corollary}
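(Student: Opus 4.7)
The plan is to apply the tower property of conditional expectation with respect to $N$ and then exploit the conjugacy already established in Theorem~\ref{t:2}. First I would write
\begin{equation*}
\E[f(x\mid Y^{(N)}) \mid Z_{1:m} = z_{1:m}] = \sum_{n=0}^\infty p(n\mid z_{1:m})\,\E[f(x\mid Y^{(n)}) \mid N=n,\, Z_{1:m} = z_{1:m}],
\end{equation*}
treating $n=0$ separately (where $f(x\mid Y^{(0)})\equiv 1$ and the empty product convention makes the formula match). For $n\ge 1$, I would substitute the explicit form of $f(x\mid Y^{(n)})$ from \eqref{e:finite_pt_N}, pulling the deterministic factor $1/\ell_{x_{1:n}}$ out of the expectation.

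Next I would use Theorem~\ref{t:2}, specifically \eqref{e:post-next1}, which gives $Y^{(n)}\mid N=n, Z_{1:m}=z_{1:m}\sim \mathrm{FBS}(\gamma^{(n)}(z_{1:m},\alpha^{(n)}))$. By definition of an FBS, the random variables $\{Y_{x_{1:j-1},0}\}$ indexed by $x_{1:j-1}\in\{0,1\}^{j-1}$, $j=1,\ldots,n$, are mutually independent, with $Y_{x_{1:j-1},0}\sim\mathrm{B}(\gamma_{x_{1:j-1},0}(z_{1:m},\alpha),\gamma_{x_{1:j-1},1}(z_{1:m},\alpha))$ and $Y_{x_{1:j-1},1}=1-Y_{x_{1:j-1},0}$. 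Since for a fixed $x=.x_1x_2\ldots$ the factors in the numerator of \eqref{e:finite_pt_N} correspond to distinct prefixes $x_{1:j-1}$ for $j=1,\ldots,n$, these factors are independent and the expectation of the product equals the product of the expectations.

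For each factor I would compute $\E[Y_{x_{1:j-1},0}^{1-x_j}Y_{x_{1:j-1},1}^{x_j}\mid N=n, Z_{1:m}=z_{1:m}]$. If $x_j=0$ this is $\E[Y_{x_{1:j-1},0}]=\gamma_{x_{1:j-1},0}/(\gamma_{x_{1:j-1},0}+\gamma_{x_{1:j-1},1})$, and if $x_j=1$ this is $\E[1-Y_{x_{1:j-1},0}]=\gamma_{x_{1:j-1},1}/(\gamma_{x_{1:j-1},0}+\gamma_{x_{1:j-1},1})$; in both cases the numerator is $\gamma_{x_{1:j}}(z_{1:m},\alpha)$, yielding the claimed ratio. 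Assembling the product over $j=1,\ldots,n$, dividing by $\ell_{x_{1:n}}$, and summing against $p(n\mid z_{1:m})$ produces \eqref{e:PM}.

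The main obstacle is not conceptual but notational: one must carefully track that the shorthand $\gamma_{x_{1:j}}$ refers to $\gamma_{x_{1:j-1},x_j}$ with the value $x_j\in\{0,1\}$ dictated by the point $x$ at which the density is evaluated, and that a single beta variable at level $j$ contributes either its mean or one minus its mean depending on $x_j$; once this is noted, the two cases collapse to a single uniform expression. A small bookkeeping point is to verify the $n=0$ term, where the empty product equals $1$ and $\ell_{x_{1:0}}=1$, so the formula correctly contributes $p(0\mid z_{1:m})$.
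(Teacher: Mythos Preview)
Your proposal is correct and follows essentially the same route as the paper's proof: tower over $N$ using $p(n\mid z_{1:m})$, substitute \eqref{e:finite_pt_N}, and evaluate the inner expectation via the posterior $\mathrm{FBS}$ from \eqref{e:post-next1}. Your write-up is in fact more explicit than the paper's about the independence of the factors and the beta-mean computation that collapses the two cases $x_j\in\{0,1\}$ into the single ratio $\gamma_{x_{1:j}}/(\gamma_{x_{1:j-1},0}+\gamma_{x_{1:j-1},1})$.
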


\begin{remark}\label{rem:point_estimator}
In the right hand side of \eqref{e:PM}, the term $p(n \mid z_{1:m})$ 
may be calculated as discussed in Remark~\ref{r:sim1}. Hence, the posterior mean in \eqref{e:PM} may be determined either by a numerical approximation or by the Monte Carlo estimate of 
$$\frac{1}{\ell_{x_{1:N}}}
\prod_{j=1}^N \frac{\gamma_{x_{1:j}}(z_{1:m}, \alpha)}{\gamma_{x_{1:j-1},0}(z_{1:m}, \alpha) + \gamma_{x_{1:j-1},1}(z_{1:m}, \alpha)}$$
when $N$ is sampled from $p(n\mid z_{1:m})$. We call such an estimate for a density estimate of the posterior GFPT1 PDF (or just density estimate).
\end{remark}

\subsubsection{Posterior simulation under a GFPT2 prior}\label{s:bayes_2}

Now, consider the posterior distribution of $(Y,R,N)$ for the case \eqref{e:sample2} where a GFPT2 prior has been specified.
  The following theorem derives the posterior distribution of first $N\mid R$, second $R\mid N$, and third $Y\mid (N,R)$, where we use the following notation. 
Recall the definitions \eqref{e:xx} and \eqref{e:gg}.
Let $R_0=(R_{x_{1:n-1},0}\mid x_{1:n-1}\in\{0,1\}^{n-1}, n\in\mathbb N)$. 
For $j\in\mathbb N_0$, to stress the dependence of $R^{(j)}_0$ (or equivalently of $R^{(j)}$), we use capital letters: For $x_{1:j}\in\{0,1\}^n$, {\color{black}let $N_{x_{1:j}} = n_{x_{1:j}}(z_{1:m})$ be the the number of observations $z_i$ falling in $I_{x_{1:j}}$, which depends on $R^{(j)}$.
Similarly, let $\Gamma_{x_{1:j}} = \gamma_{x_{1:j}}(z_{1:m},\alpha)$ and $L_{x_{1:j}} = \ell_{x_{1:j}}$, which both depend  
 on $R^{(j)}$. For $n\in\mathbb N$, $Y^{(n)}$ can be identified by $Y^{(n)}_0=(Y_{x_{1:j-1},0}\mid x_{1:j-1}\in\{0,1\}^{j-1}, j=1,\ldots,n)$
since $Y_{x_{1:j-1},1}=1-Y_{x_{1:j-1},0}$. Let $Y^{(0)}_0=Y^{(0)}=\emptyset$. For $n\in\mathbb N_0$, the dimension of $Y^{(n)}_0$ is $2^n-1$. 
	For $j\in\mathbb N$ and $x_{1:j-1}\in\{0,1\}^{j-1}$, let $\nu_{x_{1:j-1}}$ be the Lebesgue measure on $[0,1)$ if $\alpha_{x_{1:j-1},0}>0$ and $\alpha_{x_{1:j-1},1}>0$, and $\nu_{x_{1:j-1}}$ be the Dirac measure concentrated at $k\in\{0,1\}$ if $\alpha_{x_{1:j-1},k}=0$ and $\alpha_{x_{1:j-1},1-k}>0$.
	For $n\in\mathbb N$, let $A_n=[0,1)^{2^n-1}$ be equipped with the corresponding Borel $\sigma$-algebra $\mathcal F_n$. Let $\mu_n$ be the product measure on $\mathcal F_n$ given by 
	\begin{equation}\label{e:mun}
	\mu_n=\prod_{j=1}^n\prod_{{x_{1:j-1}}\in\{0,1\}^{j-1}}\nu_{x_{1:j-1}}.
	\end{equation}  
	Equip $A_0=[0,1)^0=\{\emptyset\}$ with the trivial $\sigma$-algebra $\mathcal F_0$ and
	let $\mu_0=\mathrm{FBS}(\alpha^{(0)})$, cf.\ Definition~\ref{def:cpt2}. 

\begin{theorem}\label{t:3}
Consider the case \eqref{e:sample2}. Then
 $N$ conditioned on both  $R$  and
$Z_{1:m}=z_{1:m}$ has a PMF $p(n \mid R,z_{1:m})$ with $p(0\mid R, z_{1:m})\propto p_0$ and for every $n\in\mathbb N$, 
\begin{equation}\label{e:post_n_2}
p(n \mid R,z_{1:m})  \propto p_n \frac{\prod_{j=1}^n \prod_{x_{1:j-1} \in \{0, 1\}^{j-1}} \mathrm{beta}\big(\Gamma_{x_{1:j-1},0}, \Gamma_{x_{1:j-1},1}\big)}{\prod_{x_{1:n}\in \{0, 1\}^n} L_{x_{1:n}}^{N_{x_{1:n}}}}
\end{equation}
where the constant of proportionality depends on $(R,z_{1:m})$.
Further, for any $n\in\mathbb N_0$ with $p_n>0$, conditioned on both $N=n$ and $Z_{1:m}=z_{1:m}$, we have that $R^{(n)}$ and $R^{(>n)}$ are independent,
where
\begin{equation}\label{e:r_ge_n}
\hspace{1.2cm}R^{(>n)}\mid N=n,Z_{1:m}=z_{1:m}\sim\mathrm{BS}(\beta^{>(n)})
\end{equation} 
depends only on $n$, and where
$R^{(n)}$ or more precisely
$R_0^{(n)}$ has a density with respect to $\mu_n$ which is given by 
$p(r_0^{(n)}\mid n, z_{1:m})\propto 1$ when $n=0$ and for every $r_0^{(n)}\in(0,1)^{2^n-1}$ when $n>0$ by 
	\begin{equation}\label{e:post_r_2}
	p(r_0^{(n)} \mid n, z_{1:m})\propto
	\frac{\prod_{j=1}^n  \prod_{x_{1:j-1} \in \{0, 1\}^{j-1}} \mathrm{beta}\big(\Gamma_{x_{1:j-1},0}, \Gamma_{x_{1:j-1},1}\big) r_{x_{1:j-1, 0}}^{\beta_{x_{1:j-1, 0}}-1}r_{x_{1:j-1, 1}}^{\beta_{x_{1:j-1, 1}}-1}}{\prod_{x_{1:n}\in \{0, 1\}^n} L_{x_{1:n}}^{N_{x_{1:n}}}}
	\end{equation}
	where the constant of proportionality depends on $(n, z_{1:m})$ and where $L_{x_{1:n}}$ depends on $R_0^{(n)}=r_0^{(n)}$.
Finally, 
conditioned on both $(R,N)$ and $Z_{1:m}=z_{1:m}$, we have that 
\begin{equation}\label{e:hh1}
\hspace{0.0cm}\mbox{$Y^{(N)}\mid R,N,Z_{1:m}=z_{1:m}\sim\mathrm{FBS}(\Gamma^{(N)})$}
\end{equation} 
and 
\begin{equation}\label{e:hh2}
\mbox{$Y^{(>N)}\mid R,N,Z_{1:m}=z_{1:m}\sim\mathrm{BS}(\alpha^{(>N)})$}
\end{equation}
 are independent, where $\Gamma^{(N)}$ depends only on $R$ through $R_0^{(N)}$.
\end{theorem}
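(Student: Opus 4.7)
The plan is to mirror the proof of Theorem~\ref{t:2}, treating $R$ as an additional stochastic layer and exploiting Beta--Bernoulli conjugacy throughout. Conditional on $R$, the GFPT2 model reduces to a GFPT1 with (random) interval lengths $L_{x_{1:n}}$ determined by $R^{(n)}$, so the core computation---marginalising $Y^{(n)}$ to identify the posterior of $N$---is essentially the same, after which we additionally marginalise or condition in $R$.

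First I would write the joint density of $(Z_{1:m}, Y, R, N)$ with respect to the product of Lebesgue measure on $Z_{1:m}$, the measure $\mu_n$ on $Y^{(n)}$ (see~\eqref{e:mun}), Lebesgue measure on $R$, and counting measure on $N$. Grouping the $m$ iid factors of the likelihood along the branches of the tree and telescoping the exponents via $\sum_{x_{j+1:n}}N_{x_{1:j-1},k,x_{j+1:n}}=N_{x_{1:j-1},k}$, the conditional likelihood collapses to
\[
p(z_{1:m}\mid y^{(n)},r^{(n)},n)=\frac{\prod_{j=1}^n\prod_{x_{1:j-1}\in\{0,1\}^{j-1}} y_{x_{1:j-1},0}^{N_{x_{1:j-1},0}}\, y_{x_{1:j-1},1}^{N_{x_{1:j-1},1}}}{\prod_{x_{1:n}\in\{0,1\}^n} L_{x_{1:n}}^{N_{x_{1:n}}}}.
\]
Integrating out $Y$ against its prior is then a product of independent Beta integrals ($Y^{(>n)}$ trivially integrates to $1$ against its own prior), producing factors $\mathrm b(\Gamma_{x_{1:j-1},0},\Gamma_{x_{1:j-1},1})$ in the numerator. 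Fixing $(R, z_{1:m})$ and absorbing all $n$-free quantities into the proportionality constant yields~\eqref{e:post_n_2}.

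For the conditional law of $R$ given $(N=n, Z_{1:m})$, the likelihood depends on $R$ only through $R^{(n)}$, which determines both the interval lengths and the counts $N_{x_{1:n}}$. Hence $R^{(>n)}$ retains its prior $\mathrm{BS}(\beta^{(>n)})$ and is independent of $R^{(n)}$, establishing~\eqref{e:r_ge_n}. The posterior density of $R^{(n)}_0$ is then proportional to the prior $\prod_{j,\,x_{1:j-1}} r_{x_{1:j-1},0}^{\beta_{x_{1:j-1},0}-1}r_{x_{1:j-1},1}^{\beta_{x_{1:j-1},1}-1}$ times the marginal likelihood just computed, giving~\eqref{e:post_r_2}. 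Finally, conditioning further on $R$, the joint posterior of $Y$ factorises over nodes into independent Beta updates with parameters $(\Gamma_{x_{1:j-1},0}, \Gamma_{x_{1:j-1},1})$, giving~\eqref{e:hh1}, while $Y^{(>N)}$ retains its prior and is independent of $Y^{(N)}$ since it never enters the likelihood, giving~\eqref{e:hh2}.

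The main obstacle I anticipate is the measure-theoretic bookkeeping in the degenerate case $\alpha_{x_{1:j-1},k}=0$: there $Y_{x_{1:j-1},k}$ is a point mass at $1-k$, so integration against $\nu_{x_{1:j-1}}$ is a Dirac evaluation rather than Lebesgue, and one must verify that the convention $\mathrm b(0,x)=1$ is precisely what makes the formula extend continuously from the non-degenerate case. This step also requires invoking the a.s.\ vanishing of $N_{x_{1:j-1},k}$ under $\alpha_{x_{1:j-1},k}=0$ noted at the start of Section~\ref{s:Bayes}, so that the offending $0^0$ in the likelihood is resolved via the convention $0^0=1$ from Section~\ref{s:notation}.
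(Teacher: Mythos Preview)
Your proposal is correct and follows essentially the same approach as the paper: the paper's own proof is a two-sentence sketch stating that one marginalises out $Y$ from the posterior of $(N,Y,R)$ exactly as in Theorem~\ref{t:2} to obtain the first two statements, and that the last statement follows immediately from Theorem~\ref{t:2}. Your write-up is in fact considerably more detailed than the paper's, including the careful handling of the degenerate $\alpha_{x_{1:j-1},k}=0$ case, which the paper does not spell out.
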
 


\begin{remark}\label{rem:mcmc-gfpt2-collapsed}
We use a Metropolis-Hastings within Gibbs sampler which alternates between updating
	\begin{enumerate}
		\item[(I)] $N\mid R,Z_{1:m}=z_{1:m}$,
		\item[(II)] $R\mid N,Z_{1:m}=z_{1:m}$,
		\item[(III)] $Y\mid R,N,Z_{1:m}=z_{1:m}$.
	\end{enumerate}
Steps (I) and (III) are done in the same way as in Remark~\ref{r:sim1}, using \eqref{e:post_n_2} for step (I) and \eqref{e:hh1}--\eqref{e:hh2} for step (III).	In step (II), $R^{(>n)} \mid N = n, Z_{1:m} = z_{1:m}$ is simply distributed as in \eqref{e:r_ge_n} and is independent of $R^{(n)} \mid N = n, Z_{1:m} = z_{1:m}$, which follows the unnormalized density \eqref{e:post_r_2}, where we use a Metropolis-Hastings update. 


%
\end{remark}

\subsubsection{Posterior simulation under a MBPT prior}

{\color{black} 
Posterior computation is straightforward under a MBPT prior as given in Section~\ref{sec:mbpt}. Indeed,  $(Y, N)$ and $\omega$ are independent also a posteriori, with the posterior of $(Y, N) \mid Z_{1:m}$ given by a slight modification of Theorem \ref{t:2}. Moreover, $\omega \mid Z_{1:m} \sim \mathrm{Dir}(\eta^p)$ where $\eta^p_k = \eta_k + \sum_{i=1}^m I[M_i = k]$ for $k \in T$. }

\subsection{Consistency for the posterior of $N$}\label{sec:cons_n}

In this section, 
 we consider the posterior distribution for the case \eqref{e:sample1} where a GFPT1 prior has been specified but assume 
$Z_1, \ldots, Z_m$ are i.i.d.\ according to a PDF $f^*$, which we refer to as the true distribution of $Z_{1:m}$. We focus on the asymptotic behaviour of the latent variable $N$ when $m\rightarrow\infty$. At the end of Section~\ref{s:bounded-or-not} (Remark~\ref{r:13}) we 
motivate the use of GFPT2 priors.

We assume that $f^*$ {\color{black}is} Lebesgue almost everywhere LSC so that Theorem~\ref{t:1} applies. {\color{black} We use the notation $N$ for both the case of the GFPT1 model, with $N$ as in Definition~\ref{def:5}, and the case of the true model, with $N$ given by the coupling construction in Theorem~\ref{t:1} where $f$ is replaced by $f^*$ when considering the term $c_{x_{1:n}}$ in \eqref{e:res1}. This should not cause any confusion, since we
 denote probabilities calculated with respect to the true distribution by $\mathrm P^*$, while $\mathrm P$ still refers to the distribution under the GFPT1 model. We also use the notation $\mathrm P^*$ when considering the distribution of the stochastic process $Z_1,Z_2,\ldots$ under the true model.} 
 
 Recall that $p_n=\mathrm P(N=n)$ is the PMF of $N$ 
 and  $p(n\mid Z_{1:m})=\mathrm P(N=n\mid Z_{1:m})$ is the PMF under the posterior distribution of $N$, cf.\ \eqref{e:post_n_1}.
Let $\mathcal N$ be the support of the prior for $N$.
For the following Sections~\ref{s:bounded-or-not}--\ref{s:round-off}, assume that for some $0 \le n_{\min} \le n_{\max}\le\infty$ and some $K>0$,
\begin{equation}\label{e:assumption}
                \text{$\mathcal N = \{n_{\min}, \ldots, n_{\max}\}$ where if $n_{\max}=\infty$ then $ 0 < p_n < K p_{n+1}$ for all $n\in\mathcal N$}.
\end{equation}
The last inequality in \eqref{e:assumption} is akin of local stability condition, cf.\ \cite{JM}.

\subsubsection{When $N$ is bounded or not}\label{s:bounded-or-not}

The following theorems consider consistency of the posterior distribution of $N$ under the true distribution and depending on whether $N$ is bounded (Theorem~\ref{t:consistency}) or not (Theorem~\ref{t:inf_n}). 

\begin{theorem}\label{t:consistency}
Suppose that  under the true distribution $n^*\in\mathbb N_0$ digits are sufficient in the sense that
\begin{equation}\label{e:casea}
\mathrm P^*(N< n^*)<\mathrm P^*(N\le n^*)=1,
\end{equation}
cf.\ Remark~\ref{r:suff}.
In addition, assume $n^*\in\mathcal N$ 
and the $\alpha_{x_{1:n}}$ with $n\in\mathbb N$ and $x_{1:n}\in\{0,1\}^n$ are  bounded. Then 
\begin{equation}\label{e:cons}
	\mathrm P^*\left(\lim_{m \rightarrow \infty} p(n^* \mid Z_{1:m}) = 1\right)=1
\end{equation}
where $p(n^* \mid Z_{1:m})$ is given by \eqref{e:post_n_1}.
\end{theorem}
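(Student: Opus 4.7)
The plan is to establish \eqref{e:cons} by showing that the posterior odds $r_n(Z_{1:m}) := p(n \mid Z_{1:m})/p(n^* \mid Z_{1:m})$ converge almost surely to zero for every $n \in \mathcal N \setminus \{n^*\}$, and, when $n_{\max} = \infty$, that the tail $\sum_{n > n^*} r_n(Z_{1:m})$ also tends to zero almost surely. Since $p(n^* \mid Z_{1:m}) = (1 + \sum_{n \ne n^*} r_n(Z_{1:m}))^{-1}$, this implies \eqref{e:cons}. From \eqref{e:post_n_1} one has $r_n = (p_n/p_{n^*}) \cdot B_n L_{n^*}/(B_{n^*} L_n)$, where $B_n$ collects the beta functions and $L_n = \prod_{x_{1:n}} \ell_{x_{1:n}}^{n_{x_{1:n}}(Z_{1:m})}$; I would interpret $M_n := B_n/L_n$ as the marginal likelihood of the piecewise-constant density model at resolution $n$.

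For $n < n^*$, if $f^*$ were piecewise constant on $I^{(n)}$ then Theorem~\ref{t:1} would give $\mathrm P^*(N \le n) = 1$, contradicting $\mathrm P^*(N < n^*) < 1$; hence the KL projection $f^{*(n)}$ of $f^*$ onto such densities satisfies $D_n := D_{\mathrm{KL}}(f^* \| f^{*(n)}) > 0$. The SLLN $n_{x_{1:n}}(Z_{1:m})/m \to p^*_{x_{1:n}}$ together with Stirling applied to $B_n$ (valid since the $\alpha$'s are bounded) then yields $m^{-1}\log(M_n/M_{n^*}) \to -D_n$ almost surely, so $r_n(Z_{1:m})$ decays exponentially in $m$ and dominates the fixed factor $p_n/p_{n^*}$.

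For $n > n^*$ fixed, the corresponding KL vanishes and a finer analysis is required. Since $n^*$ is sufficient, $f^*$ is constant on each $I_{x_{1:n^*}}$ with $p^*_{x_{1:n^*}} > 0$ and, conditionally on $Z_i \in I_{x_{1:n^*}}$, the subsequent digits are uniform (cf.\ \eqref{e:unifremainder}); thus $n_{x_{1:j-1},k}(Z_{1:m})/n_{x_{1:j-1}}(Z_{1:m}) \to \ell_{x_{1:j-1},k}/\ell_{x_{1:j-1}}$ almost surely for $j > n^*$. Stirling applied to each additional beta factor $\mathrm b(\gamma_{x_{1:j-1},0}(Z_{1:m},\alpha), \gamma_{x_{1:j-1},1}(Z_{1:m},\alpha))$ then shows that each extra split at a block of positive true mass contributes $-\tfrac12 \log m + O(1)$ almost surely to $\log(M_n/M_{n^*})$, so $r_n(Z_{1:m}) = O(m^{-K_n/2})$ for a positive integer $K_n$ counting these effective splits.

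The main obstacle will be the case $n_{\max} = \infty$, where the tail must be controlled uniformly in $n$. The prior bound $p_n < K p_{n+1}$ only yields $p_n/p_{n^*} \le K^{n-n^*}$, while for very large $n$ most blocks contain no or one observation and the Stirling argument above breaks down. I would split the tail at a slowly growing threshold $n(m)$: for $n^* < n \le n(m)$ the polynomial decay from the previous paragraph dominates the prior growth $K^{n-n^*}$ and gives summability; for $n > n(m)$, with bounded $\alpha$'s and at most $m$ data-modified factors, $B_n$ is controlled by a product of prior beta values while $L_{n^*}/L_n$ grows at most geometrically in $n$ with rate determined by the minimum interval length, so a direct estimate bounds $B_n L_{n^*}/(B_{n^*} L_n)$ by a geometric sequence in $n$ that, with a careful choice of $n(m)$, still dominates $K^{n-n^*}$ and forces the extreme tail to vanish as $m \to \infty$.
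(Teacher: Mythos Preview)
For each fixed $n\neq n^*$ your argument is essentially the paper's: SLLN on the counts $n_{x_{1:j}}(Z_{1:m})/m$ plus Stirling on the beta factors (legitimate because the $\alpha$'s are bounded), giving exponential decay of $r_n$ for $n<n^*$ and polynomial decay $r_n=O(m^{-K_n/2})$ for $n>n^*$. The paper organises the same computation through consecutive ratios $p(n\mid Z_{1:m})/p(n+1\mid Z_{1:m})$ and, for $n<n^*$, argues by contradiction that some split must satisfy $\ell_{x_{1:j},0}/\ell_{x_{1:j}}\neq \mathrm P^*(I_{x_{1:j},0})/\mathrm P^*(I_{x_{1:j}})$; your KL formulation $D_n>0$ is an equivalent way of saying the same thing.

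Where you depart from the paper is the tail $\sum_{n>n^*}r_n$ when $n_{\max}=\infty$. The paper simply asserts that ``$p(n^*\mid Z_{1:m})/p(\tilde n\mid Z_{1:m})\to\infty$ for every fixed $\tilde n$'' is equivalent to $p(n^*\mid Z_{1:m})\to1$ and proves only the former; so on this point you are in fact more careful than the paper. That said, your tail sketch has concrete problems. First, the local stability condition $p_n<Kp_{n+1}$ gives $p_n/p_{n^*}>K^{-(n-n^*)}$, a \emph{lower} bound, not the upper bound $p_n/p_{n^*}\le K^{n-n^*}$ you state; the only upper bound available is the trivial $p_n/p_{n^*}\le 1/p_{n^*}$. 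Second, there is no ``minimum interval length'' to invoke, since $\ell_{x_{1:n}}\to0$; for the standard dyadic NBP one has $L_{n^*}/L_n=2^{(n-n^*)m}$, whose geometric rate in $n$ is $2^m$ and therefore $m$-dependent, so it cannot be dominated by a fixed geometric sequence in $n$ as you propose. A genuine uniform-in-$n$ control of $M_n/M_{n^*}$ is what is needed, and neither your sketch nor the paper's proof supplies it.
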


\begin{remark}  By Theorem~\ref{t:1}, \eqref{e:casea} is equivalent to that $f^*$ is constant over all subintervals of the NBP $I$ after level $n^*$, that is, for every $x_{1:n^*}\in\{0,1\}^{n^*}$, $f^*$ is constant on $I_{x_{1:n^*}}$ and if $n^*>0$ then there is some
$x_{1:(n^*-1)}\in\{0,1\}^{n^*-1}$ so that $f^*$ is not constant on $I_{x_{1:(n^*-1)}}$. We interpret \eqref{e:cons} as it is asymptotic consistent to estimate the sufficient number of digits by the posterior distribution of $N$. 
\end{remark}

 \begin{theorem}\label{t:inf_n}
	  If $\mathrm P^*(N \le n) < 1$ for all $n \in \mathbb N_0$, then for any $n\in\mathbb N_0$,
	  \begin{equation}\label{e:inf_n}
		  \mathrm P^* \left( \lim_{m \rightarrow \infty} p(n \mid Z_{1:m}) > 0 \right){\color{black} = 0.}
	  \end{equation}
  \end{theorem}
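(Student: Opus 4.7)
The plan is to reduce the claim to a comparison of likelihood ratios. From \eqref{e:post_n_1}, $p(n\mid Z_{1:m})\propto p_n L_n(Z_{1:m})$ where
\[
L_n(Z_{1:m})=\frac{\prod_{j=1}^{n}\prod_{x_{1:j-1}\in\{0,1\}^{j-1}}\mathrm b\bigl(\gamma_{x_{1:j-1},0}(Z_{1:m},\alpha),\gamma_{x_{1:j-1},1}(Z_{1:m},\alpha)\bigr)}{\prod_{x_{1:n}\in\{0,1\}^n}\ell_{x_{1:n}}^{n_{x_{1:n}}(Z_{1:m})}}
\]
is the data-dependent factor from \eqref{e:post_n_1}. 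If, for each fixed $n\in\mathbb N_0$, one exhibits some $n'\in\mathcal N$ with $n'>n$ for which $L_{n'}(Z_{1:m})/L_n(Z_{1:m})\to\infty$ $\mathrm P^*$-almost surely, then the bound $p(n\mid Z_{1:m})\le (p_n/p_{n'})L_n(Z_{1:m})/L_{n'}(Z_{1:m})\to 0$ a.s.\ yields the theorem. The stated conclusion is consistent with \eqref{e:assumption} only if $n_{\max}=\infty$ (else the finitely many posterior masses $p(n\mid Z_{1:m})$, $n\in\mathcal N$, cannot all vanish in the limit), so I assume $n_{\max}=\infty$ and $\mathcal N$ contains every sufficiently large integer.

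Setting $q^*_{x_{1:j}}=\int_{I_{x_{1:j}}}f^*(x)\,\mathrm dx$, the strong law of large numbers gives $n_{x_{1:j}}(Z_{1:m})/m\to q^*_{x_{1:j}}$ $\mathrm P^*$-a.s.\ for every finite path $x_{1:j}\in\{0,1\}^j$. Applying Stirling's formula to each $\log\mathrm b(\gamma_{x_{1:j-1},0},\gamma_{x_{1:j-1},1})$ (the fixed $\alpha_{x_{1:j-1},k}$ producing only $O(\log m)$ corrections), combined with the standard chain rule for entropy which decomposes $-\sum_{x_{1:n}}q^*_{x_{1:n}}\log q^*_{x_{1:n}}$ into a sum of conditional binary entropies along the binary tree, yields
\[
\frac{1}{m}\log L_n(Z_{1:m})\xrightarrow[m\to\infty]{\mathrm P^*\text{-a.s.}}\ell^*(n):=\sum_{x_{1:n}\in\{0,1\}^n}q^*_{x_{1:n}}\log\frac{q^*_{x_{1:n}}}{\ell_{x_{1:n}}}=\int f^*(x)\log\bar f^*_n(x)\,\mathrm dx,
\]
where $\bar f^*_n(x):=q^*_{x_{1:n}}/\ell_{x_{1:n}}$ for $x\in I_{x_{1:n}}$ is the piecewise-constant projection of $f^*$ onto $\mathcal F_n:=\sigma(I^{(n)})$. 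The information inequality (non-negativity of KL divergence) gives $\ell^*(n')\ge\ell^*(n)$ whenever $n'\ge n$, with strict inequality if and only if $\bar f^*_n\ne\bar f^*_{n'}$ Lebesgue-almost everywhere.

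To exhibit the required $n'$, note that $(\bar f^*_k)_{k\ge 0}$ is a martingale with respect to the increasing filtration $(\mathcal F_k)$, and since the cells shrink to points (Remark~\ref{r:1}), the L\'evy upward theorem gives $\bar f^*_k\to f^*$ Lebesgue-almost everywhere. If $\bar f^*_k=\bar f^*_{n_0}$ for every $k\ge n_0$, then $f^*=\bar f^*_{n_0}$ a.e.\ would be piecewise constant on $I^{(n_0)}$, forcing $c_{x_{1:k}}=0$ for every $k>n_0$ in Theorem~\ref{t:1} and hence $\mathrm P^*(N\le n_0)=1$, contradicting the hypothesis. So some $k\ge n$ satisfies $\bar f^*_{k+1}\ne\bar f^*_k$; the equality $\bar f^*_n=\bar f^*_{k+1}$ would force $\bar f^*_{k+1}$ to be $\mathcal F_n$-measurable, hence $\mathcal F_k$-measurable, yielding $\bar f^*_k=E[\bar f^*_{k+1}\mid\mathcal F_k]=\bar f^*_{k+1}$, a contradiction. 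Hence $\bar f^*_n\ne\bar f^*_{k+1}$, and $n':=k+1\in\mathcal N$ satisfies $\ell^*(n')-\ell^*(n)>0$. Combining with the first paragraph gives $m^{-1}\log(L_{n'}(Z_{1:m})/L_n(Z_{1:m}))\to\ell^*(n')-\ell^*(n)>0$ a.s., so the ratio diverges exponentially and $p(n\mid Z_{1:m})\to 0$ a.s., i.e.\ $\mathrm P^*(\lim_m p(n\mid Z_{1:m})>0)=0$. The main technical obstacle is the Stirling step: one must uniformly control the remainder across the finitely many cells at level $n$ and separately handle cells with $q^*_{x_{1:j-1}}=0$, where the counts stay bounded in $m$ and Stirling does not apply directly; a direct computation shows such cells contribute only $O(1)$ to $\log L_n$ and are absorbed into the $o(m)$ remainder.
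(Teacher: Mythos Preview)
Your argument is correct and follows the same overall strategy as the paper: compare $p(n\mid Z_{1:m})$ with $p(n'\mid Z_{1:m})$ for some $n'>n$ via Stirling's approximation applied to the beta functions in \eqref{e:post_n_1}. The paper's proof is a one-liner that refers back to the $n<n^*$ case of Theorem~\ref{t:consistency} and asserts that $p(n\mid Z_{1:m})/p(n+1\mid Z_{1:m})\to 0$ for every $n$.

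Your write-up is, however, more careful on one point. The paper's claim that the \emph{single-step} ratio $p(n\mid Z_{1:m})/p(n+1\mid Z_{1:m})$ tends to zero for every $n$ need not hold: if $\bar f^*_{n+1}=\bar f^*_n$ (which can happen for some levels even when $N$ is unbounded under $\mathrm P^*$, e.g.\ for a symmetric continuous $f^*$ with $q^*_0=q^*_1=\tfrac12$ under the standard dyadic partitions), then the computation in the proof of Theorem~\ref{t:consistency} gives an $O(m^{1/2})$ factor rather than exponential decay. Your approach sidesteps this by explicitly locating some $n'>n$ with $\ell^*(n')>\ell^*(n)$ via the martingale/L\'evy upward argument, which is exactly what is needed. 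The identification $\ell^*(n)=\int f^*\log\bar f^*_n$ and the observation that $\ell^*(n')-\ell^*(n)=\mathrm{KL}(\bar f^*_{n'}\|\bar f^*_n)$ is a clean repackaging of the cell-by-cell analysis in the paper's proof of Theorem~\ref{t:consistency}. Your remark that the result forces $n_{\max}=\infty$ is also a useful sanity check that the paper leaves implicit.
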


  \begin{remark}\label{r:13}
The  condition in Theorem~\ref{t:inf_n} states that under the true distribution, $N$ is unbounded. Hence,  
  \eqref{e:inf_n} establishes asymptotic consistency, since a posteriori $N$ can be as large as it should be as $m$ increases.
  
	The class of PDFs $f^*$ for which Theorem~\ref{t:inf_n} applies encompasses piecewise constant PDFs over intervals $J_1,\ldots,J_K$ that do not agree with the NBP $I$. That is, $f^*(x) = f_k$ if $x \in J_k$, and there exists at least one $J_k$ such that $J_k \neq I_{x_{1:n}}$ for every $n \in \mathbb N_0$ and $x_{1:n} \in \{0, 1\}^n$.
	
	For example, consider $I$ to be the standard diadic partitions of $[0, 1)$ and $f^*$ to be the PDF of a $\mathrm{Unif}[0,0.2)$ random variable. 
	 Then the condition in Theorem~\ref{t:inf_n} is satisfied, so under the GFPT1 prior and  as $m\to\infty$, a posteriori $N$ is unbounded. On the other hand, we
	show numerically  in Section~\ref{sec:simu2} that under {\color{black} various} GFPT2 priors,
 	a posteriori $N$ is bounded (effectively always smaller than 4) even if $m$ is large. Thus, the use of GFPT2 models that adaptively ``learn the nested partitions'' is appealing.
\end{remark}

\subsubsection{Round-off errors}\label{s:round-off}

We now analyse the impact on posterior inference caused by two types of round-off errors in the data. 
{\color{black}Specifically, we assume that data are observed with a precision of $\bar n\in\mathbb N_0$ digits and consider two scenarios. In the first one, we ignore the round-off and assume to have observed the data perfectly: We show in Theorem~\ref{thm:cons2} that this leads to an inconsistent posterior for $N$ as $m\rightarrow\infty$ even if $n^* \leq \bar n$.
In the second scenario, we explicitly account for the round-off and acknowledge that data contain no information beyond digit $\bar n$. That is, we assume that we do not actually observe an i.i.d. sample from $f^*$ but rather from $\bar f^*$, which is a piecewise constant approximation of $f^*$ at level $\bar n$ of the nested binary partition. 
In such a case, if $n^* \leq \bar n$, we obtain posterior consistency for $N$ in Corollary~\ref{cor:roundoff}.
}

\begin{theorem}\label{thm:cons2}
	Assume $\mathcal N=\mathbb N_0$ and the data are observed with a precision of $\bar n\in\mathbb N_0$ digits such that for $i=1,\ldots,m$, if $\bar n>0$ then for some $(x^i_1,x^i_2,\ldots, x^i_{\bar n})\in\{0,1\}^{\bar n}$,
	\begin{equation}\label{e:caseb}
	z_i = .x^i_1x^i_2\ldots x^i_{\bar n}00\ldots,
	\end{equation}
	and if $\bar n=0$ then $z_i =0$. 
	Correspondingly, let $\tilde Z_i$ be given by the first $\bar n$ digits of $Z_i$ and by 0's for the remaining digits.
	 Then, for every $n\in\mathbb N_0$,
	\[
		\mathrm P^*\left( \lim_{m\rightarrow\infty}p(n\mid \tilde Z_{1:m})=0 \right)=1.
	\]
	\end{theorem}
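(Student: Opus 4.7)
The plan is to exhibit, for each fixed $n\in\mathbb N_0$, an integer $n^*>\max\{n,\bar n\}$ (depending on $n$, $\bar n$, and $f^*$, but not on $m$) such that
\[
\lim_{m\to\infty}\frac{p(n^*\mid\tilde Z_{1:m})}{p(n\mid\tilde Z_{1:m})}=+\infty\qquad\mathrm P^*\text{-a.s.}
\]
Since $p(n^*\mid\tilde Z_{1:m})\le 1$ this immediately gives $p(n\mid\tilde Z_{1:m})\to 0$ a.s., as required. Using \eqref{e:post_n_1} and cancelling the factors common to both levels, the ratio above equals $(p_{n^*}/p_n)\cdot B_m\cdot D_m$, where $B_m=\prod_{j=n+1}^{n^*}\prod_{x_{1:j-1}}\mathrm b(\gamma_{x_{1:j-1},0},\gamma_{x_{1:j-1},1})$ and $D_m=\prod_{x_{1:n}}\ell_{x_{1:n}}^{n_{x_{1:n}}(\tilde Z_{1:m})}/\prod_{x_{1:n^*}}\ell_{x_{1:n^*}}^{n_{x_{1:n^*}}(\tilde Z_{1:m})}$. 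The prior ratio $p_{n^*}/p_n$ is bounded away from zero by iterating $p_{n+1}/p_n>1/K$ from \eqref{e:assumption}.

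The key combinatorial observation is that, since $\tilde Z_i=.x_1^i\cdots x_{\bar n}^i 00\cdots$, at every level $k\ge\bar n$ only the ``left-spine'' cells $(x_{1:\bar n},0^{k-\bar n})$ are nonempty, each carrying the same count as its level-$\bar n$ ancestor. Hence
\[
D_m=\prod_{x_{1:\bar n}\in\{0,1\}^{\bar n}}\Bigl(\frac{\ell_{x_{1:n}'}}{\ell_{(x_{1:\bar n},0^{n^*-\bar n})}}\Bigr)^{\!n_{x_{1:\bar n}}(\tilde Z_{1:m})},
\]
where $x_{1:n}'$ stands for the length-$n$ prefix of $x_{1:\bar n}$ if $n<\bar n$ and for $(x_{1:\bar n},0^{n-\bar n})$ if $n\ge\bar n$. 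Since $\ell_{(x_{1:\bar n},0^{k-\bar n})}\to 0$ as $k\to\infty$ by Remark~\ref{r:1}, for any constant $A>0$ one may pick $n^*$ large enough that each bracketed ratio is at least $A$; together with $\sum_{x_{1:\bar n}}n_{x_{1:\bar n}}(\tilde Z_{1:m})=m$ this forces $D_m\ge A^m$.

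For $B_m$, split the product at level $\bar n$. For $j\le\bar n$ the cell counts of $\tilde Z_{1:m}$ agree with those of $Z_{1:m}$, so the strong law of large numbers gives $\gamma_{x_{1:j-1},k}(\tilde Z_{1:m},\alpha)/m\to q_{x_{1:j-1},k}:=\mathrm P^*(Z\in I_{x_{1:j-1},k})$ a.s. Combined with Stirling's formula, this yields
\[
\log\mathrm b(\gamma_{x_{1:j-1},0},\gamma_{x_{1:j-1},1})=-m\,q_{x_{1:j-1}}\,\mathcal H\!\left(\tfrac{q_{x_{1:j-1},0}}{q_{x_{1:j-1}}}\right)+o(m),
\]
with $\mathcal H$ the binary entropy (and the obvious modification when some $q_{x_{1:j-1},k}=0$, in which case the leading $m$-term disappears). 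Summing, the $j\le\bar n$ contribution to $\log B_m$ equals $-Cm+o(m)$ for a finite constant $C=C(f^*,\bar n,n)$ that does not depend on $n^*$. For $j>\bar n$, each factor is either a positive constant $\mathrm b(\alpha_{x_{1:j-1},0},\alpha_{x_{1:j-1},1})$ or has the form $\mathrm b(\alpha_{x_{1:j-1},0}+n_{x_{1:\bar n}}(\tilde Z_{1:m}),\alpha_{x_{1:j-1},1})$; using $\mathrm b(a+k,c)\sim\Gamma(c)k^{-c}$ as $k\to\infty$ and the finiteness of $n^*$, the combined log-contribution from these levels is $O(\log m)$.

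Putting the three estimates together,
\[
\log\frac{p(n^*\mid\tilde Z_{1:m})}{p(n\mid\tilde Z_{1:m})}\ge m\log A-Cm+o(m)\qquad\mathrm P^*\text{-a.s.,}
\]
and since $C$ does not depend on $n^*$ we may enlarge $n^*$ (independently of $m$) so that $\log A>C$, whereupon the right-hand side diverges to $+\infty$. The main technical obstacle is the uniform Stirling/LLN bookkeeping, together with the routine handling of cells at levels $j\le\bar n$ where some $q_{x_{1:j-1},k}=0$ and the entropy term degenerates; such contributions are harmless because they are absorbed into the $o(m)$ remainder.
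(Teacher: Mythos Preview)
Your argument is correct and follows the same route as the paper: both compare $p(n\mid\tilde Z_{1:m})$ with the posterior at a larger level via \eqref{e:post_n_1}, exploit that beyond level $\bar n$ only the left-spine cells $(x_{1:\bar n},0^{k-\bar n})$ carry mass, and use Stirling's approximation $\mathrm b(a+k,c)\sim\Gamma(c)k^{-c}$ for those cells. The paper is terser---it absorbs the contribution from levels $n+1,\dots,\bar n$ and the choice of the comparison level into an ``$O(1)$'' factor---whereas you make the entropy constant $C$ explicit and then pick $n^*$ large enough that $\log A>C$; the underlying mechanism is identical.
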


{\color{black}The conclusion of Theorem~\ref{thm:cons2} is unchanged} if we use another fixed digits rule than in \eqref{e:caseb}, e.g., if we replace the 0's by 1's after level $\bar n$. 

{\color{black}We now consider the scenario where we explicitly account for the round-off error.}
The following corollary follows immediately from Theorems~\ref{t:1} and \ref{t:consistency}. For $i=1,\ldots,m$, let $Z_i=.X^i_1X^i_2\ldots$.

\begin{corollary}\label{cor:roundoff}
	Assume the round-off error is treated as being uniformly distributed after level $\bar n\in\mathbb N_0$, that is, under the GFPT1 prior we 
have $N\le\bar n$ and  		
	\begin{equation}\label{eq:casec}
	.X^i_{\bar n+1}  X^i_{\bar n+2} \ldots \iid \mathrm{Unif}[0,1)\quad\mbox{for }i=1,\ldots,m.	
	\end{equation}
	 Further, assume that under the true distribution, in the sense of \eqref{e:casea}, $n^* \le \bar n$ digits are sufficient where $n^*\in\mathcal N$,  and that all the $\alpha_{x_{1:n}}$ are  bounded. Then
	\begin{equation}\label{e:ffff}
		\mathrm P^*\left(\lim_{m \rightarrow \infty} p(n^* \mid Z_{1:m}) = 1\right)=1.
	\end{equation}
\end{corollary}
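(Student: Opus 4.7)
The strategy is to observe that this corollary is essentially a bookkeeping consequence of Theorems~\ref{t:1} and \ref{t:consistency}; the work amounts to checking that the hypotheses of Theorem~\ref{t:consistency} hold in this setting. Under the assumption that the round-off error is modelled as uniform after level $\bar n$ (see \eqref{eq:casec}), the GFPT1 prior is truncated so that its support satisfies $\mathcal N \subseteq \{0, 1, \ldots, \bar n\}$, which is bounded. The hypothesis $n^* \le \bar n$, together with the usual assumption $n^* \in \mathcal N$ (cf.\ \eqref{e:assumption} with $n_{\max} \le \bar n$), places the true number of sufficient digits inside the prior support. By Theorem~\ref{t:1} applied to $f^{\ast}$, the assumption that $n^{\ast}$ digits are sufficient in the sense of \eqref{e:casea} is exactly the condition required on the true coupling $(X,N)$. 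Since the $\alpha_{x_{1:n}}$'s are also bounded by hypothesis, every assumption of Theorem~\ref{t:consistency} is in place, and the conclusion \eqref{e:ffff} is immediate.

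The conceptual content of this corollary, beyond the routine matching of hypotheses, is that the round-off error is being absorbed into the GFPT1 prior itself rather than applied to the data. Under \eqref{eq:casec} the likelihood of $Z_{1:m}$ factors, through \eqref{e:finite_pt_N} and \eqref{e:post_n_1}, into a part depending only on the counts $n_{x_{1:j}}(Z_{1:m})$ for $j \le \bar n$ (hence only on the first $\bar n$ digits of each observation) and a part that is uniform on the tail and therefore contributes a constant to the unnormalized posterior of $N$. Consequently, no information beyond level $\bar n$ enters $p(n\mid Z_{1:m})$, so the posterior of $N$ coincides with the one a Bayesian would compute from the truncated data under a prior supported on $\{0,1,\ldots,\bar n\}$, and Theorem~\ref{t:consistency} can be invoked verbatim.

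This also explains the contrast with Theorem~\ref{thm:cons2}: there, the tail was set deterministically to $0$, which perturbs the counts $n_{x_{1:j}}(\tilde Z_{1:m})$ at levels $j > \bar n$ in a way that is inconsistent with any true PDF and forces $p(n \mid \tilde Z_{1:m}) \to 0$ for every fixed $n$, while here the tail is left genuinely uniform and thus uninformative about $N$. The only real ``obstacle'' in the proof is therefore the conceptual step of recognising that the prior truncation in \eqref{eq:casec} is precisely what allows the machinery of Theorem~\ref{t:consistency} to apply without modification; once that is noted, there are no further calculations to perform.
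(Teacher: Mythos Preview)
Your proposal is correct and takes essentially the same approach as the paper, which simply states that the corollary ``follows immediately from Theorems~\ref{t:1} and \ref{t:consistency}'' without further elaboration. Your additional discussion of how the uniform tail renders the posterior dependent only on counts up to level $\bar n$, and the contrast with Theorem~\ref{thm:cons2}, is accurate and illuminating but goes beyond what the paper provides.
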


\begin{remark}\label{r:support}
Under the GFPT1 prior, $N\le\bar n$ implies \eqref{eq:casec}, cf.\ \eqref{e:sdf}.
Equation \eqref{e:ffff} shows it is asymptotic consistent to estimate the sufficient number of digits by the posterior distribution of $N$. 

Under the true distribution $f^*$, if $n^*$ digits are sufficient, then $n^*$ is the largest possible value of $N$. Ideally we should therefore have that $n^*\in\mathcal N\subseteq\{0, \ldots, \bar n\}$ but in practice we do not know the value of $n^*$ (provided it exists). This suggests to let $\mathcal N=\{0, \ldots, \bar n\}$. This is also an intuitive suggestion as the data contains no information beyond level $n^*$ of the nested binary partitions.

{\color{black}On the other hand, if under the true distribution $n^* > \bar n$, then the posterior of $N$ will concentrate on $\bar n$. This is intuitive as assuming \eqref{eq:casec} means that we do not observe an i.i.d.\ sample from $f^*$ but from a density $\bar f^*$ which is piecewise constant beyond level $\bar n$ of the nested binary partition.}
\end{remark}

  
\subsection{Consistency for the posterior of $f$}\label{s:consistency_f}

In this section we assume the same setting as in Section \ref{sec:cons_n} and consider the random GFPT1 PDF $f(\cdot\mid Y^{(N)})$ and the random GFPT2 PDF $f(\cdot\mid Y^{(N)},R^{(N)})$ given by \eqref{e:finite_pt_N} and \eqref{e:finite_pt_N2}, respectively. 
For $i=1,2$, using the short hand notation $f=f(\cdot\mid Y^{(N)})$ if $i=1$ and $f=f(\cdot\mid Y^{(N)},R^{(N)})$ if $i=2$,  
recall that a posteriori  $f \mid Z_{1:m} \sim \Pi_i(\cdot \mid Z_{1:m})$, cf.\ \eqref{e:postrandomPDF} and \eqref{e:postrandomPDF2}.
  As the sample size increases,
we will show that $\Pi_i(\cdot \mid Z_{1:m})$ concentrates on infinitesimal small neighbourhoods of $f^*$ with respect to the Kolmogorov-Smirnov distance $d_{\mathrm{KS}}$ and the Hellinger distance $d_{\mathrm{H}}$:
For two PDFs $f_1$ and $f_2$ on $[0, 1)$, 
if $F_i(x) = \int_0^x f_i(x)\, \mathrm d x$ for $i=1, 2$ and $0\le x<1$, recall that
\[
	d_{\mathrm{KS}}(f_1, f_2) = \sup_{x \in [0, 1)} | F_1(x) - F_2(x) |
\]
and 
\[
	d_{\mathrm{H}}(f_1, f_2) = \frac{1}{2} \left(\int_0^1 (\sqrt{f_1(x)} - \sqrt{f_2(x)})^2 \mathrm d x\right)^{1/2}.
\] 


\begin{theorem}\label{thm:weak_consistency}
	Suppose that $\int_0^1 f^*(x) \log f^*(x)\, \mathrm d x < \infty$, $p_n > 0$ for all $n\in\mathbb N_0$, $\alpha_{x_{1:n}} > 0$ and, in case of GFPT2 priors, $\beta_{x_{1:n}} > 0$  whenever $n \in \mathbb N$ and $x_{1:n} \in \{0, 1\}^n$.
 Then, for any $\varepsilon > 0$ and $i=1,2$,
	\[
		\mathrm{P^*} \left( \lim_{m \rightarrow \infty} \Pi_i (d_{\mathrm{KS}}(f, f^*) < \varepsilon \mid Z_{1:m}) = 1 \right) = 1
	\]
where $f=f(\cdot\mid Y^{(N)})$ if $i=1$ and $f=f(\cdot\mid Y^{(N)},R^{(N)})$ if $i=2$.
\end{theorem}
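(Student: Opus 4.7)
The plan is to establish weak posterior consistency by verifying the Kullback--Leibler (KL) support condition of Schwartz's theorem \cite{gvdv}, and then to upgrade weak consistency to $d_{\mathrm{KS}}$-consistency via P{\'o}lya's uniform convergence theorem. Since $f^*$ is a density on $[0,1)$ its CDF is continuous, so P{\'o}lya's theorem implies that the map $f\mapsto d_{\mathrm{KS}}(f,f^*)$ is continuous at $f^*$ in the topology of weak convergence, and consequently $\{f:d_{\mathrm{KS}}(f,f^*)<\varepsilon\}$ contains a weak neighbourhood of $f^*$. Schwartz-type weak consistency then yields the claim as soon as, for every $\delta>0$, $\Pi_i(\{f:\mathrm{KL}(f^*,f)<\delta\})>0$; this KL support condition is the heart of the proof.

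First, I would approximate $f^*$ by the piecewise-constant PDF $\tilde f_n$ on $I^{(n)}$ defined by $\tilde f_n(x)=\mathrm P^*(X\in I_{x_{1:n}})/\ell_{x_{1:n}}$ for $x\in I_{x_{1:n}}$. By the standing assumption $\ell_{x_{1:n}}\to 0$ (Remark~\ref{r:1}) and Lebesgue's differentiation theorem, $\tilde f_n\to f^*$ almost everywhere. A direct rewriting gives
\[\mathrm{KL}(f^*,\tilde f_n)=\int_0^1 f^*\log f^*\,\mathrm dx-\sum_{x_{1:n}\in\{0,1\}^n}\mathrm P^*(I_{x_{1:n}})\log\!\bigl(\mathrm P^*(I_{x_{1:n}})/\ell_{x_{1:n}}\bigr),\]
and the subtracted sum is non-decreasing in $n$ by the data-processing inequality applied to the partition refinement. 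Combined with $\int f^*\log f^*<\infty$ and monotone convergence, this yields $\mathrm{KL}(f^*,\tilde f_n)\downarrow 0$; fix $n$ so that this KL is below $\delta/2$.

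Second, I would realize $\tilde f_n$ as a GFPT1 density. Set $Y^{(n)*}_{x_{1:j-1},0}=\mathrm P^*(I_{x_{1:j-1},0})/\mathrm P^*(I_{x_{1:j-1}})$ along every ``active'' prefix (where the denominator is positive), so that $\prod_{j=1}^n Y^*_{x_{1:j-1},x_j}/\ell_{x_{1:n}}=\tilde f_n$ wherever $f^*$ has positive mass. Expanding
\[\mathrm{KL}(f^*,f(\cdot\mid Y^{(n)}))=\int_0^1 f^*\log f^*\,\mathrm dx-\sum_{x_{1:n}}\mathrm P^*(I_{x_{1:n}})\log\!\bigl(\textstyle\prod_{j=1}^n Y_{x_{1:j-1},x_j}/\ell_{x_{1:n}}\bigr),\]
only active prefixes contribute, so the right-hand side is continuous in $Y^{(n)}$ on a small open rectangle $U\subset(0,1)^{2^n-1}$ around $Y^{(n)*}$ and stays below $\delta$ there. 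Since $\alpha_{x_{1:n}}>0$, the density of $\mathrm{FBS}(\alpha^{(n)})$ is strictly positive on $U$, and $p_n>0$ then gives $\Pi_1(U)>0$. For $\Pi_2$ one additionally pins the target to the dyadic partition by setting every coordinate of $R^{(n)*}$ to $1/2$, so that $L_{x_{1:n}}=2^{-n}$ and the previous argument applies verbatim; continuity of the KL in $R^{(n)}$ together with $\beta_{x_{1:n}}>0$ then extends the positive prior mass to a neighbourhood in the joint $(Y^{(n)},R^{(n)})$-space.

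The main technical nuisance is that the target $Y^*$ may equal $0$ or $1$ along an active prefix whenever a sibling interval carries zero $f^*$-mass. Approaching such boundary targets from the interior of $(0,1)$ by a small $\eta>0$ perturbs each active product by at most a factor $1-\eta$, so the corresponding log-terms shift by $O(\eta)$ uniformly over the finitely many active terms, and the total KL stays below $\delta$ for $\eta$ small. This verifies the KL support condition and, via Schwartz followed by P{\'o}lya, completes the proof.
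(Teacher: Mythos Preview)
Your proposal is correct and follows essentially the same route as the paper: both reduce the claim to the Kullback--Leibler support condition (Schwartz), approximate $f^*$ by its level-$n$ discretization $\tilde f_n$, show $\mathrm{KL}(f^*,\tilde f_n)\to 0$ using the finite-entropy hypothesis, and then place positive prior mass near the target using $p_n>0$ and $\alpha_{x_{1:n}}>0$. The paper simply cites Theorem~6.16, Example~6.20, Theorem~7.1, and Lemma~B.10 of \cite{gvdv} for these steps, whereas you write them out directly; your decomposition $\mathrm{KL}(f^*,f(\cdot\mid Y^{(n)}))=\mathrm{KL}(\tilde f_n,f(\cdot\mid Y^{(n)}))+\mathrm{KL}(f^*,\tilde f_n)$ is exactly the paper's two-term split. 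The only genuine divergence is in the GFPT2 case: the paper observes that conditioning on $R$ yields a GFPT1 prior and invokes the mixture result Proposition~6.28 of \cite{gvdv}, while you fix a concrete target $R^{(n)*}\equiv 1/2$ and argue continuity in the joint $(Y^{(n)},R^{(n)})$-space directly. Both arguments work; yours is slightly more hands-on but avoids the extra citation.
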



Theorem~\ref{thm:weak_consistency} gives sufficient conditions for the posterior of $f$ to concentrate on vanishing neighbourhoods of $f^*$ in the weak topology, since convergence with respect to $d_{\mathrm{KS}}$ is stronger than weak covergence of probability measures on Euclidean spaces \citep[see e.g.\ Theorem~6 in][]{gibbs}. 
 However,
as discussed in \cite{gvdv}, consistency under the weak topology is often considered ``too weak'' for density estimation purposes. Instead the next theorem gives sufficient conditions for consistency with respect to the Hellinger distance.
Convergence with respect to $d_{\mathrm{H}}$ is stronger than weak convergence, and 
the topology induced by the Hellinger distance is equivalent to the one induced by the total variation norm \citep[see e.g.][]{kraft}.

\begin{theorem}\label{thm:hell_consistency}
	In addition to the assumptions of Theorem \ref{thm:weak_consistency}, 
	suppose $p_n < C e^{-cn}$ for some constants $C>0$ and $c > 0$ and for all $n\in\mathbb N$. Then, for any $\varepsilon > 0$,
	\[
		\mathrm{P^*} \left( \lim_{m \rightarrow \infty} \Pi_1 \left( d_{\mathrm H}(f(\cdot\mid Y^{(N)}), f^*) < \varepsilon  \mid Z_{1:m}\right) = 1 \right) = 1.
	\]
\end{theorem}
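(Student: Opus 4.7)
The plan is to apply a Schwartz-type theorem for Hellinger posterior consistency, in the style of Barron or Ghosal--Ghosh--van der Vaart (GGV). Two ingredients are needed: (a) the Kullback--Leibler support condition $\Pi_1(\{f : \mathrm{KL}(f^*, f) < \varepsilon\}) > 0$ for every $\varepsilon > 0$, and (b) a sieve $\mathcal F_m$ with $\Pi_1(\mathcal F_m^c)$ exponentially small in $m$ and Hellinger metric entropy $\log N(\varepsilon, \mathcal F_m, d_{\mathrm H}) = o(m)$. Ingredient (a) can be extracted from the proof of Theorem~\ref{thm:weak_consistency}: under $\int f^* \log f^* < \infty$, the sequence of level-$n$ conditional expectations $f^*_n$ of $f^*$ with respect to the partition $I^{(n)}$ is a nonnegative martingale converging to $f^*$ in $L^1$ and, by integrability, in KL divergence. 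For $n$ large enough that $\mathrm{KL}(f^*, f^*_n) < \varepsilon/2$, the event that every Beta coordinate $Y_{x_{1:j-1},0}$ lies within a small neighborhood of the value producing $f^*_n$ has positive prior mass (since each $\alpha_{x_{1:j}} > 0$ renders the $\mathrm{FBS}$ law absolutely continuous on the relevant cube), and $p_n > 0$ then puts positive $\Pi_1$-mass on the desired KL ball.

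For (b), the natural sieve is $\mathcal F_m = \{f(\cdot\mid Y^{(n)}) : n \le K_m\}$ for a sequence $K_m \to \infty$. The exponential tail hypothesis $p_n \le Ce^{-cn}$ gives
\[
\Pi_1(\mathcal F_m^c) \;=\; \sum_{n > K_m} p_n \;\le\; C'\, e^{-cK_m},
\]
so choosing $K_m$ proportional to $m$ yields exponential decay in $m$. Each element of $\mathcal F_m$ is piecewise constant on some $I^{(n)}$ with $n \le K_m$ and hence is parametrised by a probability vector in the simplex $\Delta^{2^n-1}$; since the Hellinger distance on such PDFs matches the Hellinger distance on the simplex, one obtains the crude bound $\log N(\varepsilon, \mathcal F_m, d_{\mathrm H}) \lesssim 2^{K_m}\log(1/\varepsilon)$.

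The main obstacle is that these two crude bounds pull in opposite directions: linear $K_m$ is needed for exponentially decaying prior mass, but $K_m \lesssim \log_2 m$ is needed for $o(m)$ entropy. I plan to resolve this by refining the sieve, intersecting $\mathcal F_m$ with the event $\bigcap_{j \le K_m}\bigcap_{x_{1:j-1}}\{Y_{x_{1:j-1},0} \in [\eta_{j,m}, 1-\eta_{j,m}]\}$ for a sequence $\eta_{j,m}\downarrow 0$ chosen so that (i) the prior mass removed is still exponentially small in $m$, using Beta tail bounds together with the positivity of the $\alpha$'s, and (ii) the restricted densities admit uniform $L^\infty$ bounds at each level, yielding a tree-structured bracketing-entropy estimate of order $o(m)$. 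Calibrating $K_m$ and $\{\eta_{j,m}\}$ so that both bounds hold simultaneously is where I expect the technical difficulty to lie; this is the only point at which the hypothesis $p_n \le Ce^{-cn}$ (as opposed to the weaker $p_n > 0$ used in Theorem~\ref{thm:weak_consistency}) is essential. Once the sieve is in place, the GGV test construction produces uniformly exponentially consistent tests against $\{f : d_{\mathrm H}(f, f^*) > \varepsilon\}$, and Schwartz's theorem yields $\Pi_1(d_{\mathrm H}(f(\cdot\mid Y^{(N)}), f^*) > \varepsilon \mid Z_{1:m}) \to 0$ almost surely under $f^*$, as required.
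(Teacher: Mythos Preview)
Your overall strategy (Schwartz/GGV: Kullback--Leibler support plus a sieve with controlled entropy and small prior mass on the complement) is exactly the paper's. The difference is in how the sieve is built, and this is where your ``obstacle'' arises while the paper's argument does not encounter it. The paper does not take the raw set $\{N\le K_m\}$ and then try to trim it. Instead it borrows the sieve from the standard P\'olya tree consistency proof (Theorem~7.16 in \cite{gvdv}),
\[
\mathcal P_{m,1}=\bigl\{f:\ \|\log(f/f_{k_m})\|_\infty\le\varepsilon^2/8\bigr\},\qquad k_m\asymp\log m,
\]
whose Hellinger entropy is already handled by the PT machinery. The one new observation is structural: a GFPT1 density $f(\cdot\mid Y^{(n)})$ with $n\le k_m$ is piecewise constant on $I^{(k_m)}$, so it \emph{equals} its own level-$k_m$ discretisation and hence lies in $\mathcal P_{m,1}$ automatically. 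Consequently $\Pi_1(\mathcal P_{m,2})\le\Pi_1(N>k_m)\le D\,e^{-ck_m}$, and this is exactly (and only) where the exponential tail hypothesis $p_n\le Ce^{-cn}$ enters. Your proposed refinement, intersecting with $\{Y_{x_{1:j-1},0}\in[\eta_{j,m},1-\eta_{j,m}]\}$ to force linear $K_m$, is therefore unnecessary; moreover, it would be difficult to execute under the stated hypotheses, since the $\alpha_{x_{1:j}}$ are only assumed positive, not bounded away from zero, and uniform Beta tail bounds over all nodes are unavailable. The cleaner route is to take $k_m$ logarithmic from the outset and let the PT sieve absorb the entropy calculation.
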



\begin{remark}
	 For instance, the condition $0 < p_n < C e^{-ck}$ is satisfied by the Poisson distribution. 
	By contrast, posterior consistency for standard P{\'o}lya tree priors was established in \cite{Barron99} \citep[see also Theorem 7.16 in][]{gvdv} under the assumptions that the parameters $\alpha_{x_{1:n}}$ grow exponentially fast with $n$. 
	As noted in \cite{gvdv} this assumption is too strong, since practitioners usually follow \cite{lavine} by setting $\alpha_{x_{1:n}} = \alpha_0 n^2$. See also \cite{Castillo17} and \cite{giordano} for further developments.
\end{remark}

\section{Numerical illustrations}\label{s:num}

This section investigates several aspects of our models via numerical simulations as follows. Section \ref{sec:s1} assesses the impact of assuming $N$ random on density estimation by comparing the GFPT1 and standard (finite) P{\'o}lya tree priors, focusing on accuracy of the density estimates as well as prior and posterior variability. 
Section \ref{sec:simu2} compares the GFPT1 and GFPT2 priors in terms of density estimation accuracy and the posterior distribution of $N$. Moreover, we compare our models with the Optional P\'olya tree (OPT) model of \cite{WongMa} and the APT model of \cite{Ma17}.
Section \ref{sec:comparison_real_data} illustrates the advantages of the MBPT model on real datasets spanning multiple orders of magnitude.
Finally, Appendix F in Supplemental material 
discusses the case of data that lies outside the unit interval, and how the use of a bijection to transform data living in $\mathbb R$ or $\mathbb R_+$ to $[0,1)$ affects density estimation, as well as the role of $N$ as a \emph{smoothing} parameter.
In particular, we observe that care must be taken when applying such a transformation to avoid that most of the datapoints get mapped near the boundaries of the interval $[0, 1)$: A simple scaling of the data by their empirical standard deviation before applying the transformation yields accurate posterior density estimates. Moreover, the importance of $N$ is unchanged, as it allows to adapt to the smoothness of the data.

Prior elicitation follows the discussion in Section \ref{s:elicitation}. Unless otherwise specified, a priori $N$  follows the truncation of a Poisson distribution to $\{0,1, \ldots, 20\}$ where the Poisson distribution has mean 5.

We need the following notation. For two probability densities $f$ and $g$, $\mathrm{TV}(f, g) = \frac{1}{2} \int |f(x) - g(x)|\, \mathrm d x$ denotes their total variation distance. For a curve $g: [a, b] \rightarrow \mathbb R$, denote its length by $L_g$ and let $W_g = L_g - |b - a|$. 
We use $W_g$ to compare the ``wigglyness'' of two PDFs: Let $f_0$ be a reference PDF, $\hat f_j$, $j=1, 2$ be two estimators of $f_0$, and $W_j = W_{\hat f_j - f_0}$ for $j=1,2$. Then we say that $\hat f_1$ is less wiggly than $\hat f_2$ if $W_1 < W_2$. Clearly, for any $\hat f$, $W_{\hat f - f_0} \geq 0$  and we have $W_{\hat f - f_0} = 0$ if and only if $\hat f = f_0$ (Lebesgue almost everywhere).

\subsection{Posterior results when using GFPT1 and P{\'o}lya tree priors}\label{sec:s1}

\begin{figure}[t]
	\centering
	\includegraphics[width=\linewidth]{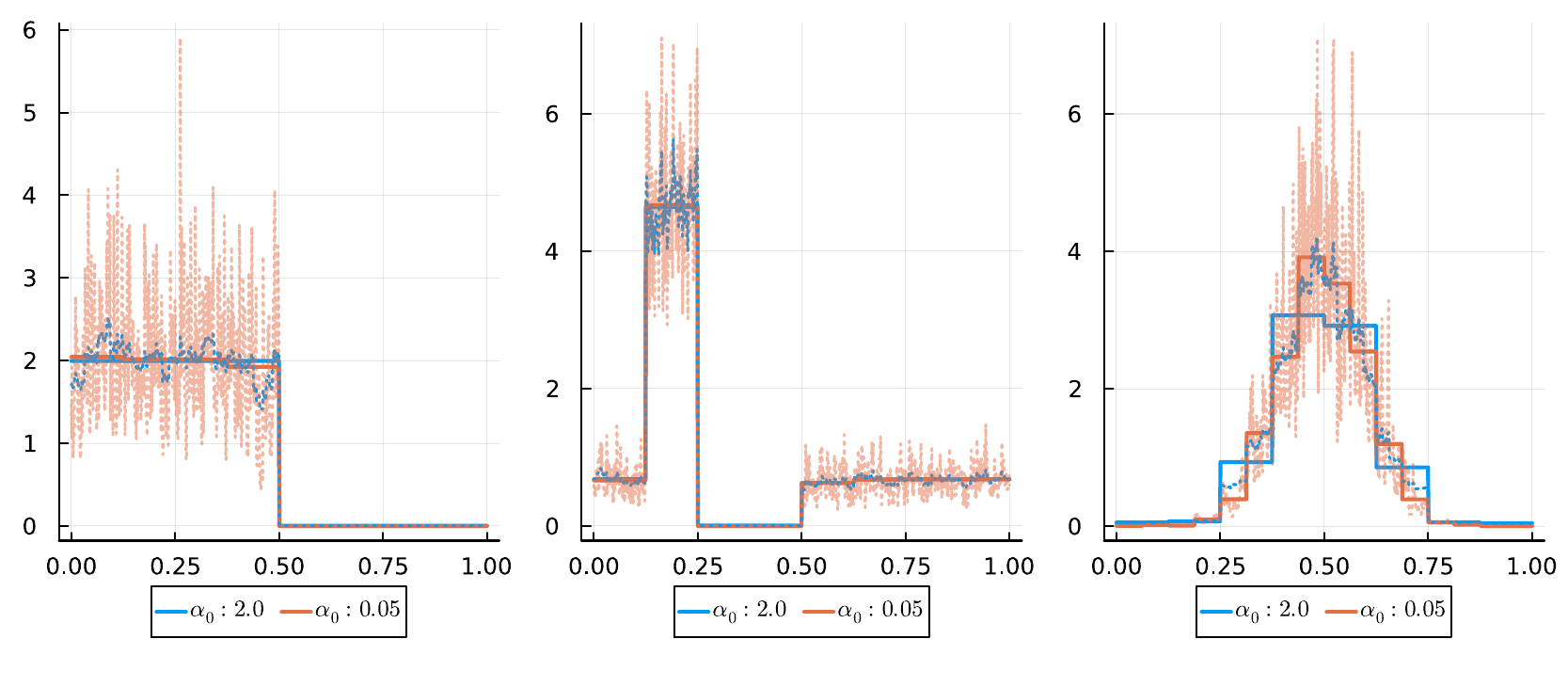}
	\caption{Posterior results for first simulation study in Section \ref{sec:s1}: Density estimates under GFPT1 (solid lines) and PT priors (dotted lines) when $\alpha_0 = 2$ (blue lines) and $\alpha_0=0.05$ (orange lines) where 
	the three columns correspond to the data examples DG1--DG3, respectively.}
	\label{fig:example1}
\end{figure}

We consider three data generating processes:
\begin{align}
	Z_1,\ldots,Z_m &\iid  \mathrm{Unif}(0, 0.5) \label{DG1} \tag{DG1} \\
	Z_1,\ldots,Z_m &\iid  \frac{1}{6} \mathrm{Unif}(0, 0.25) + \frac{1}{2} \mathrm{Unif}(0.125, 0.25) + \frac{1}{3} \mathrm{Unif}(0.5, 1) \label{DG2} \tag{DG2} \\
	Z_1,\ldots,Z_m &\iid \mathcal N(0.5, 0.1)|_{[0, 1)} \label{DG3} \tag{DG3}
\end{align}
where $\mathcal N(\mu, \sigma^2)|_{[0, 1)}$ denotes the truncation on the unit interval of the normal distribution with mean $\mu$ and variance $\sigma^2$.
We  simulated 
$m=1000$ observations from each data generating process and 
 fitted
them using either a $\mathrm{GFPT1}(\alpha,I)$ or a $\mathrm{PT}(\alpha,I)$ prior where $\alpha$ is given by \eqref{e:alpha-first} with $\alpha_0 = 2$ or $\alpha_0 = 0.05$ and where $I$ is the sequence of standard diadic partitions. Thus, in our first simulation study, we consider 12 cases corresponding to the four different models for each of the three datasets.

Figure~\ref{fig:example1} shows the density estimates  (as given in Remark~\ref{rem:point_estimator}) for the 12 cases. For all datasets, the PT prior yields extremely wiggly density estimates when $\alpha_0 = 0.05$ (dotted orange line), and these get only partially less wiggly for $\alpha_0 = 2$ (dotted blue line). On the other hand, for datasets DG1 and DG2 (the two first columns), the GFPT1 prior yields precise density estimates for both values of $\alpha_0$ (the solid blue and orange lines). However, for DG3 and $\alpha_0 = 2.0$, the density estimate under the GFPT1 prior is not very precise (last column, solid blue line).  This happens because  the posterior of $N$ concentrates on small values leading to a coarse approximation. When $\alpha_0 = 0.05$ instead, the posterior of $N$ concentrates to higher values leading to a more accurate density estimate (last column, solid orange line).

Next, we focus on the impact that assuming $N$ random has on prior and posterior variability. For the GFPT1 prior, we assume that a priori $N$  follows the truncation of a Poisson distribution to $\{0,1, \ldots, 20\}$ where the Poisson distribution has mean 10. 
As alternative, we consider a finite PT model which corresponds to setting $N = \delta_{10}$.
For both models, $\alpha$ is given by \eqref{e:alpha-first} for $\alpha_0 \in \{0.05, 0.1, 2.0, 10.0\}$.
Figure~\ref{fig:prior_variance} shows global credible bands for the random PDF under both priors: A priori, the introduction of a prior on $N$ does not seem to make a huge difference in terms of variability.
However, the posterior under the two models is strikingly different, as shown in Figure~\ref{fig:post_variance}: For all choices of the parameter $\alpha_0$ and under all three data generating processes (DG1)--(DG3), the posterior under the GFPT1 is much more concentrated, highlighting the benefits of assuming a prior for $N$ as opposed to fixing it to a large value. Moreover, Figure~\ref{fig:post_variance} (bottom row) shows the effect that $\alpha_0$ has on the posterior of $N$: Large values of $\alpha_0$ shrink $N$ to smaller values a posteriori, resulting in coarser density estimates.
In summary, under a GFPT1 prior, we find that specifying a small value for $\alpha_0$ enables greater adaptability via the prior on $N$ that regularize the density estimate.

\begin{figure}
	\centering
	\includegraphics[width=\linewidth]{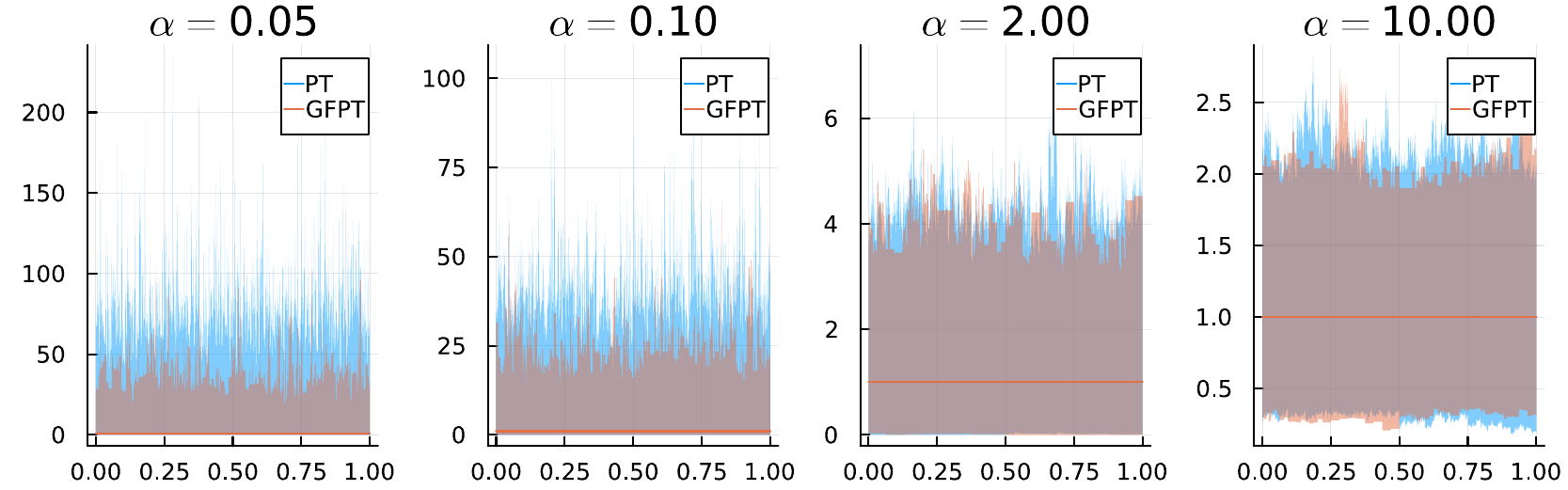}
	\caption{95\% global credible bands (shared area) and mean (solid line) of the random PDF under a GFPT1 prior or a finite PT prior when $\alpha$ is given by \eqref{e:alpha-first} for $\alpha_0 \in \{0.05, 0.1, 2.0, 10.0\}$. For the GFPT1 prior, a priori $N$  follows the truncation of a Poisson distribution to $\{0,1, \ldots, 20\}$ where the Poisson distribution has mean 10. For the finite PT prior, the NBP $I$ has depth 10.}
	\label{fig:prior_variance}
\end{figure}

\begin{figure}
	\centering
	\includegraphics[width=\linewidth]{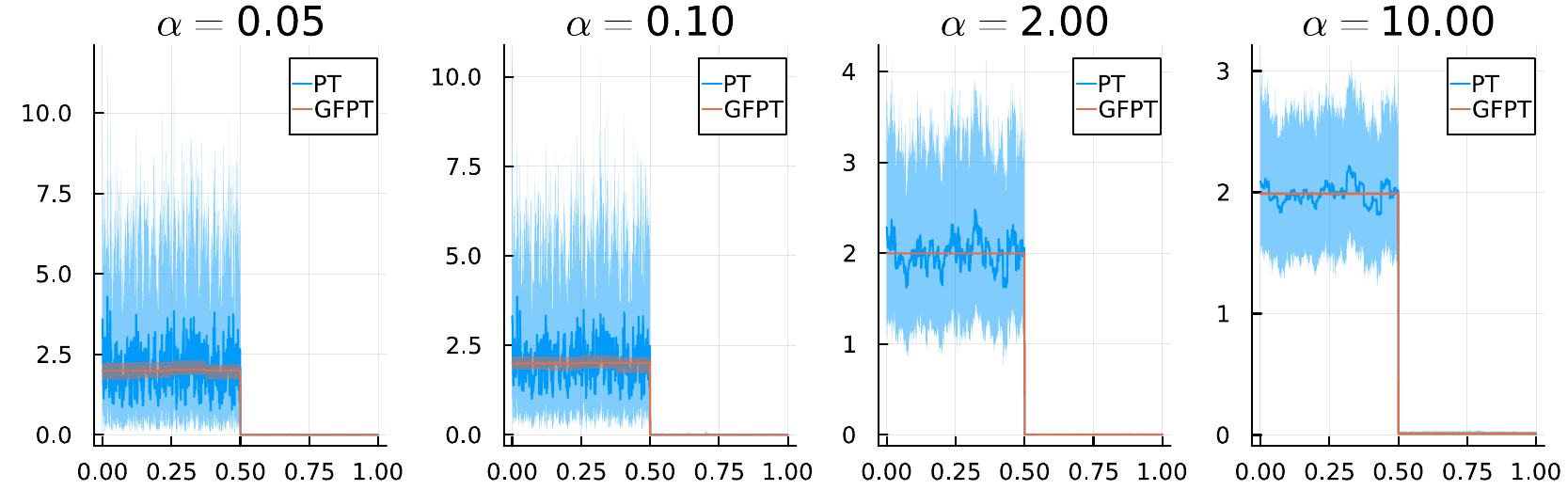}
	\includegraphics[width=\linewidth]{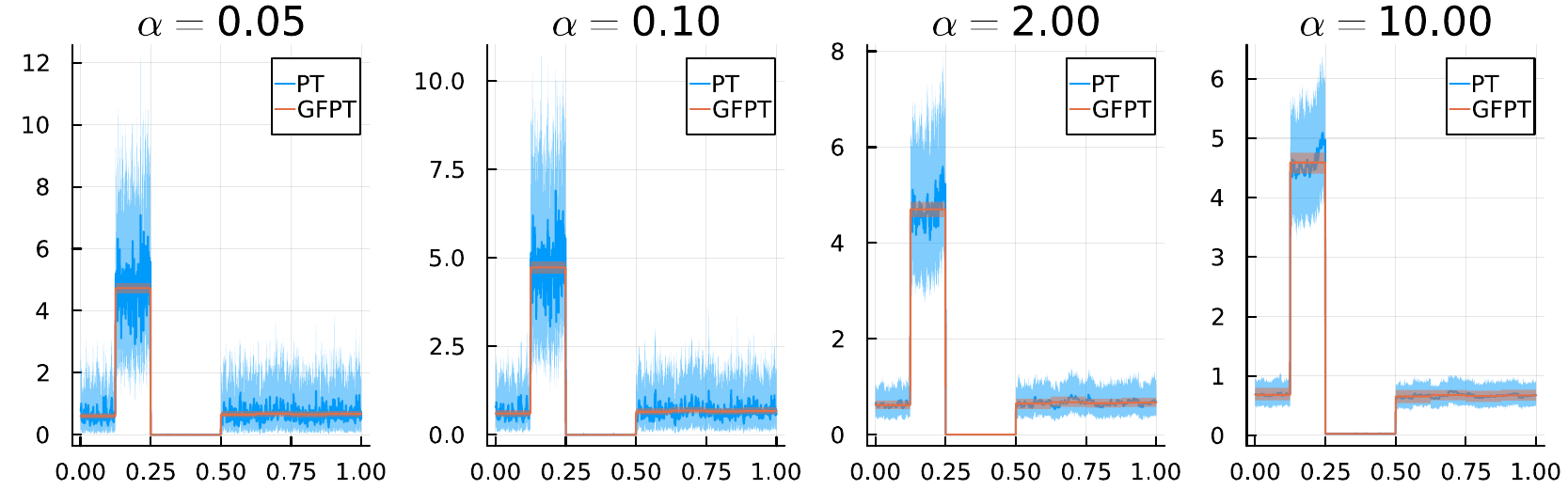}
	\includegraphics[width=\linewidth]{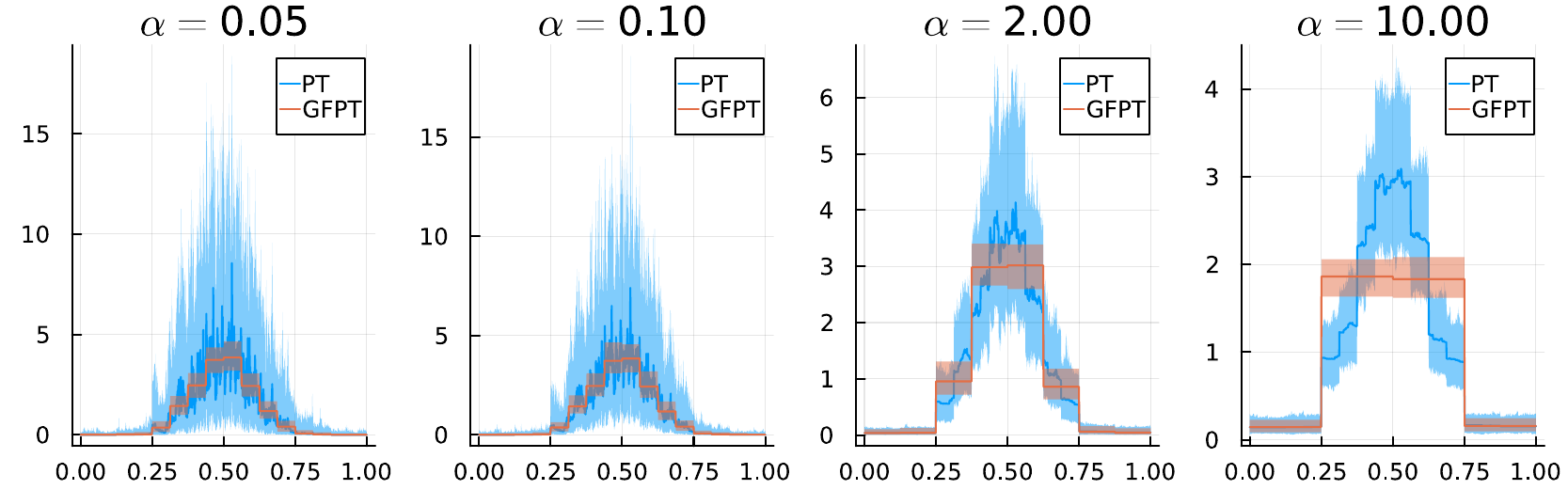}
	\caption{95\% global credible bands (shared area) and mean (solid line)
	of the posterior of the random PDF under the same specifications of Figure \ref{fig:prior_variance}, when data are generated by (DG1)--(DG3) (different rows, from top to bottom)}
	\label{fig:post_variance}
\end{figure}

In our second simulation study, we  confirmed such insights by generating $100$ independent datasets from each data generating process with $$m \in \{50, 100, 1000, 5000, 10000\}$$  and where 
each dataset was fitted using PT and GFPT1 priors with $$\alpha_0 \in \{0.05, 0.1, 2.0, 10.0\}.$$ Figure~\ref{fig:simu1_tv} show the mean across the 100 independent datasets of 
\begin{enumerate}
\item[(i)] the total variation distance distance between the density of the true data generating process  (for short the true density) and the density estimate (cf.\ Remark~\ref{rem:point_estimator}),
\item[(ii)] the measure of wigglyness of the difference between the true and estimated densities,
\item[(iii)] the a posteriori expectation of $N$,
\end{enumerate}
respectively.
As expected Figure~\ref{fig:simu1_tv} 
show that larger values of $\alpha_0$ correspond to smaller values of $N$ (a posteriori) and vice versa. This is particularly evident for datasets DG1 and DG3. In general, we see that GFPT1 priors outperform PT priors in terms of the quality of the density estimate. 
Moreover, larger $\alpha_0$ clearly yield less wiggly functions when using PT priors, while this dependence is not so clear when using GFPT1 priors.
In conclusion, we find that $\alpha_0 = 0.1$ provides the best trade off between wigglyness and accuracy of the density estimate, and we suggest using it as default value for GFPT1 priors.

\begin{figure}[ht!]
	\centering
	\includegraphics[width=\linewidth]{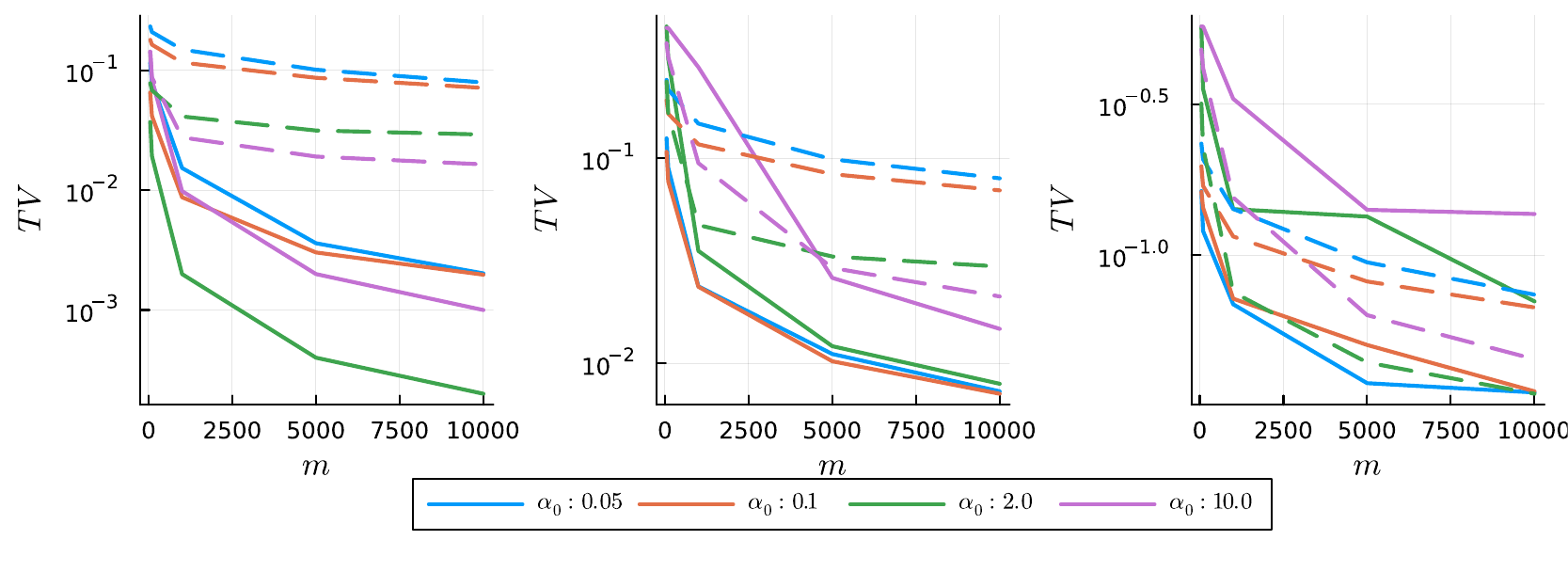}
	\includegraphics[width=\linewidth]{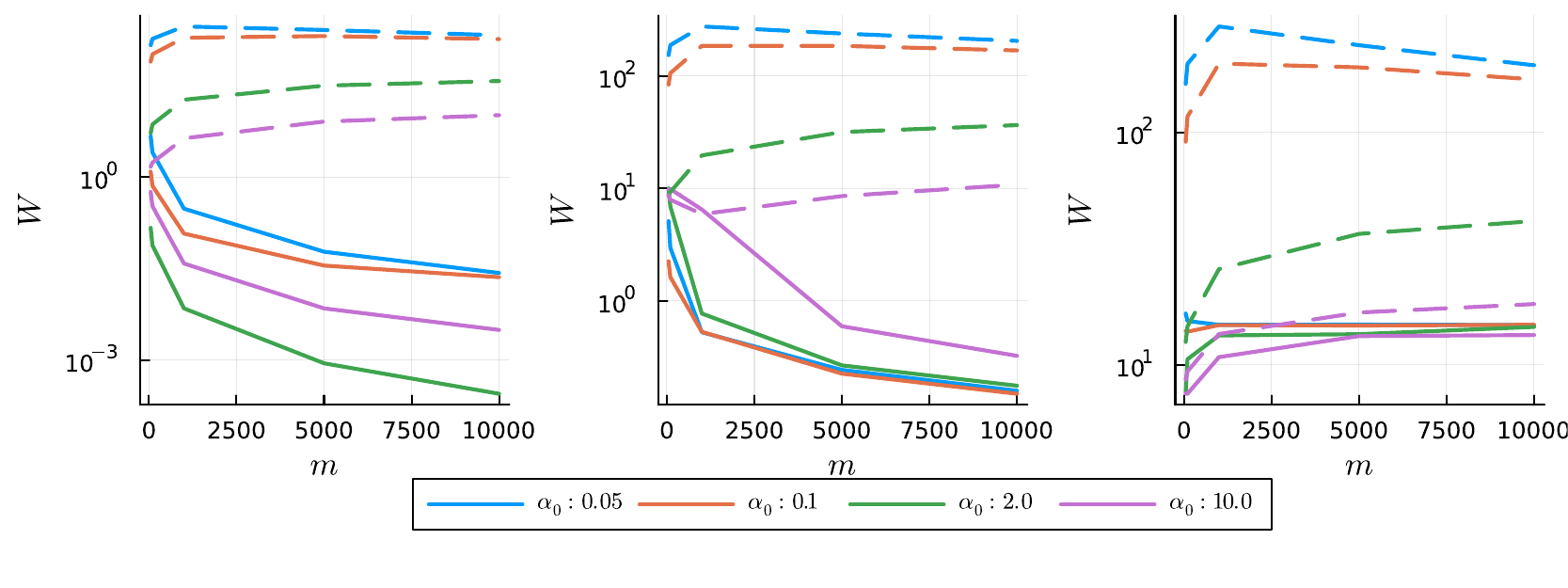}
	\includegraphics[width=\linewidth]{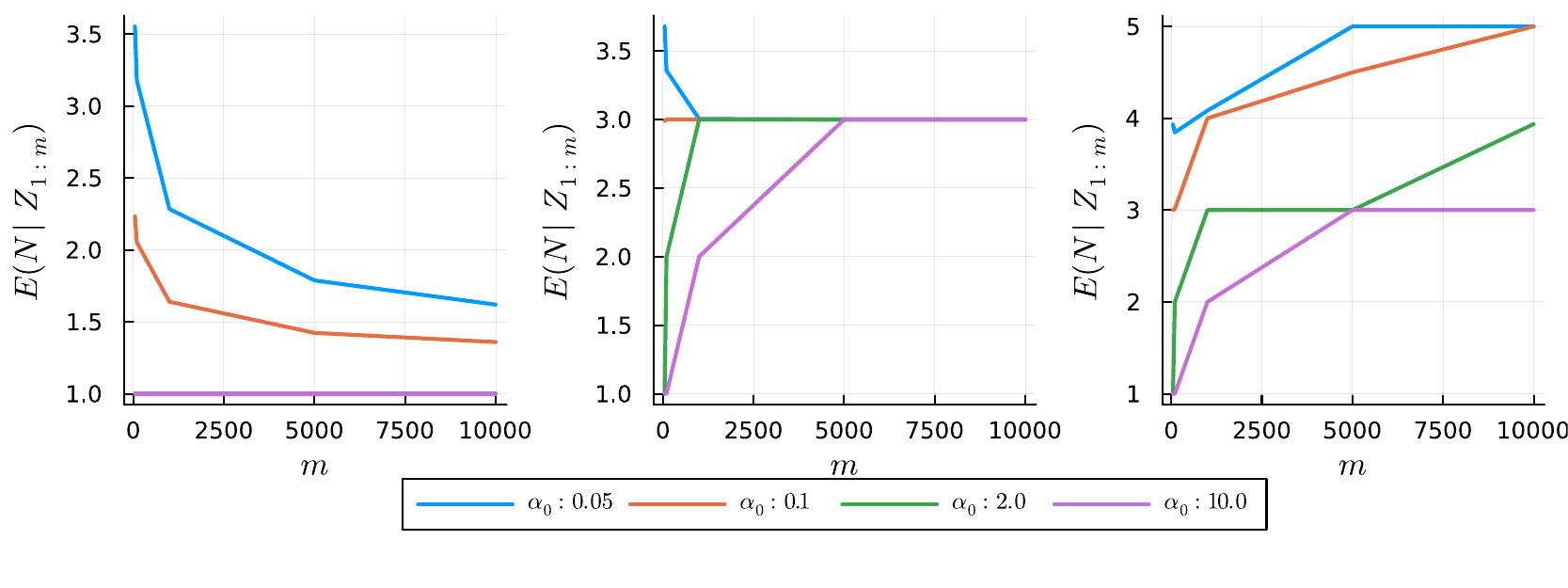}
	\caption{Posterior results for the second simulation study in Section \ref{sec:s1}. {From top to bottom:} Total variation (TV) distance between true and estimated densities, {$W$-function for the difference between true and estimated densities, and posterior expectation of $N$} as a function of the sample size $m$.
	The three columns correspond to the data examples DG1--DG3, respectively. Figures are averaged over 100 independent replicates. Solid and dashed lines refer to the use of GFPT1 and PT priors, respectively. Different colors refer to different values of $\alpha_0$.}
	\label{fig:simu1_tv}
\end{figure}

\subsection{Posterior results when using GFPT1 and GFPT2 priors}\label{sec:simu2}

This section investigates through simulation studies if there is any gain in extending GFPT1 models to GFPT2 models.
Moreover, we also compare our models with the Optional P\'olya tree (OPT) model of \cite{WongMa} and the APT model of \cite{Ma17}, implemented in the \texttt{R} package \texttt{PTT}.
For this we consider six different data generating processes, namely DG1--DG3 as in Section~\ref{sec:s1} and 
\begin{align}
	Z_1,\ldots,Z_m &\iid  \mathrm{Unif}[0, 0.2) \label{DG4} \tag{DG4} \\
	Z_1,\ldots,Z_m  &\iid  \frac{1}{2} \mathrm{Unif}[0, 0.2) + \frac{1}{2} \mathrm{Unif}[0.7, 0.9) \label{DG5} \tag{DG5} \\
	Z_1,\ldots,Z_m  &\iid  \frac{1}{2} \mathrm{Beta}(2, 15) + \frac{1}{2} \mathrm{Beta}(15, 2) \label{DG6} \tag{DG6}.
\end{align}

We consider a GFPT1 prior where $\alpha_0 = 0.1$ (as suggested at the end of Section~\ref{sec:s1})
and a GFPT2 prior where $\alpha_{x_{1:n}} = \alpha_0 n^2$  with
$\alpha_0 \in \{0.1, 2.0\}$ (in Section~\ref{sec:s1}, these values of $\alpha_0$ provided the best GFPT1 posterior estimates). As far as the prior for $R$ is concerned (cf.\ Defintion~\ref{def:6}), we assume $\beta_{x_{1:n}} = \beta_0$ for all  $n\in\mathbb N$ and all $x_{1:n} \in \{0, 1\}^n$, so that the prior mean for the NBP $I$ is  given by the standard dyadic partitions employed by the GFPT1 model.
We let $\beta_0 \in \{0.5, 2.0, 5.0\}$, which allowed us to understand the sensitivity of GFPT2 priors: When $\beta_0 = 0.5$, the $\mathrm{BS}(\beta)$ prior for $R$ assigns significant mass to values of $R_{x_{1:n}}$ near to $0$ or $1$, corresponding to almost empty intervals. Instead when $\beta_0\in\{2.0, 5.0\}$, the $\mathrm{BS}(\beta)$ prior for $R$ causes a random NBP which is more and more concentrated around the standard dyadic partitions. 
We also tried increasing the value of of $\beta_0$ when increasing the level $n$ of the partitions, but this led to worse mixing of the MCMC algorithm for posterior simulations (see Remark \ref{rem:mcmc-gfpt2-collapsed}; we used  10,000 iterations and discarding the first 1,000 as burn-in). In our examples, running this MCMC algorithm on a standard laptop takes only around five seconds.
The prior specification for the OPT and APT models follows the defaults of the \texttt{PTT} package.

\begin{figure}[]
	\centering
	\includegraphics[width=\linewidth]{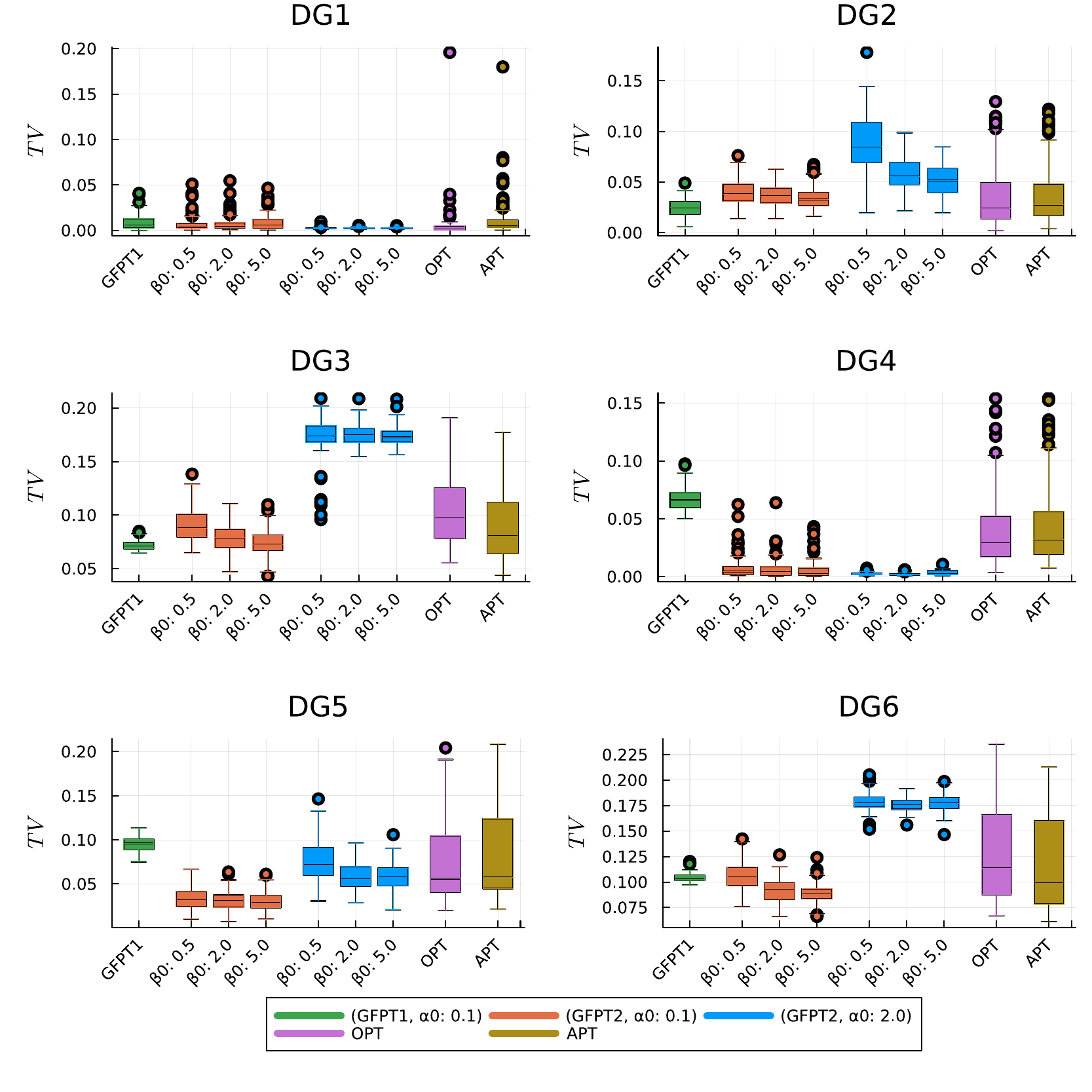}
	\caption{Total variation distance (TV) between true and estimated densities. Each boxplot summarizes the results from 100 independent replicates of the indicated data generating model.}
	\label{fig:s2_tv}
\end{figure}

\begin{figure}[]
	\centering
	\includegraphics[width=\linewidth]{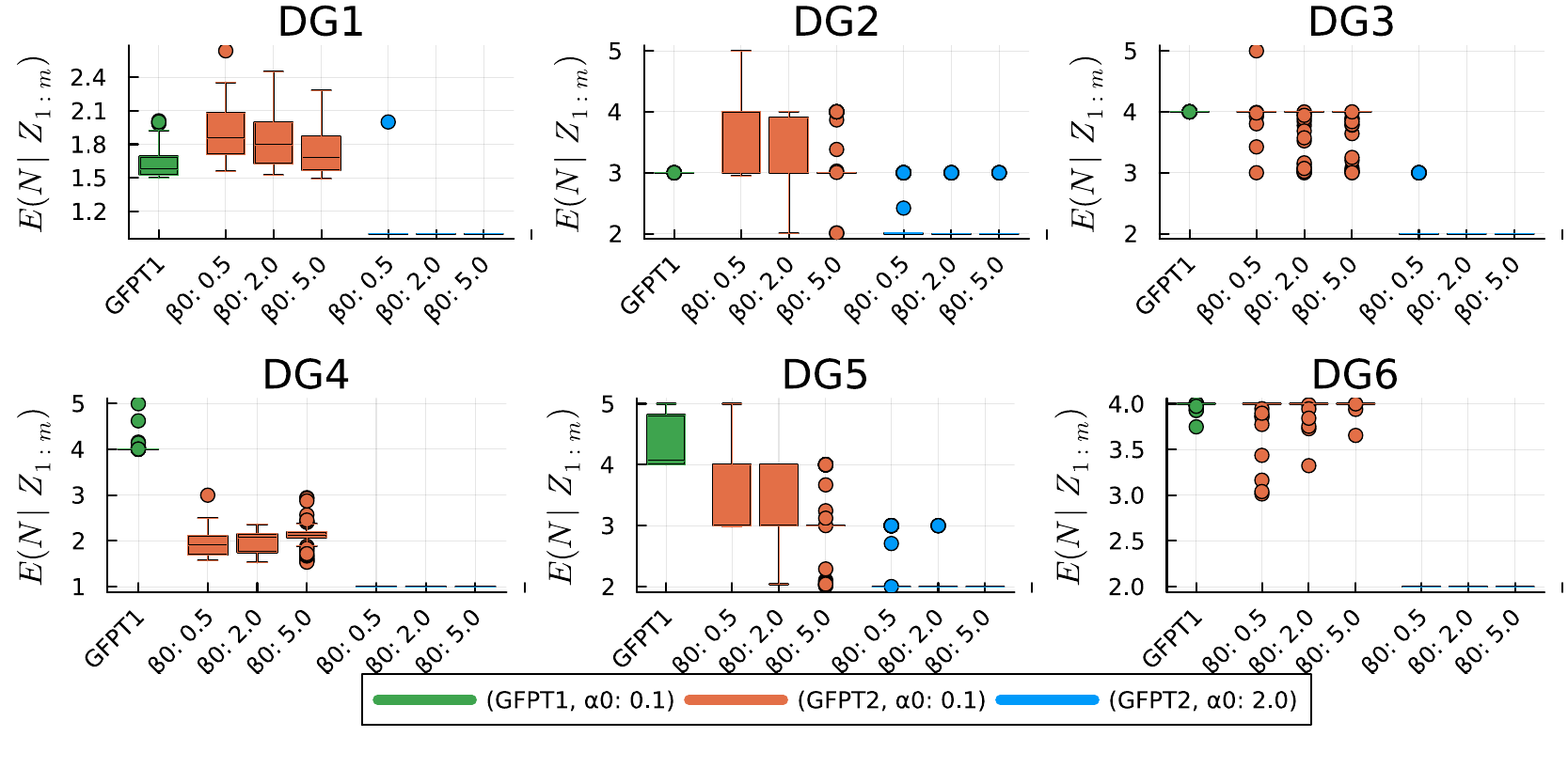}
	\caption{Posterior expectation of $N$ for the simulated datasets in Section \ref{sec:simu2}.  Each boxplot summarizes the results from 100 independent replicates of the indicated data generating model.}
	\label{fig:s2_n}
\end{figure}

For each data generating process, we simulated 1,000 observations and repeated each analysis on 100 independently simulated datasets. Figure \ref{fig:s2_tv} shows for the different data generating processes and parameter values, the total variation distance between true and estimated densities obtained with the different models. Moreover, Figure \ref{fig:s2_n} shows the posterior mean of $N$ under the GFPT1 and GPFT2 models.
We observer the following. First, the parameter $\beta_0$ does not seem to play a significant role in posterior inference. Second, when using GFPT2 priors, as in the case of GFPT1 priors, higher values of $\alpha_0$ correspond to smaller values of $N$ a posteriori. In some scenarios (e.g., DG1 and DG4) this yields also a more accurate estimation of the density because the density needs only a coarse partition to be well approximated. In other scenarios (e.g., DG3 and DG6), this results in a much poorer density estimate. In both such cases, the data generating density entails that the number of sufficient digits is infinite, and a finer partition results in better density estimates.
Focusing specifically on DG4 and DG5, observe that those data generating densities are piecewise continuous on intervals that do not agree with the standard diadic partitions employed for the GFPT1 prior. As a result, under the GFPT1 prior, a posterior $N$ tends to be large in those scenarios, while especially when $\alpha_0 = 2$ the GFPT2 model correctly captures the partitions needed to approximate those densities with a small number of intervals, resulting in smaller $N$ a posteriori.
In particular, for DG4, using the GFPT2 prior with $\alpha_0 = 2$ achieves almost a perfect density estimate.
Finally, the OPT and APT models yield comparable performance to the GFPT2 model in all settings except when data are generated from DG4 and DG5, in which case the GFPT2 models achieve superior density estimates.

\subsection{Fitted models for real datasets spanning multiple orders of magnitude}\label{sec:comparison_real_data}

We consider here five real-world datasets: 
\begin{enumerate}
\item[(i)] \emph{Twitter} records the number of friends of 40,000 users, 
\item[(ii)] \emph{eurodist} contains the pairwise distances (in kilometers) between 21 cities in Europe, 
\item[(iii)] \emph{GDP} collects the gross domestic products (in dollars) of 196 countries, 
\item[(iv)] \emph{census} is the number of citizens in more than 19,000 US cities as of 2009, 
\item[(v)] \emph{income} records the personal income of more than 50,000 inhabitants in California as of 2023. 
\end{enumerate}
Data for the first four dataset can be found at \url{https://github.com/jasonlong/benfords-law}, while data for the \emph{income} dataset is publicly available from \url{https://www.census.gov/data.html}.
Table~\ref{tab:empirical-range} reports, for each dataset, the empirical range (maximum and minimum value of the data) and two diagnostics computed from the \emph{log-mantissa} $U = \log_{10}(Z) - \lfloor \log_{10}(Z) \rfloor$: The Wasserstein-1 distance $W_1$ between the empirical distribution of $U$ and the $\mathrm{Unif}[0,1)$ law, and the Kolmogorov-Smirnov distance $d_{\mathrm{KS}}$ to $\mathrm{Unif}[0,1)$. Under the Newcomb-Benford law, the distribution is \emph{scale invariant}, i.e., the \emph{log-mantissa} $U$ is $\mathrm{Unif}[0,1)$-distributed, hence, small values of $W_1$ and $d_{\mathrm{KS}}$ indicate that our MBPT model is well ssupported by the data.
The datasets in Table~\ref{tab:empirical-range} span four, three, eight, six, and seven 
 orders of magnitude, respectively, making them particularly well suited to test the adequacy of the base-2 and base-10 multiscale Benford P{\'o}lya tree priors. 
 In addition, the $W_1$ and $d_{\mathrm{KS}}$ statistics quantify how close each dataset is to the Benford benchmark: \emph{eurodist} has the largest deviations (0.18 and 0.07), signalling weak scale-invariance; by contrast, \emph{census} is extremely close to uniform (0.001 and 0.003), with \emph{income}, \emph{GDP}, and \emph{Twitter} also showing relatively small deviations.
 This pattern anticipates our empirical findings, whereby the MBPT model delivers the best fit where the data spans a large number of orders of magnitude and $W_1$ and $d_{\mathrm{KS}}$ are close to zero.

\begin{table}
	\centering
	\begin{tabular}{c | c c c c c}
		Dataset & Twitter & Eurodist & GDP & Census & Income     \\ \hline
		Range & $(101, 2.3 \cdot 10^5)$ & $(158, 4532)$ & $(5 \cdot 10^7, 10^{14})$ & $(1, 8 \cdot 10^6)$ & $(4, 1.3 \cdot 10^7)$ \\
		$W_1$            & 0.06 & 0.18 & 0.06 & 0.001 & 0.04   \\
		$d_{\mathrm{KS}}$ & 0.03 & 0.07 & 0.02 & 0.003 & 0.009 
	\end{tabular}
	\caption{Empirical range of the observations and $W_1$ and $d_{\mathrm{KS}}$ diagnostics in the different datasets analyzed in Section \ref{sec:comparison_real_data}.}
	\label{tab:empirical-range}
\end{table}

We compare posterior results when using five different priors, namely $\mathrm{MBPT}_2$ and $\mathrm{MBPT}_{10}$ priors, GFPT1 and GFPT2 priors fitted on max-scaled data (i.e., for each dataset, we divide all observations by the maximum value), and a Dirichlet process mixture (DPM) of Gaussian densities \citep{EscobarWest, gvdv, MacEachern, Neal}, which is the de-facto standard model for Bayesian density estimation. The 
 parameters of the GFPT1 and GFPT2 priors are selected following Section~\ref{s:elicitation} and the insights developed in Section \ref{sec:s1}. Specifically, $\alpha_{x_{1:n}} = 0.1 n^2$ for both priors, and $\beta_{x_{1:n}} =02$ for the CFPT2 prior. 
For the prior on $N$, we proceed as follows. 
Let $n_{\max}$ be the maximum number of base-10 digits recorded in a given dataset. Then, for the $\mathrm{MBPT}_{10}$ prior, assume that $N$ follows the truncation of a Poisson distribution to $\{0,1, \ldots, n_{\max}\}$, where the Poisson distribution has mean $n_{\max} / 2$.
For the $\mathrm{MBPT}_{2}$, GFPT1, and the GFPT2 priors, assume that $N$ follows the truncation of a Poisson distribution to $\{0,1, \ldots, \lceil \log_2(10^{n_{\max}}) \rceil \}$, where the Poisson distribution has mean $\log_2(10^{n_{\max}}) / 2$. This ensures that under all five priors of $N$, the number of possible values of the sufficient digits is approximately the same. 
Further, for  the $\mathrm{MBPT}_2$ and $\mathrm{MBPT}_{10}$ priors, we fix $T$ as the span of the orders of magnitudes of the data, set $\omega_k \propto 1$ for all $k$, and 
 let $c_n$ from \eqref{e:alpha_benford} be given by $c_n = c_0 n^2$ for $n\in\mathbb N$, 
where $c_0 = 0.1$ in case of $\mathrm{MBPT}_{10}$ and $c_0 = 2.0$ in case of $\mathrm{MBPT}_{2}$. This choice of $c_0$ entails that the variance of the random PDF  in \eqref{e:finite_pt_N} is approximately equal under both MBPT priors.
Finally, for the DPM prior, we use the implementation and follow the default prior elicitation strategy in the \texttt{BayesMix} library \citep{bayesmix}.

The posterior inferences using the different priors are compared in term of the widely applicable information criterion (WAIC), see \cite{watanabe}, which is a consistent estimator of the out-of-sample error \citep{Vehtari}.
Since the absolute value of the WAIC is irrelevant, we fix the DPM as a baseline prior and report in Table \ref{tab:model_comparison} the relative improvement of the WAIC for any model over the DPM (larger values of the relative improvement  correspond to better model performance, while negative values indicate that the DPM achieves a better fit to the data). 
For datasets \emph{eurodist}, \emph{GDP}, and \emph{census}, all priors seem to perform similarly. For the datasets \emph{Twitter} and \emph{income} datasets, the multiscale Benford P{\'o}lya tree priors clearly outperforms all other priors, where in particular, the $\mathrm{MBPT}_{10}$ prior shows a remarkable improvement of roughly 19\% and 25\% over the DPM prior, respectively.
These results confirm the intuition obtained by looking at the diagnostics based on the \emph{log-mantissa} and the span of the datasets discussed above.
Moreover, in all scenarios, the  WAIC when using the GFPT2 prior slightly improves as compared to the GFPT1 prior.
Although not reported here, we have also tried fitting a variant of the MBPT models where the parameters $\alpha$ are not specified in accordance to the Newcomb-Benford law but according to the standard choice of \eqref{e:alpha-first}. We noticed substantial deterioration in the WAICs across all datasets, and in particular for \emph{Twitter}, for which this variant performs worse than the DPM baseline, and \emph{income}, for which this variant performs on par to the baseline.

We now try to explain the reasons behind the large improvements of the $\mathrm{MBPT}_{10}$ model  \emph{Twitter}, \emph{income} and \emph{census} data. For \emph{Twitter} and \emph{income}, we believe that the main driver is human-driven bunching around thresholds: For instance, Twitter users are incentivized to reach a given number of followers either for monetization (e.g., power of ten followers) or for psychological reasons. Similarly, salary bands are usually set just above psychological or contractual cut-offs.
For the \emph{census} data instead, there might be population counts clustering near administrative thresholds for municipal classification.
Such repeating motifs are naturally expressed as stable digit patterns within each order of magnitude, which is what our MBPT models enforce.

\begin{table*}[ht]
	\centering
\begin{tabular}{llllll}
	\toprule
	 Dataset & GFPT1 & GFPT2  & $\mathrm{MBPT}_{2}$ & $\mathrm{MBPT}_{10}$ & DPM \\
	\midrule
	Twitter & -0.015 & 0.006 & 0.150 & 0.189 & 0.000 \\
	Eurodist & -0.000 & 0.007 & -0.004 & -0.059 & 0.000 \\
	GDP & -0.002 & 0.060 & 0.059 & 0.049 & 0.000 \\
	Census & -0.031 & 0.035 & 0.036 & 0.084 & 0.000 \\
	Income & 0.002 & 0.045 & 0.210 & 0.247 & 0.000 \\
	\bottomrule
\end{tabular}
\caption{Relative improvement of the WAIC over the Dirichlet process mixture prior.}
\label{tab:model_comparison}
\end{table*}

Finally, we comment on the number of sufficient digits needed in the different datasets according to the GFPT and MBPT models.
Table \ref{tab:model_comparison_n} reports the (estimated) a posteriori expectation of $N$ in the different models.
Note that $N$ is typically small a posteriori, despite $n_{\max}$ being equal to 
six, four, seventeen, eight, and six in the different datasets respectively. 
Recall that the prior for $N$ is truncated at $n_{\max}$ for the $\mathrm{MBPT}_{10}$ prior and to $\lceil\log_2(10^{n_{\max}}) \rceil$ 
for the other priors.
Hence, the sufficient number of digits are effectively always less than $n_{\max}$. 
Except for the Twitter and income datasets, we observe that $N$ tends to be smaller a posteriori under the $\mathrm{MBPT}_2$ prior than under the GFPT1 and GFPT2 priors. This entails that the max-scaling of the data disaligns the data from the nested binary partitions. Indeed, using the GFPT2 prior, the posterior adaptively learns the NBP, leading to a smaller $N$ than under the GFPT1 prior that works with the fixed NBP. 
For all the different datasets, a posteriori $N$ tends to be smaller under the $\mathrm{MBPT}_{10}$ prior, 
however, the possible values taken by $X_{1:N}$ is $10^{N}$ under the $\mathrm{MBPT}_{10}$ prior and (approximately) $2^{N}$ under the other models. 
We report the corresponding (estimated) posterior expectations in Table \ref{tab:model_comparison_n}.
When using the $\mathrm{MBPT}_{10}$ prior a larger number of possible digits is required a posteriori than if using another of the priors, but 
compared with the $\mathrm{MBPT}_{2}$ prior, the difference may not be extreme, since the scale is coarser for $10^N$ than for $2^N$.
On the other hand, there are several situations in which the $\mathrm{GPFT2}$ prior causes a posteriori a much lower number of possible digits than if using another of the priors, thanks to the added flexibility of learning the NBP.

\begin{table}[ht]
	\centering
\begin{tabular}{l|cccc|cccc}
	\toprule
	&  \multicolumn{4}{c}{$\mathrm E[N\mid Z_{1:M}]$} & \multicolumn{4}{|c}{$\mathrm E[q^{N}\mid Z_{1:M}]$} \\\midrule
	Dataset & GFPT1 & GFPT2  & $\mathrm{MBPT}_{2}$ & $\mathrm{MBPT}_{10}$ & GFPT1 & GFPT2  & $\mathrm{MBPT}_{2}$ & $\mathrm{MBPT}_{10}$\\
	\midrule
	Twitter & 9.0 & 7.0 & 12.0 & 4.0 & 512.80 & 128 & 4096 & 10000  \\
	Eurodist & 2.93 & 2.78 & 2.25 & 1.9 & 7.72 & 7.51 & 6.25 & 98.72 \\
	GDP & 5.0 & 5.0 & 2.92 & 1.9 & 32.32 & 32.00 & 8.86 & 99.93 \\
	Census &  10.0 & 6.0 & 4 & 1.9 & 1024 & 64 & 16 & 99.99 \\
	Income & 10.0 & 6.0 & 12.0 & 4.0& 1024 & 64 & 4096 & 10000 \\
	\bottomrule
\end{tabular}
\caption{ For the five datasets and four priors, posterior expectation of $N$ and 
$q^{N}$ with $q=10$ for the  $\mathrm{MBPT}_{10}$ prior and $q=2$ for the remaining ones.}
\label{tab:model_comparison_n}
\end{table}

\section{Concluding remarks and open problems}\label{s:conclusion} 
{
This paper develops a family of {generalised finite P{\'o}lya tree} (GFPT) priors for a random PDF in which the {random sufficient digits} are modelled directly.  
Working with digits rather than with more classical quantities such as the mean or the variance is admittedly non-standard and, at first sight, conceptually harder.  
However, the vast literature developed for prior elicitation in standard P{\'o}lya trees can be clearly adapted in our setting as well.
As shown by our construction of the multiscale Benford P{\'o}lya tree prior, there are situations in which reasoning about the distribution of the digits is both more natural and leads to significant performance gain in the quality of posterior inference.

A natural extensions is to consider multivariate data.
 Two simple approaches would be to define  a product nested binary partition on the unit cube and model each dimension independently with a GFPT1 or GFPT2 prior, or to rely on coordinate values given by space-filling curves such as Morton's or Hilbert's curves.
However, we anticipate that even in these simple extensions, more efficient algorithms for posterior inference would need to be devised in order to scale to high-dimensional settings, perhaps borrowing ideas from \cite{AwayaMa}.
On the other hand, more complex models would be needed to infer the dependence across dimensions. We leave this problems as interesting ideas for future research. Moreover, we also plan to investigate a
slight modification of the GFPT2 model obtained by the mixture distribution
 \[
	R_{x_{1:n}} \mid R_{x_{1:n-1}} \sim \begin{cases}
		\pi \delta_{\{1\}} + (1 - \pi) B(\beta_{x_{1:n-1},0}, \beta_{x_{1:n-1}, 1}) & \text{if }  R_{x_{1:n-1}} < 1
		\\
		\delta_{\{1\}} & \text{if }  R_{x_{1:n-1}} = 1
	\end{cases}
\]
where $0<\pi<1$ is a parameter. In this way, we introduce an ``optional'' stopping of the partitioning in some regions of the domain, similarly to what is done in \cite{WongMa}.

Theorem \ref{thm:hell_consistency} establishes  under weak assumptions almost sure consistency in the Hellinger distance  of the random PDF a posteriori when using  GFPT1 priors. We leave it as an open problem to establish similar convergence results under GFPT2 priors.

Ongoing work by us extends the setting in this paper, 
utilizing the numeric system induced by a continued fraction representation. That is, $X_1,X_2,...\in \mathbb N$ are the random digits of $X$ such that $X=1/(X_1+(1/X_2+...))$. 
Thus, the digits take value in an unbounded space, which complicates inference, yet promises sharper inference because continued fractions give the best rational approximations of real numbers.
}

\section*{Acknowledgements} 
\noindent 
Mario Beraha gratefully acknowledges support from the Italian Ministry of Education, University and Research (MUR), ``Dipartimenti di Eccellenza'' grant 2023-2027.
Jesper M{\o}ller is supported by The Danish Council for Independent Research —
Natural Sciences, grant DFF – 10.46540/2032-00005B. 

\FloatBarrier

\clearpage

\appendix

\begin{center}
   \LARGE Supplemental material for:\\
    ``Sufficient digits and density estimation: A Bayesian nonparametric approach using generalized finite P{\'o}lya trees''
\end{center}

 \setcounter{equation}{0}
\setcounter{figure}{0}
\setcounter{table}{0}

\renewcommand{\theequation}{\Alph{section}\arabic{equation}}
\renewcommand{\thefigure}{\Alph{section}\arabic{figure}}
\renewcommand{\thetable}{\Alph{section}\arabic{table}}

\begin{proof}[Proof of Theorem \ref{t:1}]
	The existence and coupling of $N$ with $X$ such that \eqref{e:res1} is satisfied follows from \cite{supp-moeller}. 
	
	Let the situation be as in the second part of the theorem. 
	Clearly, $g$ is a PDF on $H$. Recall that a set $U\subseteq H$ is open (with respect to $H$) if it is the intersection of $H$ with an open subset of $\mathbb R$. The following facts 
	are well-known, see e.g.\ \cite{edwards}. 
	\begin{enumerate}
	\item[(a)] A 
	real function $h$ is LSC on $H$ if and only if for every $u>0$ the set $U=\{x\in H\,|\,h(x)>u\}$ is open (with respect to $H$), i.e.,  
	for any $x\in U$ there is a neighbourhood $U_x$ which means that $x\in U_x\subseteq U$ and $U_x$ is open (with respect to $H$).
	\item[(b)] A finite sum of LSC functions on $H$ is a LSC function on $H$.
	\end{enumerate}
	   Now, suppose $x\in H$ and $0<u<g(x)$. Since $g(x)=\sum_{n=0}^\infty g_n(x)$, there exists some $k_x\in\mathbb N_0$ so that $\sum_{n=0}^{k_x} g_n(x)>u$. By (b), $\sum_{n=0}^{k_x} g_n$ is a LSC function on $H$, and hence by (a), the set $U_x=\{y\in H\,|\,\sum_{n=0}^{k_x} g_n(y)>u\}$ is open (with respect to $H$). Since $x\in U_x\subseteq U$, it follows that $g$ is LSC on $H$. The remaining statements in Theorem~\ref{t:1} are obviously true. 
	\end{proof}

\section{Proofs for Section \ref{s:finite_pt}}

\begin{proof}[Proof of Proposition \ref{prop:mean}]
By the total law of expectation and since $N$ and $Y$ are independent,
\[
\mathrm E[P(I_{x_{1:d}})\mid Y] = \sum_{n=0}^\infty p_n \mathrm E[P(I_{x_{1:d}}\mid Y,N=n)]\]
where
\[\mathrm E[P(I_{x_{1:d}}\mid Y,N=n)]=
\begin{cases}
\prod_{j=1}^{d} Y_{x_{1:j}} & \text{if }n\ge d,\\
\frac{\ell_{x_{1:d}}}{\ell_{x_{1:n}}}\prod_{j=1}^{n}Y_{x_{1:j}}& \text{if }n< d.
\end{cases}\]
Taking expectations with respect to $Y$ leads to the result.
\end{proof}

\section{Proofs for Section \ref{s:posterior}}

\begin{proof}[Proof of Theorem \ref{t:2}]
Recall the definition in \eqref{e:mun} of the measure $\mu_n$ and the corresponding notation $Y_0^{(n)},A_n,\mathcal F_n$. 
	Then, $Y^{(N)}_0$ has state space $A=\cup_{n=0}^\infty A_n$ which we equip with the smallest $\sigma$-algebra $\mathcal F$ which contains $\cup_{n=0}^\infty\mathcal F_n$. Furthermore, define a reference measure $\mu$ on $(A,\mathcal F)$ so that $\mu(F)=\mu_n(F)$ whenever $F\in\mathcal F_n$ and $n\in\mathbb N_0$.
	
	Let $n\in\mathbb N_0$, $y^{(n)}=0$ if $n=0$, and $y^{(n)}=(y_{x_{1:j}}\mid x_{1:j}\in\{0,1\}^j,j=1,\ldots,n)$ if $n>0$ where $0\le y_{x_{1:j-1},0}\le 1$ and $y_{x_{1:j-1},1}=1-y_{x_{1:j-1},0}$. Then we can identity $y^{(n)}$ by $y^{(n)}_0=(y_{x_{1:j-1},0}\mid x_{1:j-1}\in\{0,1\}^{j-1}, 1\le j\le n)$.  Definitions~\ref{def:beta_seq} and \ref{def:cpt2} give that the distribution of $Y^{(N)}$ (or more precisely $Y^{(N)}_0$) is absolutely continuous with respect to $\mu$, with density
	\[p(y^{(n)})=\begin{cases}
	p_0 & \text{if }n=0,\\
	p_n\prod_{j=1}^n\prod_{x_{1:j-1} \in \{0, 1\}^{j-1}}  \mathrm B({y_{x_{1:j-1}, 0}} \mid \alpha_{x_{1:j-1, 0}},  \alpha_{x_{1:j-1, 1}}) & \text{if }n>0.
	\end{cases}\]
	 By Remark~\ref{rem:bayes_model}, the distribution of
	 $(Y^{(N)},Z_{1:m})$ 
	 is absolutely continuous with respect to the product measure of $\mu$ and the Lebesgue measure on $(0,1)^m$, with 
	 density $p(z_{1:m}, y^{(n)})$
	  such that 
	  for $n>0$, 
	\begin{equation}\label{e:joint1}
		\begin{aligned}
			& p(y^{(n)},z_{1:m}) =  p(y^{(n)}) \frac{\prod_{j=1}^n\prod_{x_{1:j-1} \in \{0, 1\}^{j-1}} y_{x_{1:j-1},0}^{n_{x_{1:j-1},0}(z_{1:m})} y_{x_{1:j-1},1}^{n_{x_{1:j-1},1}(z_{1:m})}}{{\prod_{x_{1:n}\in \{0, 1\}^n} \ell_{x_{1:n}}^{n_{x_{1:n}}(z_{1:m})} }} \\ 
			& = p_n \frac{\prod_{j=1}^n  
			 \prod_{x_{1:j-1} \in \{0, 1\}^{j-1}} y_{x_{1:j-1},0}^{n_{x_{1:j-1},0}(z_{1:m})} y_{x_{1:j-1},1}^{n_{x_{1:j-1},1}(z_{1:m})} \mathrm B({y_{x_{1:j-1}, 0}} \mid \alpha_{x_{1:j-1},0},  \alpha_{x_{1:j-1},1})}{{\prod_{x_{1:n}\in \{0, 1\}^n} \ell_{x_{1:n}}^{n_{x_{1:n}}(z_{1:m})} }}
		\end{aligned} 
	\end{equation}
	whilst if $n=0$ then $p(y^{(n)},z_{1:m}) =p_0$.
	
	The conditional distribution of $N$ given $Z_{1:m}=z_{1:m}$ has a PMF $p(n \mid z_{1:m})$ with $p(0\mid z_{1:m})\propto p(0)$ and where for every $n\in\mathbb N$, \eqref{e:joint1} in this supplementary material gives
	\[p(n \mid z_{1:m})  \propto p(n,z_{1:m}) 
		= \int_{A_n} p(y^{(n)},z_{1:m})\, \mathrm d\mu_n(y^{(n)}_0)\]
	which reduces to \eqref{e:post_n_1}. It also follows from \eqref{e:joint1} in this supplementary material that the conditional distribution of
	$Y^{(N)}$ given $(N,Z_{1:m})=(n,z_{1:m})$ is absolutely continuous with respect to $\mu_n$, with 
	 a density $p(y\mid n,z_{1:m})$ for $y\in A_n$ so that if $n>0$,
	\[\begin{aligned} 
	&p(y^{(n)} \mid n, z_{1:m})\\
	&\propto\prod_{j=1}^n  
			 \prod_{x_{1:j-1} \in \{0, 1\}^{j-1}} y_{x_{1:j-1, 0}}^{n_{x_{1:j-1, 0}}(z_{1:m})} y_{x_{1:j-1, 1}}^{n_{x_{1:j-1, 1}}(z_{1:m})} \mathrm B({y_{1:j-1, 0}} \mid \alpha_{x_{1:j-1, 0}},  \alpha_{x_{1:j-1, 1}})
			 \end{aligned} \]
			 whilst if $n=0$ then $Y^{(N)}=0$. 		 
			 This together with Definition~\ref{def:5} and the fact that $X$ and $Y^{(>N)}$ conditioned on $N$ are independent gives the last statement of the theorem.
\end{proof}

\begin{proof}[Proof of Corollary \ref{cor:point_est}] 
	We have
	\begin{align*}
	&\E[f(x\mid Y^{(N)},I^{(N)}) \mid Z_{1:m} = z_{1:m}]  = \E[ \E[  f(x\mid Y^{(N)},I^{(N)}) \mid  N, Z_{1:m}] \mid  Z_{1:m}=z_{1:m} ]\\
	&=\sum_{n=0}^\infty p(n \mid z_{1:m}) \E[ f(x\mid Y^{(N)},I^{(N)}) \mid  N = n, Z_{1:m} = z_{1:m}]\\
	&=\sum_{n=0}^\infty p(n \mid z_{1:m}) 
	\E \left[\frac{\prod_{j=1}^n Y_{x_{1:j-1},0}^{1-x_j} Y_{x_{1:j-1},1}^{x_j}}{\ell_{x_{1:n}}}\, \bigg|\, N=n, Z_{1:m} = z_{1:m}\right] 
	\end{align*}
	where the first equality follows from the law of total expectation and because of the conditional independence in \eqref{e:post-next1} and \eqref{e:post-next2}, the second equality is based on the law of total probability, and
	the last equality follows from \eqref{e:finite_pt_N}. Thereby, using \eqref{e:post-next1} we obtain \eqref{e:PM}.
\end{proof}

\begin{proof}[Proof of Theorem \ref{t:3}]
In a similar way as the proof of the first statement of Theorem~\ref{t:2}, it is possible to analytically marginalize out $Y$ from the posterior of $(N, Y, R)$, whereby the two first statements of Theorem \ref{t:3} follow. The last statement follows immediately from Theorem~\ref{t:2}.
\end{proof}

\section{Proofs for Section \ref{sec:cons_n}}

	\begin{proof}[Proof of Theorem \ref{t:consistency}]
		Theorem \ref{t:consistency} is obviously true if $\mathcal N=\{n^*\}$, so assume $\mathcal N$ has cardinality at least two.
		
		To get rid of the associated normalizing constant in \eqref{e:post_n_1}, we will prove the following equivalent statement of \eqref{e:cons}: For any $\tilde n \in \mathcal N \setminus\{n^*\}$,
		\begin{equation}\label{e:to-verify}
			 \mathrm P^*\left(\lim_{m \rightarrow \infty}{p(n^* \mid Z_{1:m})}/{p(\tilde n \mid Z_{1:m})} =\infty\right)=1.
		\end{equation}
		To establish \eqref{e:to-verify} in this supplementary material we use the following facts, considering any 
		$n \in \mathcal N$ such that $n+1 \in \mathcal N$.
		By the law of large numbers, 
	\begin{equation}\label{e:bb}
		\mathrm P^*\left(\lim_{m\rightarrow\infty}N_{x_{1:n}}/m= \textstyle\int_{I_{x_{1:n}}} f^*(v) \, \mathrm d v \right)=1
	\end{equation}	
	where for any $j\in\mathbb N_0$ and $x_{1:j} \in \{0, 1\}^j$, we use $N_{x_{1:j}}= n_{x_{1:j}}(Z_{1:m})$ as a shorthand notation {\color{black}(this should not be confused with the notation used in Section~\ref{s:bayes_2}).}
		By \eqref{e:post_n_1}, 
		\[
			\frac{p(n \mid Z_{1:m})}{p(n + 1 \mid Z_{1:m})} = \frac{p_n}{p_{n+1}} \times \frac{\prod_{x_{1:(n+1)} \in \{0, 1\}^{n+1}} \ell_{x_{1:(n+1)}}^{N_{x_{1:(n+1)}}}}{\prod_{x_{1:n} \in \{0, 1\}^{n}} \ell_{x_{1:n}}^{N_{x_{1:n}}} \mathrm b(\alpha_{x_{1:n},0} + N_{x_{1:n},0}, \alpha_{x_{1:n},1} + N_{x_{1:n},1} )}. 
		\]
		 Since $N_{x_{1:n}} = N_{x_{1:n}, 0} + N_{x_{1:n}, 1}$ and $\ell_{x_{1:n}} = \ell_{x_{1:n}, 0} + \ell_{x_{1:n}, 1}$, we get
		\begin{align}
			\frac{p(n \mid Z_{1:m})}{p(n + 1 \mid Z_{1:m})} &= \frac{p_n}{p_{n+1}} \prod_{x_{1:n} \in \{0, 1\}^n} \frac{\ell_{x_{1:n}, 0}^{N_{x_{1:n}, 0}} \ell_{x_{1:n}, 1}^{N_{x_{1:n}, 1}}}{(\ell_{x_{1:n}, 0} + \ell_{x_{1:n}, 1})^{N_{x_{1:n}, 0} + N_{x_{1:n}, 1}} } \nonumber\\ 
			& \hspace{2cm} \times  \frac{1}{\mathrm b(\alpha_{x_{1:n},0} + N_{x_{1:n},0}, \alpha_{x_{1:n},1} + N_{x_{1:n},1} )} . \label{e:aa}
	\end{align}
	Define $$i_{x_{1:n}}=\begin{cases}
	1 & \text{if }\int_{I_{x_{1:n}}} f^*(v) \, \mathrm d v > 0,\\ 0 & \text{otherwise}.
	\end{cases}$$ 
	If $i_{x_{1:n}}=0$ then $\mathrm P^*(N_{x_{1:n}}=0)=1$. 
	For two sequences $a_1,a_2,\ldots$ and $b_1,b_2,\ldots$ of real numbers, write $a_m \asymp b_m$ if $\lim_{m\rightarrow \infty} a_m / b_m = 1$. 
	In the remainder of this proof it is implicit that any convergence result hold almost surely under the true distribution as $m\to\infty$.
	Thus,
	\begin{align}
			&\frac{p(n \mid Z_{1:m})}{p(n + 1 \mid Z_{1:m})} 	
			 \asymp{\color{black}\frac{p_n}{p_{n+1}}}\times\nonumber\\
			 &\prod_{\substack{x_{1:n} \in \{0, 1\}^n:\\ 
			 i_{x_{1:n}}=1}} \frac{\ell_{x_{1:n}, 0}^{N_{x_{1:n}, 0}} \ell_{x_{1:n}, 1}^{N_{x_{1:n}, 1}}}{(\ell_{x_{1:n}, 0} + \ell_{x_{1:n}, 1})^{N_{x_{1:n}, 0} + N_{x_{1:n}, 1}} } 
			 \frac{(N_{x_{1:n}, 0} + N_{x_{1:n}, 1})^{N_{x_{1:n}, 0} + N_{x_{1:n}, 1} - 1/2} }{{\color{black}\sqrt{2\pi}}\,N_{x_{1:n}, 0}^{N_{x_{1:n}, 0} - 1/2} N_{x_{1:n}, 1}^{N_{x_{1:n}, 1}{\color{black}-1/2}}} \label{e:cc}
		\end{align}	
	thanks to equations \eqref{e:bb} and \eqref{e:aa} in this supplementary material, the condition on the $\alpha_{x_{1:n}}$'s, and Stirling's approximation of the beta function: $$\mathrm b(x, y) =  \sqrt{2\pi}\frac{x^{x-1/2} y^{y - 1/2}}{(x + y)^{x + y - 1/2}}\left(1+O(1/x)+O(1/y)\right).$$ 
		
	Now, consider any integer  $n \in \mathcal N$ with $n > n^*$.
	By Theorem~\ref{t:1}, for any $y\in I_{x_{1:n}}$  we have that $f^*(y)$ is constant and $\int_{I_{x_{1:n}}} f^*(v) \, \mathrm d v = f^*(y)\ell_{x_{1:n}}$. Hence, for $k=0,1$ and $i_{x_{1:n}}=1$, it follows from \eqref{e:bb} in this supplementary material that 
	\begin{equation}\label{e:mmm}
	\mathrm P^*\left(\lim_{m\rightarrow\infty}N_{x_{1:n},k} / N_{x_{1:n}}=\ell_{x_{1:n}, k} / \ell_{x_{1:n}}\right)=1.
	\end{equation}
	By equations \eqref{e:cc} and \eqref{e:mmm} in this supplementary material,
		\begin{align}
			\frac{p(n \mid Z_{1:m})}{p(n + 1 \mid Z_{1:m})} & \asymp
			\frac{p_n}{p_{n+1}}
			 \prod_{\substack{x_{1:n} \in \{0, 1\}^n:\\ i_{x_{1:n}}=1}} \frac{\ell_{x_{1:n}, 0}^{N_{x_{1:n}, 0}} }{(\ell_{x_{1:n}, 0} + \ell_{x_{1:n}, 1})^{N_{x_{1:n}, 0} } } 
			\frac{\ell_{x_{1:n}, 1}^{N_{x_{1:n}, 1}}}{(\ell_{x_{1:n}, 0} + \ell_{x_{1:n}, 1})^{ N_{x_{1:n}, 1}} } \nonumber\\
			&  \times \left(\frac{N_{x_{1:n}, 0} + N_{x_{1:n}, 1}}{N_{x_{1:n}, 0}}\right)^{N_{x_{1:n}, 0} - 1/2}   \left(\frac{N_{x_{1:n}, 0} + N_{x_{1:n}, 1}}{N_{x_{1:n}, 1}}\right)^{N_{x_{1:n}, 1} - 1/2}   \nonumber\\
			& \times [(N_{x_{1:n}, 0} + N_{x_{1:n}, 1})^{1/2}/{\color{black}\sqrt{2\pi}}]\nonumber\\
			& \asymp {\color{black}\frac{p_n}{p_{n+1}}}
			\prod_{x_{1:n} \in \{0, 1\}^n: \,i_{x_{1:n}}=1} (N_{x_{1:n}} / {(2\pi)})^{1/2}.\label{e:shit}
		\end{align}
	By \eqref{e:assumption}, the term $p_{n} / p_{n+1}$ is strictly positive and bounded, so \eqref{e:shit} in this supplementary material gives
	\begin{equation}\label{e:shitshit}
		\frac{p(n \mid Z_{1:m})}{p(n + 1 \mid Z_{1:m})}\to \infty. 
	\end{equation}
	Let $\tilde n \in \mathcal N$ with $\tilde n > n^*$. By \eqref{e:assumption}, $\{\tilde n, \tilde n-1,\ldots,n^*\} \subseteq \mathcal N$. Hence,  by  \eqref{e:shitshit} in this supplementary material, 
	\[
		\frac{p(n^* \mid Z_{1:m})}{p(\tilde n \mid Z_{1:m})} = \frac{p(n^* \mid Z_{1:m})}{p(n^*+1 \mid Z_{1:m})} \cdots \frac{p(\tilde n - 1 \mid Z_{1:m})}{p(\tilde n \mid Z_{1:m})}\to\infty,
	\]
	whereby \eqref{e:to-verify} in this supplementary material is verified.
	
	Consider instead any $n \in \mathcal N$ with $n < n^*$.
	Define $d_{x_{1:n}} = 0$ if $f^*$ is  constant over $I_{x_{1:n}}$ and $d_{x_{1:n}} = 1$ otherwise. 
		Then \eqref{e:cc} in this supplementary material writes
		\begin{align*}
			&\frac{p(n \mid Z_{1:m})}{p(n + 1  \mid Z_{1:m})} 	
			 \asymp{\frac{p_n}{p_{n+1}}}  \\
			&  \times \prod_{\substack{x_{1:n} \in \{0, 1\}^n: \\ d_{x_{1:n}}=0,\,i_{x_{1:n}} = 1}} \frac{\ell_{x_{1:n}, 0}^{N_{x_{1:n}, 0}} \ell_{x_{1:n}, 1}^{N_{x_{1:n}, 1}}}{\ell_{x_{1:n}}^{N_{x_{1:n}, 0} + N_{x_{1:n}, 1}} }  \frac{(N_{x_{1:n}, 0} + N_{x_{1:n}, 1})^{N_{x_{1:n}, 0} + N_{x_{1:n}, 1} - 1/2} }{{\sqrt{2\pi}}\,N_{x_{1:n}, 0}^{N_{x_{1:n}, 0} - 1/2} N_{x_{1:n}, 1}^{N_{x_{1:n}, 1}{-1/2}}}  \\
			& \times \prod_{\substack{x_{1:n} \in \{0, 1\}^n: \\ d_{x_{1:n}}=1,\,i_{x_{1:n}} = 1}} \frac{\ell_{x_{1:n}, 0}^{N_{x_{1:n}, 0}} \ell_{x_{1:n}, 1}^{N_{x_{1:n}, 1}}}{\ell_{x_{1:n}}^{N_{x_{1:n}, 0} + N_{x_{1:n}, 1}} }  \frac{(N_{x_{1:n}, 0} + N_{x_{1:n}, 1})^{N_{x_{1:n}, 0} + N_{x_{1:n}, 1} - 1/2} }{{\sqrt{2\pi}}\,N_{x_{1:n}, 0}^{N_{x_{1:n}, 0} - 1/2} N_{x_{1:n}, 1}^{N_{x_{1:n}, 1}{-1/2}}}.
		\end{align*}
	In a similar way as \eqref{e:shit} in this supplementary material was obtained, we see that each  term in the first product above is $O((N_{x_{1:n}} / {(2\pi)})^{1/2})$. 
		Consider any $x_{1:n}$ with $d_{x_{1:n}} = 1$ and $i_{x_{1:n}} = 1$. Defining $$\epsilon_{x_{1:n}}={\int_{I_{x_{1:n},0}} f^*(v) \, \mathrm d v}\bigg/{\int_{I_{x_{1:n}}} f^*(v) \, \mathrm d v},$$ then $\epsilon_{x_{1:n}} \in (0, 1)$.  By \eqref{e:bb} in this supplementary material,
		\[
			\frac{N_{x_{1:n}, 0}}{N_{x_{1:n}}} \rightarrow   \epsilon_{x_{1:n},0}, \quad \frac{N_{x_{1:n},1}}{N_{x_{1:n}}} \rightarrow \epsilon_{x_{1:n},1} = 1 - \epsilon_{x_{1:n},0}.
		\]
		  Therefore, 
		\begin{align*}
			& \frac{\ell_{x_{1:n}, 0}^{N_{x_{1:n}, 0}} \ell_{x_{1:n}, 1}^{N_{x_{1:n}, 1}}}{\ell_{x_{1:n}}^{N_{x_{1:n}, 0} + N_{x_{1:n}, 1}} }  \frac{(N_{x_{1:n}, 0} + N_{x_{1:n}, 1})^{N_{x_{1:n}, 0} + N_{x_{1:n}, 1} - 1/2} }{{\sqrt{2\pi}}\,N_{x_{1:n}, 0}^{N_{x_{1:n}, 0} - 1/2} N_{x_{1:n}, 1}^{N_{x_{1:n}, 1}{-1/2}}}\\
		& \qquad = \frac{\ell_{x_{1:n}, 0}^{N_{x_{1:n}, 0}} \ell_{x_{1:n}, 1}^{N_{x_{1:n}, 1}}}{\ell_{x_{1:n}}^{N_{x_{1:n}, 0} + N_{x_{1:n}, 1}} } 
		\frac{N_{x_{1:n}0}^{N_{x_{1:n},0}-1/2}}{N_{x_{1:n},0}^{N_{x_{1:n},0}-1/2}}	 \frac{N_{x_{1:n}}^{N_{x_{1:n},1}-1/2}}{N_{x_{1:n},1}^{N_{x_{1:n},0}-1/2}} N_{x_{1:n}} / \sqrt{2\pi}	
			 \\
			&  \qquad \asymp \left(\frac{\ell_{x_{1:n},0}}{\ell_{x_{1:n}} \epsilon_{x_{1:n},0}}\right)^{N_{x_{1:n}}\epsilon_{x_{1:n},0}} \left(\frac{\ell_{x_{1:n},1}}{\ell_{x_{1:n}} (1-\epsilon_{x_{1:n},0})}\right)^{N_{x_{1:n}}(1-\epsilon_{x_{1:n},0})} \\
			& \qquad \qquad \times \epsilon_{x_{1:n},0}^{1/2} (1 - \epsilon_{x_{1:n},0})^{1/2} N_{x_{1:n}}^{1/2} / \sqrt{2\pi}
		\end{align*} 
	which is $O(m^{1/2})$ if $\ell_{x_{1:n}, 0} / \ell_{x_{1:n}} = \epsilon_{x_{1:n},0}$ and converges exponentially fast to 0 otherwise.
		To prove \eqref{e:to-verify} in this supplementary material, since for  $\tilde n \in \mathcal N$ with $\tilde n < n^*$,
		\[
			\frac{p(n^* \mid Z_{1:m})}{p(\tilde n \mid Z_{1:m})} = \frac{p(n^* \mid Z_{1:m})}{p(n^*-1 \mid Z_{1:m})} \cdots \frac{p(\tilde n + 1 \mid Z_{1:m})}{p(\tilde n \mid Z_{1:m})},
		\]
		  it is sufficient that there exists one $x_{1:j} \in \{0, 1\}^j$ with $\tilde n \leq j \leq n^*$ and $\ell_{x_{1:j}, 0} / \ell_{x_{1:j}} \neq \epsilon_{x_{1:j},0}$.
		By contradiction, suppose this is false whenever $\tilde n\le j \leq n^*$, that is, $\ell_{x_{1:j}, 0} / \ell_{x_{1:j}} = \epsilon_{x_{1:j},0}$ whenever $x_{1:j}\in\{0,1\}^j$ and $\tilde n\le j \leq n^*$.
		{\color{black}Then, for any integer $n \geq \tilde n$ and any $x_{1:n}\in\{0,1\}^n$, 
			\begin{align*}
				\mathrm P^*(Z_i \in I_{x_{1:n}}& \mid Z_i \in I_{x_{1:\tilde n}})\\& = \mathrm P^*(Z_i \in I_{x_{1:n}} \mid Z_i \in I_{x_{1:(n-1)}}) \cdots \mathrm P^*(Z_i \in I_{x_{1:(\tilde n+1)}} \mid Z_i \in I_{x_{1:\tilde n}}) \\
				&= \epsilon_{x_{1:n}} \epsilon_{x_{1:(n-1)}} \cdots \epsilon_{x_{1:(\tilde n + 1)}} = \frac{\ell_{x_{1:n}}}{\ell_{x_{1:\tilde n}}}.
			\end{align*}
			Hence, by the $\pi$-$\lambda$ theorem,  under the true model $Z_i \mid Z_i \in I_{x_{1:\tilde n}}$ is uniformly distributed on $I_{x_{1:\tilde n}}$. This is a contradiction since it means that $\tilde n$ digits were sufficient under the true model but we have assumed that $\tilde n < n^*$.
		}
	
	
	\end{proof}

\begin{proof}[Proof of Theorem \ref{t:inf_n}]
	{\color{black} Arguing as in the proof of Theorem~\ref{t:consistency} (case $n < n^*$), the condition implies that for any $n \in \mathbb N_0$, almost surely under the true model, $p(n \mid Z_{1:m}) / p(n+1 \mid Z_{1:m})\rightarrow0$ as $m\rightarrow\infty$. Thus the} result follows by induction.
\end{proof}	

\begin{proof}[Proof of Theorem \ref{thm:cons2}]
 Define $\tilde N_{x_{1:j}}$ with respect to $\tilde Z_{1:m}$ in the same way as we defined $N_{x_{1:j}}$ with respect to $Z_{1:m}$ in the proof of Theorem~\ref{t:consistency}. 
	Consider any integer $n > n^*$. By \eqref{e:post_n_1},
	\begin{align*}
		\frac{p(n^* \mid\tilde Z_{1:m})}{p(n \mid\tilde Z_{1:m})} &= \frac{p_{n^*}}{p_{n}} \frac{\prod_{x_{1:n} \in \{0, 1\}^n} \ell_{x_{1:n}}^{\tilde N_{x_{1:n}}}}{\prod_{x_{1:n^*} \in \{0, 1\}^{n^*}} \ell_{x_{1:n^*}}^{\tilde N_{x_{1:n^*}}}} \\ 
		& \hspace{2cm} \times  \frac{1}{\prod_{j=n^*+1}^n \prod_{x_{1:j} \in \{0, 1\}^j} \mathrm b(\alpha_{x_{1:j},0} +\tilde N_{x_{1:j},0}, \alpha_{x_{1:j},1} +\tilde N_{x_{1:j},1} )}. 
	\end{align*}
	 By \eqref{e:caseb}, for $j=\bar n,\bar n+1,\ldots$ and every $x_{1:j}\in\{0,1\}^j$, we have almost surely under the true distribution that
	\[
	\tilde N_{x_{1:j}} = \begin{cases}
			\tilde N_{x_{1:\bar n}} & \text{if } x_{1:j} = (x_{1:\bar n}, 0, \ldots, 0), \\
			0 & \text{otherwise}.
		\end{cases}
	\]
	Further, define $A_j = \{x_{1:j}\in \{0,1\}^j: x_{1:j} = (x_{1:\bar n}, 0, \ldots, 0) \text{ for some } x_{1:\bar n} \}$. Then, 
	\begin{align*}
		\frac{p(n^* \mid\tilde Z_{1:m})}{p(n \mid\tilde Z_{1:m})} &= \frac{p_{n^*}}{p_{n}} \prod_{x_{1:\bar n} \in \{0, 1\}^{\bar n}} \left(\frac{\ell_{x_{1:\bar n}, 0, \ldots, 0}}{\ell_{x_{1:n^*}}}\right)^{\tilde N_{x_{1:\bar n}}} \\
		& \hspace{2cm} \times  \frac{1}{\prod_{j=\bar n+1}^n \prod_{x_{1:j} \in A_j} \mathrm b(\alpha_{x_{1:j},0} +\tilde N_{x_{1:j},0}, \alpha_{x_{1:j},1})} \\
		& \hspace{2cm} \times  \frac{1}{\prod_{j=\bar n+1}^n \prod_{x_{1:j} \in \{0, 1\}^{j} \setminus A_j } \mathrm b(\alpha_{x_{1:j}, 0}, \alpha_{x_{1:j}, 1})}.
	\end{align*}
	Stirling's approximation of the beta function gives $\mathrm b(x, y) = \Gamma(y) x^{-y}(1+O(1/x))$ for fixed $y>0$. Therefore,
	\begin{align*}
		\frac{p(n^* \mid\tilde Z_{1:m})}{p(n \mid\tilde Z_{1:m})} = \left[ \prod_{x_{1:\bar n} \in \{0, 1\}^{\bar n}} \left(\frac{\ell_{x_{1:\bar n}, 0, \ldots, 0}}{\ell_{x_{1:\bar n}}}\right)^{\tilde N_{x_{1:\bar n}}} (N_{x_{1:\bar n}, 0})^{\alpha_{x_{1:j}, 1}} \right] O(1)
	\end{align*}
	where the term $[\cdots]$ goes to $0$ as $m\rightarrow\infty$, since for some $x_{1:\bar n}$, $\tilde N_{x_{1:\bar n}} \rightarrow \infty$ as $m \rightarrow \infty$.
\end{proof}	

\section{Proofs for Section \ref{s:consistency_f}}\label{app:cons_f}

In the following, for two PDFs $f_1$ and $f_2$ on $[0,1)$, let $$\mathrm{KL}(f_1, f_2) = \int_0^1 f_1(x) \log \frac{f_1(x)}{f_2(x)} \mathrm d x$$ denote the Kullback-Leibler divergence of $f_1$ from $f_2$. 

\begin{definition}[Kullback-Leibler support]
Let $\Pi$ be a distribution on the set of PDFs on $[0, 1)$. We say that a PDF $f$ belongs to the Kullback-Leibler support of $\Pi$ if, for any $\varepsilon > 0$,
\[
	\Pi \left(\{g: \mathrm{KL}(f, g) < \varepsilon \}\right) > 0.
\]
\end{definition}

\begin{proof}[Proof of Theorem \ref{thm:weak_consistency}]
	To prove the statement it is sufficient to show that $f^*$ belongs to the Kullback-Leibler support of the prior. This follows from Theorem~6.16 and Example~6.20 in \cite{supp-gvdv}.

	Consider the case of the GFPT1  prior $\Pi_1$, cf.\ \eqref{e:prior11}. Clearly,
	\begin{align*}
		\Pi_1\left(\{f: \mathrm{KL}(f^*, f) < \varepsilon \}\right) &= \sum_{n=0}^\infty \Pi_1\left(\{f: \mathrm{KL}(f^*, f) < \varepsilon \} \mid N = n\right) p_n \\
		& \geq \Pi_1\left(\{f: \mathrm{KL}(f^*, f) < \varepsilon \} \mid N = \bar n \right) p_{\bar n}
	\end{align*}
	for any $\bar n\in\mathbb N_0$. Following closely  the proof of Theorem 7.1 in \cite{supp-gvdv}, 
	we will show that $$\Pi_1\left(\{f: \mathrm{KL}(f^*, f) < \varepsilon \} \mid N = \bar n \right)>0$$ for sufficiently large $\bar n$. Setting 
	 $y_{x_{1:j-1}, k} = \mathrm P^*(I_{x_{1:j-1}, k}) / \mathrm P^*(I_{x_{1:j-1}})$, define the discretization of $f^*$ at level $n$ of the NBP $I$ as 
	\begin{equation}\label{eq:discretized_dens}
		f^*_n(x) = \prod_{j=1}^n y_{x_{1:j-1}, 0}^{1-x_j} y_{x_{1:j-1}, 1}^{x_j} {\ell_{x_{1:j-1}}}/{\ell_{x_{1:j}}}  \quad \text{if } x=.x_1x_2\ldots\in[0,1).
	\end{equation}
	Hence, for almost any $x\in[0,1)$,
	$$f^*(x) = \lim_{n\rightarrow \infty} f^*_n(x) = \prod_{j=1}^\infty  y_{x_{1:j-1}, 0}^{1-x_j} y_{x_{1:j-1}, 1}^{x_j}{\ell_{x_{1:j-1}}}/{\ell_{x_{1:j}}}.$$ 
	In a similar fashion, rewrite the random PDF in \eqref{e:finite_pt_N} as 
	\[
		f(x \mid Y^{(N)}) = \prod_{j=1}^N Y_{x_{1:j-1}, 0}^{1-x_j} Y_{x_{1:j-1}, 1}^{x_j} {\ell_{x_{1:j-1}}}/{\ell_{x_{1:j}}} \quad \text{if } x=.x_1x_2\ldots\in[0,1).
	\]
	Then, when $N=\bar n$, 
	\begin{align*}
		& \mathrm{KL}(f^*, f(\cdot \mid Y^{(\bar n)})) = \int_0^1 f^*(x) \log \frac{f^*(x)}{f(x \mid Y^{(\bar n)})}\, \mathrm d x \\ 
		&= \int_0^1 f^*(x)\left(\sum_{j=1}^{\bar n} \log \frac{y_{x_{1:j-1}, 0}^{1 -x_j} y_{x_{1:j-1}, 1}^{x_j}}{Y_{x_{1:j-1}, 0}^{1 -x_j} Y_{x_{1:j-1}, 1}^{x_j}}  + \sum_{j=\bar n+1}^{\infty} \log \left(y_{x_{1:j-1}, 0}^{1-x_j} y_{x_{1:j-1}, 1}^{x_j} \frac{\ell_{x_{1:j-1}}}{\ell_{x_{1:j}}} \right) \right) \mathrm d x
	\end{align*}
	Again from Theorem 7.1 in \cite{supp-gvdv} we see that for any $\delta > 0$,
	\[
		\Pi_1 \left( \int_0^1 f^*(x)\left(\sum_{j=1}^{\bar n} \log \frac{y_{x_{1:j-1}, 0}^{1 -x_j} y_{x_{1:j-1}, 1}^{x_j}}{Y_{x_{1:j-1}, 0}^{1 -x_j} Y_{x_{1:j-1}, 1}^{x_j}} \right) \mathrm d x < \delta \,  \bigg|\, N = \bar n \right) > 0.
	\] 
	On the other hand, if $\bar n$ is large enough the term 
	\[
		\int_0^1 f^*(x) \left(\sum_{j=\bar n+1}^{\infty} \log \left(y_{x_{1:j-1}, 0}^{1-x_j} y_{x_{1:j-1}, 1}^{x_j} \frac{\ell_{x_{1:j-1}}}{\ell_{x_{1:j}}} \right) \right) \mathrm d x 
	\]
	is negligible by the assumption $\int f^*(x) \log f^*(x)\,\mathrm dx < \infty$ and Lemma B.10 in \cite{supp-gvdv}. Hence, $f^*$ belongs to the Kullback-Leibler support of $\Pi_1$, whereby the proof is complete in case of the GFPT1 prior.

	The proof under the GFPT2 prior follows by observing that conditioned on $R$, the GFPT2 prior coincides with a GFPT1 prior, and then applying Proposition~6.28 in \cite{supp-gvdv}.
\end{proof}

For the proof of Theorem \ref{thm:hell_consistency}, we need some additional notation. For a (semi)metric space $(S, d)$ and a totally bounded subset $C \subset S$, define the $\varepsilon$-covering number of $C$ as the smallest integer of balls of radius $\varepsilon$ (with respect to $d$) needed to cover $C$. Denote this $\varepsilon$-covering number by $N(\varepsilon, C, d)$. 
The proof of Theorem \ref{thm:hell_consistency} is based on the following general result \citep[Theorem 6.23 in][]{supp-gvdv} where $\mathcal P$ denotes the set of absolutely continuous probability distributions on $[0, 1)$ and this set is equipped {\color{black} with the Borel $\sigma$-algebra generated by the Hellinger distance.}

\begin{theorem}\label{thm:general_consistency}
Let $\Pi\in \mathcal P$.	Consider the data model $Z_1, \ldots, Z_m \mid g \iid g$ with prior PDF $g \sim \Pi$, and denote by $\Pi(\cdot \mid Z_{1:m})$ the associated posterior distribution. 
	Assume that the data were generated i.i.d.\ from a probability measure $\mathrm P^*$ with a PDF $f^*$ which belongs to the Kullback-Leibler support of $\Pi$, and 
	  for every $\delta > 0$, there exist partitions $\mathcal P = \mathcal P_{m, 1} \cup \mathcal P_{m, 2}$ and a constant $C>0$ such that for all sufficiently large values of $m$, 
	\begin{itemize}
		\item[$(i)$] $\log N(\delta, \mathcal P_{m, 1}, d_{\mathrm H}) \leq m \delta^2$,
		\item[$(ii)$]   $\Pi(\mathcal P_{m, 2}) \leq \mathrm e^{- C m}$. 
	\end{itemize}
 Then, for any $\varepsilon > 0$,
	\[
		\mathrm P^*\left(\lim_{m \rightarrow \infty} \Pi(\{f \colon d_{\mathrm H}(f, f^*) < \varepsilon \} \mid Z_{1:m}) = 1  \right) = 1.
	\]
\end{theorem}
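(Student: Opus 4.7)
The strategy is to invoke Theorem~\ref{thm:general_consistency} for the prior $\Pi_1$ and verify its two hypotheses: $(\mathrm{I})$ that $f^*$ lies in the Kullback-Leibler support of $\Pi_1$, and $(\mathrm{II})$ that for every $\delta>0$ there is a sieve $\mathcal P_{m,1}\subseteq\mathcal P$ satisfying the entropy bound $\log N(\delta,\mathcal P_{m,1},d_{\mathrm H})\leq m\delta^2$ and the prior-mass bound $\Pi_1(\mathcal P_{m,1}^c)\leq e^{-Cm}$ for some $C>0$. Note that the extra assumption $p_n<Ce^{-cn}$ is used only in $(\mathrm{II})$.

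Ingredient $(\mathrm{I})$ is essentially a byproduct of the proof of Theorem~\ref{thm:weak_consistency}: under the standing assumptions $\int f^*\log f^*<\infty$, $\alpha_{x_{1:n}}>0$, and $p_n>0$, one exhibits a level $\bar n$ for which $\Pi^{(\bar n)}$ places positive mass on any prescribed KL neighbourhood of $f^*$, and since $p_{\bar n}>0$ the mixture $\Pi_1$ inherits KL support at $f^*$. I would simply reuse the argument from the earlier proof, with no additional work.

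The natural candidate for $(\mathrm{II})$ is the level-truncation sieve $\mathcal P_{m,1}=\{f:N(f)\leq N_m\}$. Because $p_n<Ce^{-cn}$, the prior mass of the complement is $\Pi_1(N>N_m)\leq C'e^{-cN_m}$, so the prior-mass condition is satisfied with $N_m$ of order $m$. The entropy side is more delicate: the support of $\Pi^{(n)}$ consists of piecewise constant densities on the $2^n$ cells of $I^{(n)}$, parametrised by the simplex of dimension $2^n-1$ through $p_{x_{1:n}}=\prod_{j=1}^n Y_{x_{1:j-1},x_j}$, and the Hellinger distance between two such densities is precisely the Hellinger distance between the associated probability vectors, whose $\delta$-covering number on the simplex is of order $(C/\delta)^{2^n}$. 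A bare union bound therefore gives only $\log N(\delta,\mathcal P_{m,1},d_{\mathrm H})\lesssim 2^{N_m}\log(1/\delta)$, which clashes with the $m\delta^2$ target once $N_m$ is taken linear in $m$.

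The main obstacle is exactly this tension, and the plan is to resolve it by refining the sieve. Within each $\mathrm{supp}(\Pi^{(n)})$, I would intersect with a high-probability ``typical'' event on which the Beta coordinates $Y_{x_{1:j-1},0}$ concentrate near their means---the choice $\alpha_{x_{1:n-1},k}=\alpha_0 n^2$ from Section~\ref{s:elicitation} makes the Betas tighten as $n$ grows, so Bernstein-type tail bounds combined with a union bound over the $2^n$ coordinates at level $n$ remove only a further exponentially small prior tail. On this typical event the random density is uniformly bounded and the effective range of each $Y_{x_{1:j-1},0}$ shrinks with $n$, allowing a sharper Hellinger covering estimate; the hope is then to tune $N_m$ and the tolerance of the typical event so that the entropy and prior-mass conditions of Theorem~\ref{thm:general_consistency} hold simultaneously. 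Verifying that the constants in this balance actually come out compatible with the stated hypothesis $p_n<Ce^{-cn}$ is the technical crux of the proof.
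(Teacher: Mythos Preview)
First, a bookkeeping point: the statement you were handed, Theorem~\ref{thm:general_consistency}, is quoted verbatim from \cite{gvdv} and is not proved in the paper at all---it is a black-box tool. Your proposal is in fact an attempt at Theorem~\ref{thm:hell_consistency}, which \emph{applies} Theorem~\ref{thm:general_consistency} to the GFPT1 prior. I will therefore compare your outline to the paper's proof of Theorem~\ref{thm:hell_consistency}.

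Your ingredient~$(\mathrm{I})$ matches the paper exactly: KL support is inherited from the proof of Theorem~\ref{thm:weak_consistency}. The divergence is in the sieve. The paper does \emph{not} use the level-truncation sieve $\{N\le N_m\}$; instead it takes, following the proof of Theorem~7.16 in \cite{gvdv}, the set $\mathcal P_{m,1}=\{f:\|\log(f/f_{k_m})\|_\infty\le\varepsilon^2/8\}$ with $k_m\asymp\log m$, where $f_{k_m}$ is the level-$k_m$ discretisation of $f$ defined in \eqref{eq:discretized_dens}. The entropy of this set is handled by the cited argument in \cite{gvdv} and, crucially, any realisation of the GFPT1 density with $N=n\le k_m$ satisfies $f=f_n=f_{k_m}$ identically, so it lies in $\mathcal P_{m,1}$ for free. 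Hence $\Pi_1(\mathcal P_{m,2})\le\sum_{n>k_m}p_n\le De^{-ck_m}$ with no appeal whatsoever to the values of the $\alpha_{x_{1:n}}$. The whole point is that $k_m$ can be taken \emph{logarithmic} rather than linear, which is what dissolves the entropy--prior-mass tension you correctly identified.

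Your proposed fix has a genuine gap. You try to rescue a linear $N_m$ by intersecting with a ``typical'' event on which the Beta coordinates concentrate, and you explicitly invoke the choice $\alpha_{x_{1:n-1},k}=\alpha_0 n^2$ to make the Betas tighten with depth. But Theorem~\ref{thm:hell_consistency} assumes only $\alpha_{x_{1:n}}>0$; no growth rate is imposed. With, say, $\alpha_{x_{1:n}}\equiv 1$, the Betas are uniform at every level and do not concentrate at all, so your refinement collapses and the entropy remains of order $2^{N_m}$. Thus your argument, even if the constants could be balanced in the special case you describe, does not prove the theorem as stated. The paper's sieve sidesteps this entirely because its construction is independent of $\alpha$.
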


\begin{proof}[Proof of Theorem \ref{thm:hell_consistency}]
	We apply Theorem \ref{thm:general_consistency}. Since $f^*$ belongs to the Kullback-Leibler support of the prior, cf.\ the proof of Theorem \ref{thm:weak_consistency}, we just need to verify items $(i)$ and $(ii)$ in Theorem \ref{thm:general_consistency}. To do so, we follow closely the proof of Theorem 7.16 in \cite{supp-gvdv}:
	For a given PDF $f$ on $[0,1)$, denote by $f_k$ its discretization at level $k$ of the NBP $I$, cf.\ \eqref{eq:discretized_dens}. Consider partitions 
	$$\mathcal P_{m, 1} = \{f \colon \|\log (f / f_{k_m})\|_\infty \leq \varepsilon^2 / 8\},\quad\mathcal P_{m, 2} = \mathcal P \setminus \mathcal P_{m, 1},$$  such that $A \log m \leq k_m \leq B \log m$ where $0 < A < B$ are constants.
	Then the proof of item $(i)$ above follows the same lines as the proof of Theorem~7.16 in \cite{supp-gvdv}. Concerning item $(ii)$,
	{\color{black} write 
	$$\Pi_1 \left(f(\cdot\mid Y^{(N)})\in\mathcal P_{m, 2}\right) = \sum_{n = 0}^\infty p_n \Pi_1 \left(f(\cdot\mid Y^{(N)})\in\mathcal P_{m, 2} \,\big|\, N = n\right)$$
	and observe that $\Pi_1 (\cdot \mid N = n)$ gives positive probability only to those densities that are piecewise constant after level $n$ of the NBP $I$.
	Thus,}
	\begin{equation*}
		\Pi_1 \left(f(\cdot\mid Y^{(N)})\in\mathcal P_{m, 2}\right) = \sum_{n = k_m + 1}^{\infty} p_n 
		\Pi_1 \left(f(\cdot\mid Y^{(N)})\in\mathcal P_{m, 2} \,\big|\, N = n\right)
		 \leq D e^{- c k_m}
	\end{equation*}
	for some constant $D$ depending only on $C$ and $c$, cf.\ Theorem~\ref{thm:hell_consistency}. Hence, item~(ii) above follows by taking $\delta = \delta(m) = m^{1/2} \log m$.
\end{proof}

\section{Simulations for data on $\mathbb R$ or $\mathbb R_+$}\label{app:simulations}

As discussed at the end of Section 1.1, our methodology can be applied to data supported on spaces larger than $[0,1)$ by means of suitable transformations. In this appendix, we investigate the impact of such transformations on posterior inference.

\subsection{Data supported on $\mathbb R$}\label{app:simu_r}

We generate $m = 1000$ observations from a continuous distribution on $\mathbb R$, specifically
\[
	Z_1, \ldots, Z_m \iid \frac{1}{3} t_3(-4, 1) + \frac{2}{3} \mathcal N(3, 1),
\]
where $t_a(\mu, s)$ denotes the Student $t$ distribution with $a$ degrees of freedom, location $\mu$, and scale $s$.

Data are mapped to $[0,1)$ via a bijective transformation $g$. In particular, we consider the logistic function $g(x) = 1 / (1 + e^{-x})$ and the probit function $g(x) = \Phi(x)$, where $\Phi$ is the CDF of the standard normal distribution. We fit a standard Pólya tree and a GFPT1 model (with the default prior specification as in Sections \ref{sec:s1} and \ref{sec:simu2}, and $\alpha_0=0.1$) to the transformed data and obtain posterior samples of the random PDF on $[0,1)$ induced by the two models. These posterior samples are then mapped back to $\mathbb R$ via the usual change-of-variable formula, and the density estimate is obtained by taking the pointwise sample average of the transformed densities.

\begin{figure}[t]
	\centering
	\includegraphics[width=0.45\linewidth]{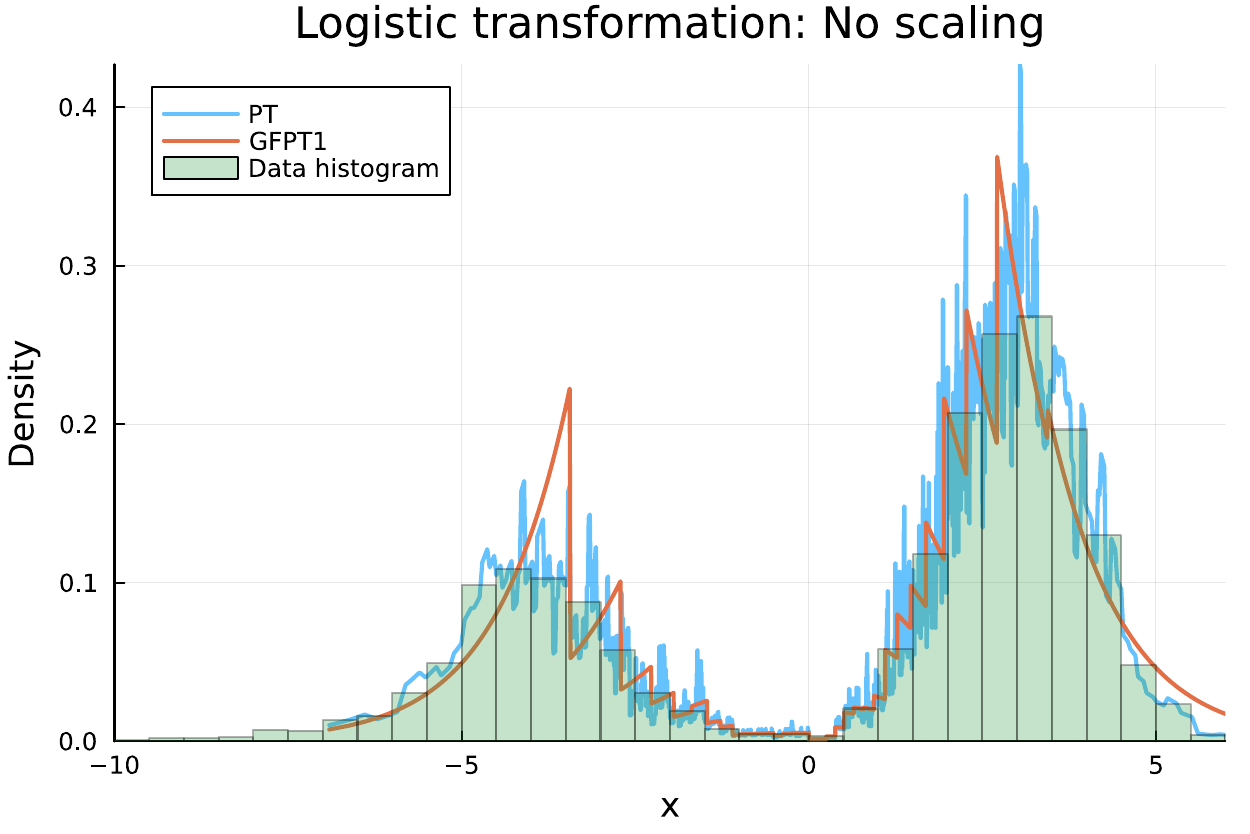}%
	\includegraphics[width=0.45\linewidth]{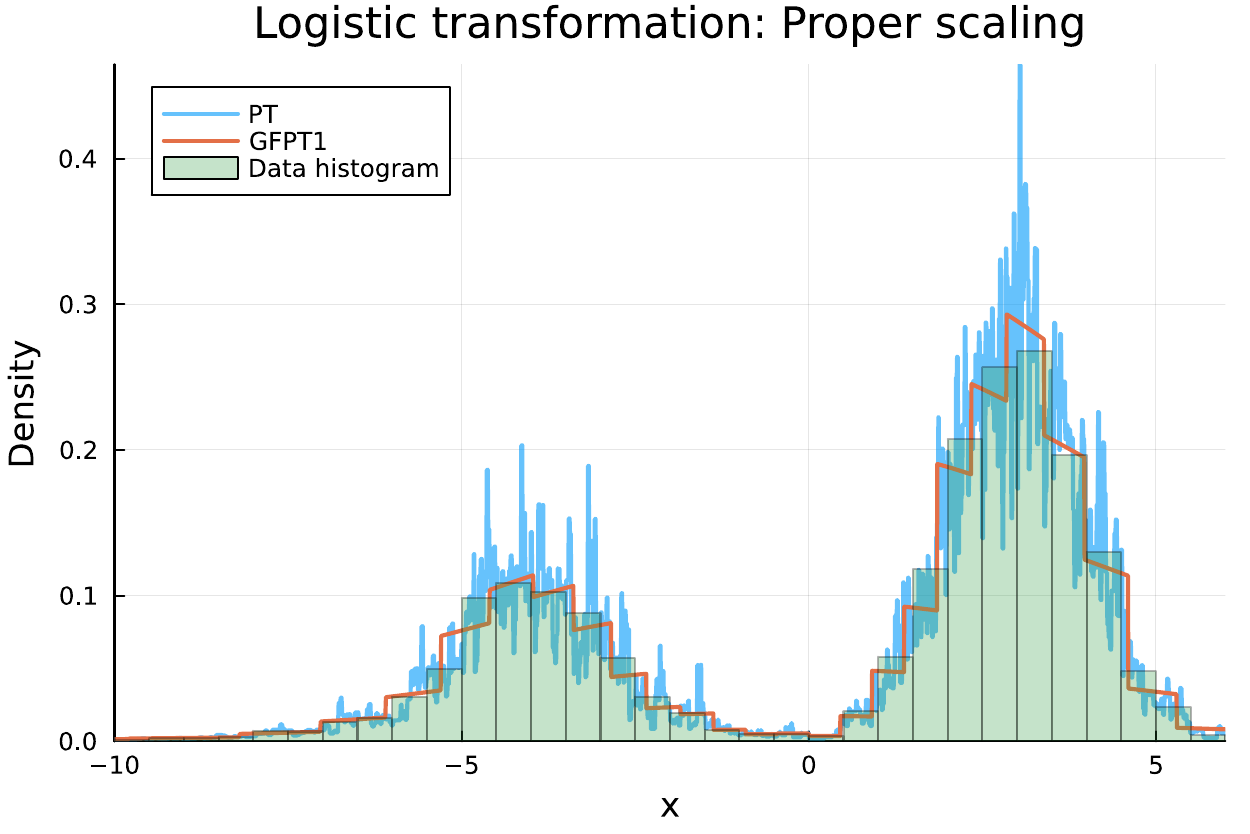}
	\includegraphics[width=0.45\linewidth]{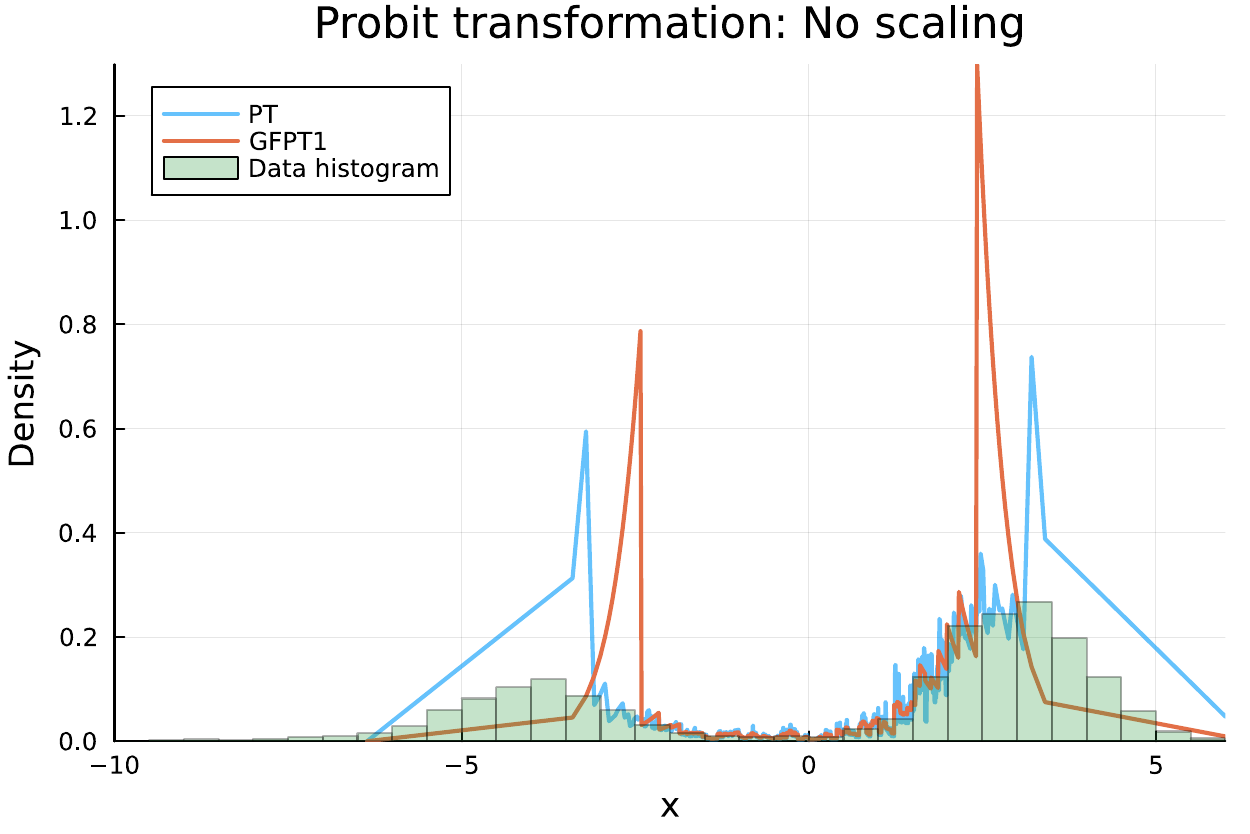}%
	\includegraphics[width=0.45\linewidth]{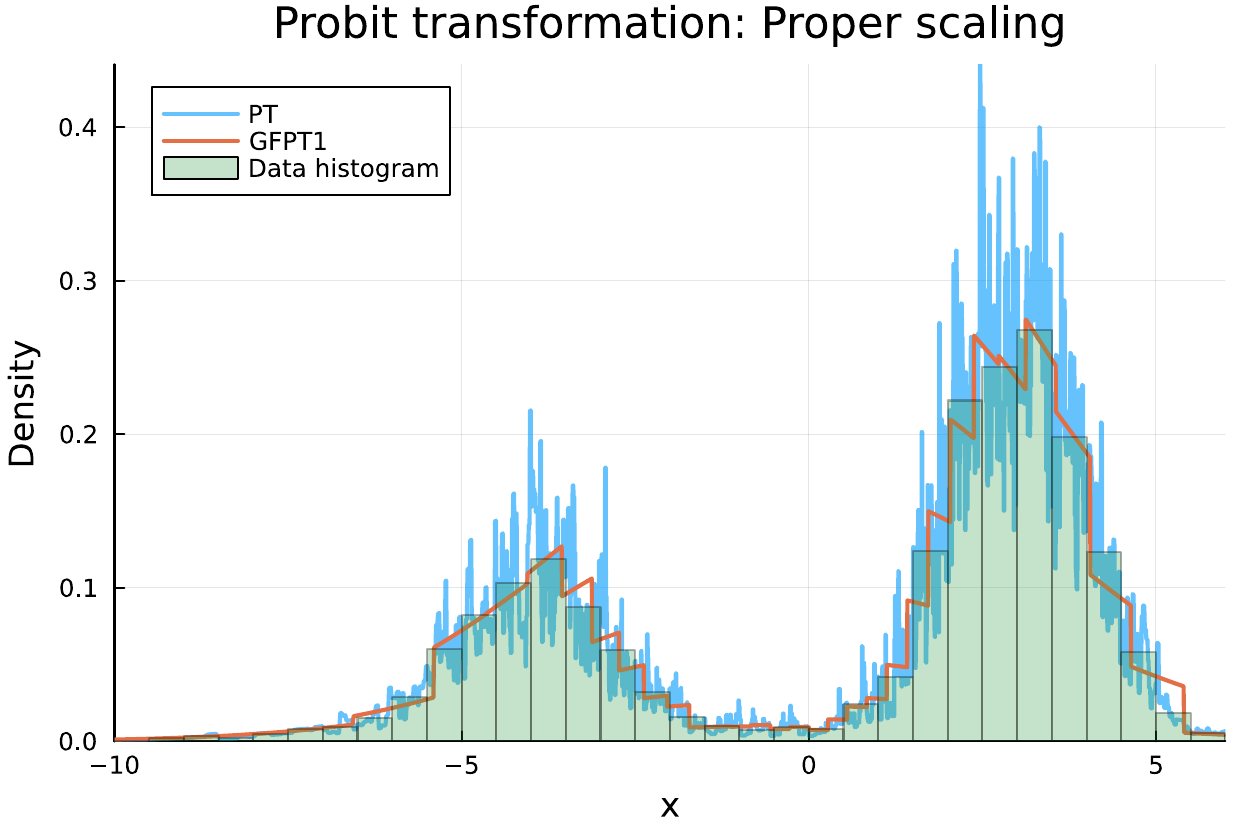}
	\caption{Posterior inference for the simulation in Appendix \ref{app:simu_r}. Top row: posterior density estimates under PT and GFPT1 models when data are transformed using the logistic function, without (left) and with (right) proper scaling. Bottom row: data transformed using the probit function.}
	\label{fig:simu_r}
\end{figure}

The two leftmost panels of Figure \ref{fig:simu_r} show the density estimates obtained using the logistic and probit transformations without scaling. The resulting fits are rather poor, especially under the probit transformation, in the left and right tails of the distribution. This behaviour is a consequence of the fact that the transformation shrinks tail observations towards $0$ and $1$, making it difficult for both the PT and GFPT1 models, when combined with the standard NBP $I$, to capture the tail behaviour.

One possibility would be to adopt an alternative NBP that allocates smaller intervals near the boundaries $0$ and $1$ and larger ones in the interior of the unit interval. A simpler solution is to account for the scale of the data by using a scaled version of the transformation, $g(x/s)$, where $s$ is the empirical standard deviation of the data; we refer to these as transformations with proper scaling. The two rightmost panels of Figure \ref{fig:simu_r} display the density estimates obtained when the properly scaled transformations are used. Posterior density estimates improve substantially, especially under the GFPT1 model. As in Section \ref{sec:s1}, the PT-based estimates remain extremely wiggly, whereas the GFPT1 prior yields smoother density estimates that better adapt to the underlying data-generating distribution.

\subsection{Data supported on $\mathbb R_+$.}\label{app:simu_r_plus}

\begin{figure}[t]
	\centering
	\includegraphics[width=0.6\linewidth]{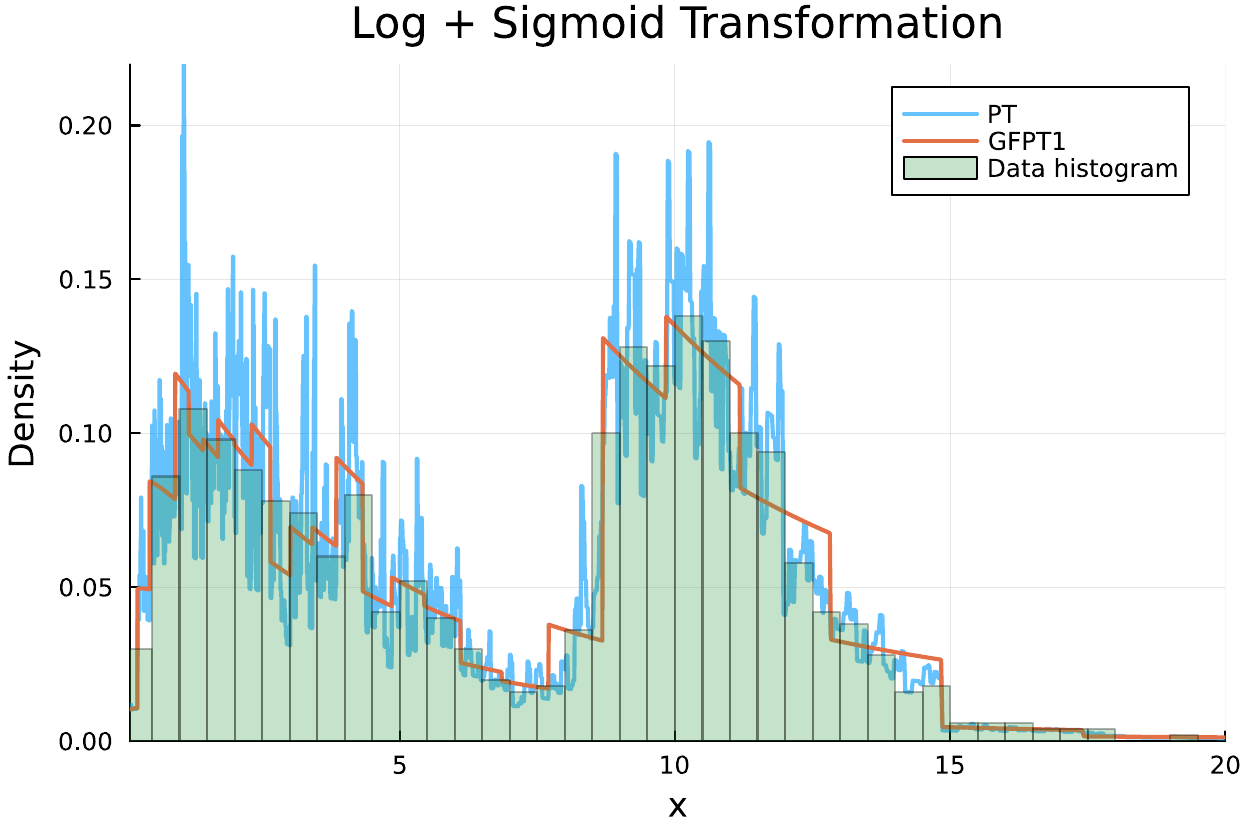}
	\caption{Density estimates for the simulation in Appendix \ref{app:simu_r_plus}.}
	\label{fig:simu_r_plus}
\end{figure}

We generate $m = 1000$ observations from
\[
	Z_1, \ldots, Z_m \iid \frac{1}{2} \mathrm{Ga}(2, 2) + \frac{1}{2} \mathrm{Ga}(3, 1; 8),
\]
where $\mathrm{Ga}(a, b; \mu)$ denotes the gamma distribution with shape $a$, rate $b$, and shift $\mu$ (which defaults to zero for the standard gamma distribution), that is, $X \sim \mathrm{Ga}(a, b; \mu)$ iff $X - \mu \sim \mathrm{Ga}(a, b)$.

We considered several bijective transformations $g$ from $\mathbb R_+$ to $[0,1)$, but most of them led to unsatisfactory results, with density estimates failing to capture the right tail of the true data-generating distribution. As in Section \ref{app:simu_r}, this is due to the fact that any bijection from $\mathbb R_+$ to $[0,1)$ necessarily maps large values close to $1$, and the PT and GFPT1 models on the standard NBP $I$ are not well suited to modelling such compressed tails.

In this setting, we found a two-stage approach to work particularly well. First, data are mapped from $\mathbb R_+$ to $\mathbb R$ using the logarithmic transformation. The transformed data are then centred so that their sample mean is zero, and the logistic transformation with proper scaling is applied. As before, once the models are fitted on $[0,1)$, posterior samples of the random PDF are mapped back to $\mathbb R_+$ via the change-of-variable formula.

Figure \ref{fig:simu_r_plus} displays the density estimates obtained with the GFPT1 and standard PT models; the same qualitative comments as in Appendix \ref{app:simu_r} apply.

\end{document}